
\documentclass[11pt,psfig,a4]{article}
\makeatletter
\def\@seccntformat#1{\@ifundefined{#1@cntformat}%
   {\csname the#1\endcsname\quad}  
   {\csname #1@cntformat\endcsname}
}
\let\oldappendix\appendix 
\renewcommand\appendix{%
    \oldappendix
    \newcommand{\section@cntformat}{\appendixname~\thesection:\,\,}
}
\makeatother

\usepackage{url}

\usepackage[T1]{fontenc}
\usepackage{amsfonts}
\usepackage{geometry}
\usepackage{graphics}
\usepackage{lscape}
\usepackage{subfigure}
\usepackage{graphicx}
\usepackage{setspace}
\usepackage{amsthm}
\usepackage{amssymb}
\usepackage{amsmath}
\usepackage{color,soul}
\usepackage{multirow}
\usepackage{hhline}
\usepackage[table]{xcolor}
\usepackage{booktabs}
\usepackage{rotating}
\usepackage[skip=30pt]{caption}

\usepackage{epsfig}
\usepackage{dsfont}
\usepackage{imakeidx}
\usepackage{enumitem}
\usepackage{dsfont}
\usepackage{stmaryrd}

\usepackage[ruled,lined,boxed]{algorithm2e}
\usepackage{epstopdf}
\usepackage{graphics}

\usepackage{comment}

\usepackage[round]{natbib}
\PassOptionsToPackage{
        natbib=true,
        style=authoryear-comp,
        hyperref=true,
        backend=biber,
        maxbibnames=99,
        firstinits=true,
        uniquename=init,
        maxcitenames=2,
        parentracker=true,
        url=false,
        doi=false,
        isbn=false,
        eprint=false,
        backref=true,
            }   {biblatex}

\geometry{left=22mm,right=22mm,top=23mm,bottom=23mm}
\newcommand{\Lyx}{L\kern-.1667em\lower.25em\hbox{y}\kern-.125emX\spacefactor1000}
\linespread{1.5}

\definecolor{armygreen}{rgb}{0.19, 0.53, 0.43}
\definecolor{atomictangerine}{rgb}{1.0, 0.6, 0.4}

\usepackage{ulem}

\newtheorem{remark}{Remark}
\newtheorem{theorem}{Theorem}
\newtheorem{definition}[theorem]{Definition}
\newtheorem{lemma}[theorem]{Lemma}
\newtheorem{prop}{Proposition}

\DeclareMathOperator*{\argmin}{arg\,min}

\begin{document}

\title{\textbf{
New approximate stochastic dominance approaches for Enhanced Indexation models
}$^{\star}$}
\author{Francesco Cesarone$^1$, Justo Puerto$^2$ \\
{\small $^1$\normalem\emph{ Roma Tre University - Department of Business Economics}}\\
{\footnotesize francesco.cesarone@uniroma3.it}\\
{\small $^2$\normalem\emph{Universidad de Sevilla - Instituto Universitario de Investigaci\'on  Matem\'{a}tica (IMUS)}}\\
{\footnotesize puerto@us.es}\\
}

\date{\today}
\maketitle
{$^{\star}$ This paper is dedicated to the memory of Mois\'{e}s Rodr\'{\i}guez-Madrena}

\begin{abstract}

In this paper, we discuss portfolio selection strategies for Enhanced Indexation (EI), which are based on stochastic dominance relations. The goal is to select portfolios that stochastically dominate a given benchmark but that, at the same time, must generate some excess return with respect to a benchmark index. To achieve this goal, we propose a new methodology that selects portfolios using the ordered weighted average (OWA) operator, which generalizes previous approaches based on minimax selection rules and still leads to solving linear programming models.
We also introduce a new type of approximate stochastic dominance rule and show that it implies the almost Second-order Stochastic Dominance (SSD) criterion proposed by \cite{lizyayev2012tractable}.
We prove that our EI model based on OWA selects portfolios that dominate a given benchmark through this new form of stochastic dominance criterion.
We test the performance of the obtained portfolios in an extensive empirical analysis based on real-world datasets.
The computational results show that our proposed approach outperforms several SSD-based strategies widely used in the literature, as well as the global minimum variance portfolio.
\medskip

\noindent
\textbf{Keywords}: Portfolio selection; Stochastic dominance; Portfolio optimization; Expected shortfall; Ordered weighted average operator.

\end{abstract}

\section{Introduction}\label{sec:Intro}

It is well-known among academics and practitioners that active portfolio management is unlikely to generate extra returns over their benchmarks \citep{bufalo2022straightening}.
For this reason, passive asset management has attracted great interest in recent decades, offering returns close to benchmarks at a lower cost.
Index tracking (IT) is a popular problem for dealing with the construction of passive management portfolios and has received large attention in the literature.
IT basically consists of a constrained optimization problem, in which one minimizes the distance between a benchmark and the tracking portfolio by typically using a fixed number of assets less than those available in the investment universe.
Comprehensive surveys on this problem can be found in
\cite{Bea03}, \cite{canakgoz2009mixed}, \cite{guastaroba2012kernel}, \cite{bruni2012new,bruni2013no,bruni2015linear,bruni2017exact},
\cite{scozzari2013exact},
\cite{cesarone2019risk},
and, more recently, in \cite{shu2020high}, \cite{sant2022risk}.

A step towards a more sophisticated strategy, which has attracted increasing attention over the past decade, is the Enhanced Indexation (EI) problem, where the aim is not only to track a market index but simultaneously to generate an excess return over the benchmark.
An extensive review of the literature on EI is provided in \cite{canakgoz2009mixed}, \cite{guastaroba2016linear}, and more recently in \cite{guastaroba2020enhanced},  \cite{beraldi2020enhanced}.

This paper focuses on portfolio selection strategies for EI based on Stochastic Dominance (SD) relations. Indeed, one relatively recent and promising approach for Enhanced Indexation consists of selecting portfolios that stochastically dominate a given benchmark.
SD provides a (partial) order in the space of random returns, has connections with Expected Utility Theory, and takes into account the full information regarding the distribution of the portfolio random return \citep[see][]{whitmore1978stochastic,levy1992stochastic,shaked1994stochastic,muller2002comparison,levy2015stochastic}.
\cite{roman2013enhanced} develop a Second-order Stochastic Dominance (SSD) approach, which consists of a multi-objective optimization problem to construct portfolios whose returns dominate those of a benchmark \citep[see also][]{roman2006portfolio,fabian2011processing}.
Among the infinite Pareto-optimal solutions, 
the authors scalarize their multi-objective model as a Minimax problem, which is linear and is solved efficiently by cutting plane techniques.
Several SSD-based approaches are tested in \cite{hodder2015improved}, where the authors implement the exact SSD methods developed by \cite{kuosmanen2004efficient} and \cite{kopa2015general}.

\noindent
Alternative portfolio selection approaches for EI, based on relaxing SD rules, i.e., using some forms of approximate SD, can be found in \cite{bruni2012new},
\cite{bruni2015linear},
\cite{bruni2017exact},
and \cite{sharma2017enhanced}.
In fact, as discussed by \cite{leshno2002preferred},
in some economic situations, the relaxation of SD conditions could provide advantages w.r.t. the exact ones.
 \cite{lizyayev2012tractable} propose different relaxations of SD for both First and Second-order rules, named $\varepsilon$-Almost-FSD ($\varepsilon$-AFSD) and $\varepsilon$-Almost-SSD ($\varepsilon$-ASSD), respectively.
They also describe the optimization models for
potential applications of $\varepsilon$-AFSD and $\varepsilon$-ASSD to portfolio selection, but without providing empirical tests on real datasets.
Furthermore, as discussed in the conclusions of \cite{post2017portfolio}, ``the current literature offers little guidance for the specification of ``epsilon'', or the admissible violation area.''
We point out that our approach does not require to explicitly specify ``epsilon'' since it aims to obtain solutions that satisfy the SD conditions with minimum violations.

\noindent
Further developments and generalizations on Almost Stochastic Dominance (ASD) conditions can be found in \cite{levy2010economically}, \cite{tzeng2013revisiting}, \cite{post2013general}, \cite{guo2014moment}, \cite{denuit2014almost}, and \cite{tsetlin2015generalized}.
However, there appear to be no available applications of ASD to portfolio selection.

According to our literature review, the reader may realize that there is room for improvement in the portfolio selection methods based on SD rules.
Our goal is to elaborate in this direction by proposing a new methodology and to shed some light on this field, showing some of their connections.
Our contributions can be summarized as follows.
First, we address the multi-objective representation of the EI problem proposed by \cite{roman2013enhanced} by an alternative approach based on the
Ordered Weighted Average (OWA) operator
\citep{yager1988ordered}.
Its advantage is that it allows us to consider a set of the worst realized objectives in the problem, whereas the Minimax operator, used in \cite{roman2013enhanced}, pays attention only to the maximum of them.
Furthermore, the convenience of this type of scalarization is that it allows highlighting a broader family of Pareto-optimal portfolios, improving the properties of the Minimax ones.
The application of the OWA operator for scalarizing multi-objective optimization problems has been widely studied in the literature \citep[see, e.g.,][]{fernandez2014ordered,fernandez2017ordered}.
Indeed, OWA is a flexible operator that generalizes several criteria, including the Minimax one \citep[see, e.g.,][]{ogryczak2000multiple,nickel2006location}.
As a consequence, we can provide a more general method for highlighting a novel subset of Pareto-optimal solutions of the multi-objective problem in \cite{roman2013enhanced}.
Second, we propose a new methodology to select a portfolio that dominates a given benchmark index in terms of SD.
This methodology uses the OWA representation developed in \cite{marin2020fresh} and leads to solving linear programming problems.
Furthermore, it is more robust and contains as special cases some known SD-based approaches to asset allocation \citep[see][]{lizyayev2012tractable,roman2013enhanced}.
%
%
Third, with the purpose of providing a common framework to understand and explain approximate dominance rules,
we introduce a new rule that we call Cumulative Second-order $\mathcal{E}$-Stochastic Dominance (CS$\mathcal{E}$SD), and
we show that it implies the family of $\varepsilon$-ASSD rules proposed by \cite{lizyayev2012tractable}.
Exploiting some results in multi-criteria optimization \citep{ehrgott2005multicriteria},
we prove that the optimal solution of our EI model based on OWA selects portfolios that dominate the benchmark index in terms of CS$\mathcal{E}$SD.
Fourth, to test and validate the performance of the portfolios obtained by our proposed strategies, we provide extensive empirical analysis based on the FTSE100, NASDAQ100, SP500, and Fama \& French 49 Industry portfolios datasets, comparing their out-of-sample behavior with that of the portfolios constructed by
several SD-based approaches proposed in the literature
and with that of the long-only, long-short, 1-norm, and 2-norm minimum variance portfolios.

The remainder of the paper is organized as follows.
In Section \ref{PreConc}, we recall the most commonly used exact stochastic dominance rules and present new approximate stochastic dominance criteria.
In Section \ref{sec:EI}, we introduce the multi-objective optimization model of \cite{roman2013enhanced}, and then we provide a general method for finding a class of Pareto-optimal solutions, which SSD-dominate or almost SSD-dominate a given benchmark index.
In Section \ref{sec:TheoreticalAnalysis}, we discuss some theoretical features of our proposed approach, providing some statistical and economic interpretations and some illustrative examples.
Section \ref{sec:EmpiricalAnalysis} presents the results of an extensive empirical analysis based on real-world data, comparing the out-of-sample performance of our proposed approach with other concurrent strategies.
Conclusions and further research projects are presented in Section \ref{sec:Conclusions}.

\section{Preliminary concepts on some exact and approximate SD criteria \label{PreConc}}

One of the reasons that has made the use of stochastic dominance criteria for portfolio selection particularly attractive is that these criteria are able to provide a partial order in the space of random variables by avoiding the specification of a particular utility function, which represents the preferences of an investor.
Indeed, from Expected Utility Theory \citep[EUT, see, e.g.,][]{morgenstern1953theory}, given a utility function, a decision-maker prefers a random variable to another if it provides a larger value in terms of its expected utility, i.e., it determines a greater satisfaction.
However, the specification of a utility function is a strictly subjective issue, although it is a key element of EUT.

For the sake of readability, we recall here the most commonly used exact stochastic dominance criteria.
Let $A$ and $B$ be two random variables with cumulative distribution functions
$F_{A}(\alpha) = \mathbf{Pr}(A \le \alpha)$ and $F_{B}(\alpha) = \mathbf{Pr}(B \le \alpha)$
for $\alpha \in \mathbb{R}$.
\begin{definition}[Zero-order Stochastic Dominance - ZSD] \label{def:ZSD}
$A$ is preferred to $B$ w.r.t. ZSD iff
\begin{equation}
F_{A-B} (0) = \mathbf{Pr}(A - B \le 0)= 0 
\end{equation}
\end{definition}
\noindent
The ZSD relation represents behavior of a decision maker who prefers a random variable over another only when the first gives better outcomes than the second in (almost) all states of the world.
\begin{definition}[First-order Stochastic Dominance - FSD] \label{FSDdef}

$A$ is preferred to $B$ w.r.t. FSD iff
\begin{equation}
F_A(\alpha) \le F_B(\alpha) \ \ \ \ \forall \alpha \in \mathbb{R}.
\end{equation}
\end{definition}
\begin{definition}[Second-order Stochastic Dominance - SSD] \label{SSDdef}

$A$ is preferred to $B$ w.r.t. SSD iff
\begin{equation}
\int_{-\infty}^{\alpha} F_A(\tau)d\tau \ \le \int_{-\infty}^{\alpha}
F_B(\tau)d\tau \ \ \ \ \forall \alpha \in \mathbb{R}.
\end{equation}
\end{definition}
\noindent
These two last definitions require that each inequality is strict in at least one case.

\noindent
As discussed, e.g., in \cite{levy1992stochastic}, these relations can be linked to EUT in terms of different classes of utility functions.
Indeed, $A$ is preferred to $B$ w.r.t. FSD if and only if $E[u(A)] \geq E[u(B)]$ for all non-decreasing utility functions $u$;
$A$ is preferred to $B$ w.r.t. SSD if and only if the same holds for all non-decreasing and concave utility functions.
The SSD relation is less demanding than that of ZSD and of FSD.
More precisely, given an SD criterion of order $v$, when increasing the order,
the $v+1$-SD condition becomes less restrictive.
Furthermore, $v$-SD implies $(v+1)$-SD,
while the opposite is not necessarily true
\citep[see, e.g.,][]{levy2015stochastic}.
In this context, \cite{post2017portfolio} remark that Third-order Stochastic Dominance (TSD), although very appealing from a theoretical perspective, is highly demanding to test. 
Therefore, these authors propose a method of selecting investment portfolios that dominate a given benchmark in terms of TSD, which is less demanding.
More precisely, their strategy is based on a refinement of the ``SuperConvex'' TSD (SCTSD) approach of \cite{bawa1985determination} that leads to a portfolio selection model formulated as a convex quadratic constrained programming. 
As stated by the authors, SCTSD can be seen as an approximation of TSD although in any case results in a difficult quadratic program. 
For the reasons mentioned above, we have restricted ourselves from considering TSD in our experiments. We want to remain within the framework of LP models, avoiding the more complex quadratic or even superconvex mathematical programs.
We also observe that fractional-order SD rules extending the classical integer-order SD have recently been proposed in the literature \citep[see][]{muller2017between,huang2020fractional}.
However, there currently are no portfolio selection applications in this framework.

\noindent
In this paper, we essentially focus on the SSD condition and its relaxations, characterizing our approach as an LP.
An example of an approximate (relaxed) version of the SSD criterion is given in
\cite{lizyayev2012tractable}.

\begin{definition}[Almost Second-order Stochastic Dominance - ASSD] \label{ASSDdef}
Given a tolerance $\vartheta>0,\,  A$ is preferred to $B$ w.r.t. ASSD if
$$
\int_{-\infty}^{\alpha}\left(F_{A}(\tau)-F_{B}(\tau)\right) d \tau \leq \vartheta \quad \forall \alpha \in\left[\alpha^{\prime}, \alpha^{\prime \prime}\right],
$$
where $\left[\alpha^{\prime}, \alpha^{\prime \prime}\right]$ is the combined range of outcomes of $A$ and $B$.
\end{definition}
\noindent
Another example of relaxation of SD conditions can be found in \cite{bruni2017exact}, where the authors introduce
the Zero-order $\varepsilon$-Stochastic Dominance ($\mathrm{Z} \varepsilon \mathrm{SD}$) relation,
which is a relaxation (approximation) of the standard ZSD (see Definition \ref{def:ZSD}).
Although in their work only the ZSD relaxation is addressed,
this idea could also be extended to higher-order stochastic dominance rules.
\begin{remark}[$\mathrm{Z} \varepsilon \mathrm{SD}$ vs. ASSD]\label{rem:ASSD_1}
  We observe that the extension of the $\mathrm{Z} \varepsilon \mathrm{SD}$ approach to SSD, i.e., the Second-order $\varepsilon$-Stochastic Dominance ($\mathrm{S} \varepsilon \mathrm{SD}$), is equivalent to ASSD, by simply considering $\varepsilon = \vartheta$.
\end{remark}

\noindent
Based on Definitions \ref{SSDdef} and \ref{ASSDdef}, the SSD and ASSD relations can be expressed by the following equivalent criteria.
\begin{itemize}
  \item $E[u(A)] - E[u(B)] \geq -\tau$ holds for all non-decreasing and concave utility functions $u$;
  \item $E[[\eta - A]_{+}] - E[[\eta - B]_{+}] \leq \tau$ holds for all $\eta \in \mathbb{R}$;
  \item $Tail_{\beta}(A) - Tail_{\beta}(B) \geq -\tau$ holds for all $\beta \in (0, 1]$,
  where $Tail_{\beta}(A)$ is the unconditional expectation of the worst $\beta$100\% outcomes of $A$.
\end{itemize}
Clearly, these conditions are equivalent to SSD for $\tau = 0$ and to ASSD for $\tau>0$.

\medskip

\noindent
In the remaining part of the paper, we assume that the asset returns are discrete random variables defined on a discrete state space,
and that there are $T$ states of nature, each with probability $\pi_t$ with $t=1,...,T$.
We use a look-back approach, where the realizations of the discrete random returns correspond to the historical scenarios, and the investment decision is performed on an in-sample window of $T$ historical outcomes, each with probability $\pi_t=1/T$, as is typically done in portfolio optimization \citep[see, e.g.,][and references therein]{carleo2017approximating,cesarone2020computational,bellini2021risk}.
\begin{remark}[Equally likely states and Shannon entropy]\label{rem:Shannon}
We observe that given a discrete random variable with probability distribution $\pi_t$ (where $t=1,...,T$) the associated Shannon entropy value $S(\pi)=-\sum_{t=1}^{T} \pi_t \log_2 (\pi_t)$ is a measure for the degree of uncertainty or ignorance about the probability distribution.
Its value is zero when the probability of an event is 1 (i.e., $\pi_t=1$ for a $t=1,...,T$) or a certainty, while it is maximum when all the states of nature are equally likely ($\pi_t=1/T$), namely the case of maximum uncertainty \citep[see, e.g.,][]{myung1996maximum}.
As for instance in \cite{post2017portfolio}, our experiments consider the case of maximum ignorance about the probability distribution $\pi_t$.
\end{remark}

\noindent
We denote by $R^{I}$ the return of a given benchmark index and by $R$ the return of a portfolio of assets.
Furthermore, we indicate by $(R_{(j)})_{1\le j\le T}$ the ordered realizations such that $R_{(1)} \leq \ldots \leq R_{(T)}$, and, reversing the order, by $(R^{(j)})_{1\le j\le T}$ the ordered realizations such that $R^{(1)} \geq \ldots \geq R^{(T)}$. 
Let us assume w.l.o.g. that the scenarios of the portfolio return $R_t$ and the index $R^{I}_t$ occur with probabilities $\pi_t$ for $t=1,...,T$.

\noindent
In a discrete world, Definitions \ref{FSDdef} and \ref{SSDdef} can be then expressed as follows
\begin{itemize}
\item $R$ FSD-dominates $R^{I}$ iff {$\displaystyle \sum_{R_t\le \alpha} \pi_t \le \sum_{R^I_{t}\le \alpha} \pi_{t}, \; \forall \alpha \in \mathbb{R}.$}
\item $R$ SSD-dominates $R^{I}$ iff 
$  \eta \sum_{R_t\le \eta} \pi_t - \sum_{R_t\le \eta} R_t \pi_t\le \eta \sum_{R_t^I\le \eta} \pi_t - \sum_{R_t^I\le \eta} R_t^I \pi_t,\  \forall \eta\in \mathbb{R}.
$
\end{itemize}

\noindent
Note that for a given threshold $\beta=\sum_{R_t\le \eta}\pi_t $, we have that  $Tail_{\beta}(R)=\displaystyle \sum_{R_t\le \eta} \pi_t R_{t}$. 
Then, since

\begin{equation}\label{eq:CVaRexpre_2}
CVaR_{\beta}(R)=-  \frac{1}{\beta} \sum_{R_t\le \eta} \pi_t R_{t},
\end{equation}
we have that
$Tail_{\beta}(R)=-  \displaystyle  \beta CVaR_{\beta}(R)$.
Thus, requiring $Tail_{\beta}(R) \geq Tail_{\beta}(R_{I})$ is equivalent to
consider $CVaR_{\beta}(R) \leq CVaR_{\beta}(R_{I})$ for all $\beta\in (0,1]$.

\noindent
In the particularly interesting case, that is usual in portfolio optimization (see also Remark \ref{rem:Shannon}), where $\pi_t=1/T$ for all $t$, the above criteria simplify resulting in:

\begin{itemize}
  \item $R$ FSD-dominates $R_{I}$ iff $R_{(j)}  \geq R^{I}_{(j)}$ for all $j=1, \ldots, T$.
  \item $R$ SSD-dominates $R_{I}$ iff $\sum_{t=1}^{j} R_{(t)} \geq \sum_{t=1}^{j} R^{I}_{(t)}$ for all $j=1, \ldots, T$.
\end{itemize}
Also in this case, $Tail_{\frac{j}{T}}(R)=\displaystyle \frac{1}{T} \sum_{t=1}^{j} R_{(t)}$.
Next, since
\begin{equation}\label{eq:CVaRexpre}
CVaR_{\frac{j}{T}}(R)=- \frac{1}{j} \sum_{t=1}^{j}  R_{(t)},	
\end{equation}
we have that
$Tail_{\frac{j}{T}}(R)=- \displaystyle \frac{j}{T} CVaR_{\frac{j}{T}}(R)$.
Thus, requiring $Tail_{\frac{j}{T}}(R) \geq Tail_{\frac{j}{T}}(R_{I})$ is equivalent to
consider $CVaR_{\frac{j}{T}}(R) \leq CVaR_{\frac{j}{T}}(R_{I})$ for all $j=1, \ldots, T$.

When discrete random variables are considered,
we can introduce a \normalem \emph{cumulative} version of $\mathrm{S} \varepsilon \mathrm{SD}$, named here $\mathrm{CS}\varepsilon \mathrm{SD}$,
similar to the $\mathrm{CZ} \varepsilon \mathrm{SD}$ relation developed in \cite{bruni2017exact}.
This last approximate stochastic dominance relation was thought as a criterion for bounding the cumulative losses for Enhanced Indexation purposes
and, among other properties, \cite{bruni2017exact} show that $\mathrm{CZ} \varepsilon \mathrm{SD}$ implies Z$\varepsilon$SD.
Therefore, similarly to how $\mathrm{Z} \varepsilon \mathrm{SD}$ can be extended to $\mathrm{S}\varepsilon \mathrm{SD}$, $\mathrm{CZ}\varepsilon\mathrm{SD}$ can also be generalized to SSD.
This generalization leads to the following definition and the fact that this new criterion implies $\mathrm{S} \varepsilon \mathrm{SD}$, or, equivalently, ASSD (see Section \ref{sec:TheoreticalAnalysis}, Proposition \ref{prop:Thoretical}, point 1).
\begin{definition}[Cumulative Second-order $\varepsilon$-Stochastic Dominance - $C S \varepsilon S D$] \label{def:CSeSD_1}
Given a tolerance $\varepsilon>0$, $A$ is preferred to $B$ w.r.t. $C S \varepsilon S D$ if
\begin{equation}\label{def:CSeSD}
  \sum_{t \in \mathcal{S}}CVaR_{\frac{t}{T}}(A) - \sum_{t \in \mathcal{S}} CVaR_{\frac{t}{T}}(B)\leq \varepsilon \quad \forall \mathcal{S} \subseteq \mathcal{T},
\end{equation}
where $\mathcal{T}=\{ 1, \ldots, T \}$.
\end{definition}
\noindent
Definition \ref{def:CSeSD_1} requires to check an exponential number of conditions to verify $C S \varepsilon S D$.
However, the next result shows that the condition can be verified in polynomial time.

\noindent
For this purpose, we need to introduce some additional notation. 
Let us denote by $\sigma$ the permutation that provides the sorting of $\left( CVaR_{\frac{t}{T}}(A) - CVaR_{\frac{t}{T}}(B)\right)_{1\le t\le T}$ as $CVaR_{\frac{\sigma(1)}{T}}(A) - CVaR_{\frac{\sigma(1)}{T}}(B)\ge \ldots \ge CVaR_{\frac{\sigma(T)}{T}}(A) - CVaR_{\frac{\sigma(T)}{T}}(B)$. 
Next, for the ease of presentation, and whenever the framework is clear, we denote
$$
\sum_{j=1}^t \left(CVaR_{\frac{\cdot}{T}}(A) - CVaR_{\frac{\cdot}{T}}(B)\right)^{(j)}:=\sum_{j=1}^t CVaR_{\frac{\sigma(j)}{T}}(A) - CVaR_{\frac{\sigma(j)}{T}}(B)
$$
\begin{prop}
\label{prop:CSeSD_form2}
Given a tolerance $\varepsilon>0, A$ is preferred to $B$ w.r.t. $C S \varepsilon S D$ iff
$$
\sum_{j=1}^t \left(CVaR_{\frac{\cdot}{T}}(A) - CVaR_{\frac{\cdot}{T}}(B)\right)^{(j)} \leq \varepsilon \quad \forall t \in  \mathcal{T}.
$$
\end{prop}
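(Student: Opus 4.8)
The plan is to reduce Definition~\ref{def:CSeSD_1}, which quantifies over the exponentially many subsets $\mathcal{S}\subseteq\mathcal{T}$, to a single maximization that is solved greedily by sorting. Writing $d_t := CVaR_{\frac{t}{T}}(A)-CVaR_{\frac{t}{T}}(B)$ for $t\in\mathcal{T}$, the $CS\varepsilon SD$ condition \eqref{def:CSeSD} is exactly the statement $\sum_{s\in\mathcal{S}} d_s\le\varepsilon$ for every $\mathcal{S}\subseteq\mathcal{T}$, which holds if and only if $\max_{\mathcal{S}\subseteq\mathcal{T}}\sum_{s\in\mathcal{S}} d_s\le\varepsilon$. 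So the whole argument comes down to computing this maximum and identifying it with the left-hand side of the claimed polynomial-size condition. Throughout, $d^{(j)}$ denotes the $j$-th largest of the $d_t$, i.e. $d^{(j)}=CVaR_{\frac{\sigma(j)}{T}}(A)-CVaR_{\frac{\sigma(j)}{T}}(B)$ in the notation introduced before the statement.

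For the direction ``$\Rightarrow$'', I would simply specialize the universally quantified subset condition. Fix $t\in\mathcal{T}$ and take $\mathcal{S}=\{\sigma(1),\dots,\sigma(t)\}$, the indices of the $t$ largest differences. Then $\sum_{s\in\mathcal{S}} d_s=\sum_{j=1}^t d^{(j)}$, and since this particular $\mathcal{S}$ satisfies the $CS\varepsilon SD$ inequality, the partial-sum bound $\sum_{j=1}^t d^{(j)}\le\varepsilon$ follows. As $t$ was arbitrary, this yields the polynomial-size condition for every $t\in\mathcal{T}$.

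For the direction ``$\Leftarrow$'', the key elementary fact is that among all subsets of a fixed cardinality $t$, the sum $\sum_{s\in\mathcal{S}} d_s$ is largest when $\mathcal{S}$ consists of the $t$ largest elements; hence for any $\mathcal{S}$ with $|\mathcal{S}|=t$ one has $\sum_{s\in\mathcal{S}} d_s\le\sum_{j=1}^t d^{(j)}\le\varepsilon$ by hypothesis. Taking the maximum over $t\in\{1,\dots,T\}$ (the empty set contributes $0\le\varepsilon$, which is automatic since $\varepsilon>0$) covers every subset $\mathcal{S}\subseteq\mathcal{T}$, so $CS\varepsilon SD$ holds.

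I expect no serious obstacle here: the only points that require a word of care are (i) that restricting the partial-sum index to $t\in\mathcal{T}=\{1,\dots,T\}$ rather than including $t=0$ loses nothing, because the empty-set case of Definition~\ref{def:CSeSD_1} is $0\le\varepsilon$ and holds trivially for $\varepsilon>0$; and (ii) justifying the ``sum of the $t$ largest'' rearrangement claim, which is a one-line greedy/exchange argument, namely that replacing any element of $\mathcal{S}$ by a larger unused one never decreases the sum. Equivalently, one can phrase the whole proof as the single observation that $\max_{\mathcal{S}\subseteq\mathcal{T}}\sum_{s\in\mathcal{S}} d_s$ equals the sum of the positive $d_t$, which in turn coincides with $\max_{t\in\mathcal{T}}\sum_{j=1}^t d^{(j)}$, since the sorted partial sums increase exactly while the terms remain positive.
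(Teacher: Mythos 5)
Your proposal is correct and follows essentially the same route as the paper's proof: both reduce the exponential family of subset constraints to, for each cardinality $t$, the maximum subset sum of that cardinality, and then identify this maximum with the sorted partial sum $\sum_{j=1}^t \left(CVaR_{\frac{\cdot}{T}}(A) - CVaR_{\frac{\cdot}{T}}(B)\right)^{(j)}$. Your write-up is merely more explicit about the two directions, the exchange argument for the ``$t$ largest'' claim, and the trivial empty-set case, all of which the paper leaves implicit.
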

\begin{proof}
We observe that Conditions \eqref{def:CSeSD} are equivalent to require
$$
\max_{\mathcal{S} \subseteq \mathcal{T} : |\mathcal{S}|=t} \left\lbrace\sum_{t \in \mathcal{S}}CVaR_{\frac{t}{T}}(A) - \sum_{t \in \mathcal{S}} CVaR_{\frac{t}{T}}(B)\right\rbrace \leq \varepsilon  \quad \forall t\in  \mathcal{T}.
$$
Furthermore, the maximum in the left-hand-side is attained  on the $t$ largest values, so that
$$
\max_{\mathcal{S} \subseteq \mathcal{T} : |\mathcal{S}|=t} \left\lbrace\sum_{t \in \mathcal{S}}CVaR_{\frac{t}{T}}(A) - \sum_{t \in \mathcal{S}} CVaR_{\frac{t}{T}}(B)\right\rbrace = \sum_{j=1}^t \left(CVaR_{\frac{\cdot}{T}}(A) - CVaR_{\frac{\cdot}{T}}(B)\right)^{(j)}.
$$
This proves the result.
\end{proof}

\noindent
We point out that in Proposition \ref{prop:CSeSD_form2} we have a single threshold $\varepsilon$, $\forall t$.
However, it can be generalized so that we can impose different tolerances $\varepsilon_t$ for each $t \in  \mathcal{T}$, as described in the following definition.
\begin{definition}[Cumulative Second-order $\mathcal{E}$-Stochastic Dominance - $C S \mathcal{E} S D$]\label{def:CSESD}
Given a tolerance vector $\mathcal{E} =(\varepsilon_1,\ldots,\varepsilon_T)\in \mathbb{R}^T_{>0}$, $A$ is preferred to $B$ w.r.t. $C S \mathcal{E} S D$ if
$$
\sum_{j=1}^t \left(CVaR_{\frac{\cdot}{T}}(A) - CVaR_{\frac{\cdot}{T}}(B)\right)^{(j)} \leq \varepsilon_t \quad \forall t \in  \mathcal{T}.
$$
\end{definition}

In the next section, we introduce the Enhanced Indexation model of \cite{roman2013enhanced}, which properly generalized is closely connected to $C S \mathcal{E} S D$, as shown in Section \ref{OWA-CESD}.

\section{Enhanced Indexation based on stochastic dominance \label{sec:EI}}

Here, we first report the multi-objective approach of \cite{roman2013enhanced} for finding portfolios that SSD-dominate or almost SSD-dominate a given benchmark index.
Then, we provide a general method for finding a novel subset of Pareto-optimal solutions of such a multi-objective model.
More precisely, we generalize the existing minimax selection rule for the multi-objective approach of \cite{roman2013enhanced}, which, as shown in Section \ref{sec:Minimax}, is strictly linked to the ASSD conditions of \cite{lizyayev2012tractable} (see Definition \ref{ASSDdef}, Remarks \ref{rem:ASSD_1} and \ref{rem:MinimaxASSD}).
Thus still keeping the model as an LP, in Section \ref{sec:OWAsca} we propose more flexible choices of Pareto-optimal solutions of such a multi-objective model, which, as discussed in Section \ref{sec:TheoreticalAnalysis}, are related to the $C S \mathcal{E} S D$ rules and as we shall see, imply those of \cite{lizyayev2012tractable}.

We consider a setting where $n$ assets are available in the investment universe with rates of return described by
the random variables $R_1, \ldots, R_n$.
A portfolio is identified with a vector $x = (x_1 ,\ldots , x_n) \in \mathbb{R}_{+}^{n}$, where $x_k$
denotes the fraction of capital invested in the $k^{th}$ asset.
Hence, the (linear) portfolio return is given by
$R(x)=\sum_{k=1}^{n} x_k R_k$.

\noindent
As mentioned above, \cite{roman2013enhanced} propose the following multi-objective model whose Pareto-optimal solutions SSD-dominate or almost SSD-dominate a given benchmark index \citep[see also][]{roman2006portfolio,fabian2011enhanced,fabian2011processing,valle2017novel,cesarone2022comparing}
\begin{equation}
	\begin{array}{rl}
	\displaystyle	\min_{x} & \left[ (CVaR_{\frac{1}{T}}(R(x))-CVaR_{\frac{1}{T}}(R_{I})),\ldots,
 (CVaR_{\frac{T}{T}}(R(x))-CVaR_{\frac{T}{T}}(R_{I}))  \right]
		\\
		\mbox{s.t.} & x \in C
	\end{array}
	\label{eq:SSDCVaR}
\end{equation}
where $C$ is a polyhedron.
Alternatively, they find the SSD efficient portfolios by solving a slightly different
multi-objective optimization problem based on centered Tails
\begin{equation}
	\begin{array}{rl}
	\displaystyle	\max_{x} & \left[ (Tail_{\frac{1}{T}}(R(x))-Tail_{\frac{1}{T}}(R_{I})),\ldots,
 (Tail_{\frac{T}{T}}(R(x))-Tail_{\frac{T}{T}}(R_{I}))  \right]
		\\
		\mbox{s.t.} & x \in C
	\end{array}
	\label{eq:SSDTail}
\end{equation}
In the following, we will focus on Problem \eqref{eq:SSDCVaR} and then briefly discuss how the results obtained from this problem are also valid for Problem \eqref{eq:SSDTail}.

\subsection{Minimax scalarization\label{sec:Minimax}}

Among the infinite Pareto-optimal points belonging to the efficient $T$-dimensional hypersurface generated by Problem \eqref{eq:SSDCVaR}, \cite{roman2013enhanced} consider the one obtained by solving the following Minimax problem
\begin{equation}
	\begin{array}{rl}
		\displaystyle	\min_{x} & \displaystyle\max_{1 \leq j \leq T} (CVaR_{\frac{j}{T}}(R(x))-CVaR_{\frac{j}{T}}(R_{I}))
		\\
		\mbox{s.t.} & x \in C
	\end{array}
	\label{eq:SSDCVaRMinMax}
\end{equation}
This model can be expressed as an LP problem,
using the CVaR reformulation of \cite{RockUrya00,RockUrya02}.
However, due to the high
number of variables and constraints (more than $T^{2}$),
Problem \eqref{eq:SSDCVaRMinMax} is solved by implementing cutting
plane techniques, as proposed by \cite{fabian2011processing}.

\noindent
We note that the above problem is equivalent to
\begin{equation}
	\begin{array}{rl}
		\displaystyle	\min_{x} & \displaystyle\left(CVaR_{\frac{\cdot}{T}}(R(x))-CVaR_{\frac{\cdot}{T}}(R_{I})\right)^{(1)}
		\\
		\mbox{s.t.} & x \in C
	\end{array}
	\label{eq:SSDCVaRMinMax_b}
\end{equation}
\begin{remark}[Minimax scalarization vs. ASSD conditions]\label{rem:MinimaxASSD}
  From Problem \eqref{eq:SSDCVaRMinMax_b}, one can directly observe the link between Model \eqref{eq:SSDCVaRMinMax} in the paper by \cite{roman2013enhanced}, $S S D$ (Definition \ref{SSDdef}) and ASSD  (Definition \ref{ASSDdef}) conditions.
Indeed, denoting by $x^{\star}$ the optimal solution of Problem \eqref{eq:SSDCVaRMinMax_b}, if the corresponding optimal value function
$\left(CVaR_{\frac{\cdot}{T}}(R(x^{\star}))-CVaR_{\frac{\cdot}{T}}(R_{I})\right)^{(1)}$ is nonpositive (positive), then the corresponding optimal portfolio SSD-dominates (ASSD-dominates) the benchmark index.
\end{remark}

\subsection{Ordered weighted average scalarization\label{sec:OWAsca}}

To address the multi-objective optimization problems \eqref{eq:SSDCVaR} and \eqref{eq:SSDTail}, we propose here an approach based on the Ordered Weighted Average (OWA) operator.
As shown below, using the OWA scalarization, we are able to provide a more general methodology to highlight a new subset of Pareto-optimal solutions of such multi-objective problems based on centered CVaRs (see Section \ref{sec:MaxOWACVaRs}) and centered Tails (see Section \ref{sec:MaxOWATails}).

\subsubsection{Maximizing OWA of the centered CVaRs} \label{sec:MaxOWACVaRs}

An alternative scalarization of Problem \eqref{eq:SSDCVaR} consists of the following simple idea:
rather than minimizing the maximum centered CVaR, namely Model \eqref{eq:SSDCVaRMinMax},
one can investigate the performance of the Pareto-optimal portfolios obtained by minimizing the sum of the $k$ maximal CVaRs for a fixed $k=1,\ldots,T$.
It is well-known that this form of scalarization always results in Pareto-optimal solutions since the ordered median function is an isotone function provided than the lambda weights are non-negative \citep[see Lemma 1.1 in][]{nickel2006location}.
Obviously, when $k=1$, one recovers the Minimax approach of \cite{roman2013enhanced}.
More specifically, we deal with Problem \eqref{eq:SSDCVaR} by scalarizing the $T$ goals in the following way.
To simplify the presentation, we introduce some notation.
Let us denote by
$g_t(x):=CVaR_{t/T}(R(x))-CVaR_{t/T} (R_I)$ with $t \in \mathcal{T}$, and following our notation on Section \ref{PreConc}, by $(g^{(j)}(x))_{1\le t\le T}$ the elements of a vector such that $g^{(1)}(x)\ge g^{(2)}(x) \ge \ldots \ge g^{(T)}(x)$.
We then consider the following problem
\begin{equation}
	\begin{array}{rl}
	\displaystyle	\min_{x} & \displaystyle \sum_{j=1}^T \lambda^j g^{(j)}(x)
		\\
		\mbox{s.t.} & x \in C
	\end{array}
	\label{eq:ProbPropCVaR1_}
\end{equation}
where $(\lambda^j)_{1\le j\le T}$ is a vector such that $\lambda^1\ge \lambda^2 \ge \ldots \ge \lambda^T\ge 0$.
We can observe that our approach is less conservative (i.e., risk-averse) than that of \cite{roman2013enhanced}, since Problem \eqref{eq:ProbPropCVaR1_} takes into account all differences between CVaRs at different confidence levels and not just the worst one.
However, our approach still models conservative behavior, since larger weights are assigned to the worst differences.

\noindent
Thus, our goal is to obtain a tractable representation for the problem that minimizes the sum of an ordered weighted average (OWA) of the centered CVaRs, $( CVaR_{\cdot/T}(R(x))-CVaR_{\cdot/T} (R_I))_{1\le j\le T}^{(j)}$, with weights $\lambda_1\ge \lambda_2 \ge \ldots \ge \lambda_T \ge 0$.
Note that if $\lambda=(1,0,\ldots,0)$ we obtain the Minimax portfolio \citep[as in][]{roman2013enhanced}, whereas if $\lambda=(1,1,\stackrel{(k)}{\ldots},1,0,\ldots,0)$ we get the portfolio minimizing the sum of the $k$-worse centered CVaRs.

\noindent
To achieve a more tractable formulation of Problem \eqref{eq:ProbPropCVaR1_}, we denote by $f_t(x) = - g_t(x)=CVaR_{t/T} (R_I)- CVaR_{t/T}(R(x))$  $\forall t \in \mathcal{T}$, and by $(f_{(j)}(x))_{1\le t\le T}$ the elements of a vector such that $f_{(1)}(x)\le f_{(2)}(x) \le \ldots \le f_{(T)}(x)$.
Hence, $f_{(j)}(x) = - g^{(j)}(x)$ $\forall j \in \mathcal{T}$.
We then obtain the following maximization problem which is equivalent to \eqref{eq:ProbPropCVaR1_}
\begin{equation}
	\begin{array}{rl}
	\displaystyle	\max_{x} & \displaystyle \sum_{j=1}^T \lambda^j f_{(j)}(x)
		\\
		\mbox{s.t.} & x \in C
	\end{array}
	\label{eq:ProbPropCVaR2}
\end{equation}
where, again, $(\lambda^j)_{1\le j\le T}$ is a vector such that $\lambda^1\ge \lambda^2 \ge \ldots \ge \lambda^T\ge 0$.
Furthermore, using the approach described in \cite{nickel2006location} (Theorem 1.3, pag. 13) and \cite{marin2020fresh}, we can write
\begin{subequations}\label{eq:BinAssign}
  \begin{align}
\sum_{j=1}^T \lambda^j f_{(j)}(x) = & \displaystyle \min_{z} \; \sum_{t=1}^T \sum_{j=1}^T \lambda^j f_{t}(x) z_{jt} \label{eq:BinAssign1}  \\
& \mbox{s.t.}  \hspace*{0.1cm} \sum_{j=1}^T z_{jt} =1,\; t=1,\ldots,T, \label{eq:BinAssign2} \\
&  \hspace*{0.6cm} \sum_{t=1}^T z_{jt} =1,\; j=1,\ldots,T, \label{eq:BinAssign3} \\
&  \hspace*{0.6cm} z_{jt} \in \{0,1\} , \label{eq:BinAssign4}
\end{align}
\end{subequations}
Since the above problem is an assignment problem, and the constraint matrix \eqref{eq:BinAssign2}-\eqref{eq:BinAssign3} is totally unimodular, we can
relax the integrality constraints \eqref{eq:BinAssign4} into $z_{jt} \ge 0$, thus obtaining an LP.
Then, Problem \eqref{eq:BinAssign1}-\eqref{eq:BinAssign2}-\eqref{eq:BinAssign3} with $z_{jt} \ge 0$ has an exact dual so that
\begin{align}
\sum_{j=1}^T \lambda^j f_{(j)}(x) = & \displaystyle \max_{u, v} \; \sum_{t=1}^T u_t + \sum_{j=1}^T v_j  \\
& \mbox{s.t.}  \hspace*{0.1cm} u_t + v_j \le \lambda^j f_{t}(x), \; \forall j,t =1,\ldots,T, \nonumber \\
&  \hspace*{0.6cm} u_t, v_j \mbox{ free} \qquad \forall j,t =1,\ldots,T \nonumber
\end{align}
Therefore, if we wish to maximize the OWA of the centered CVaRs, then we can proceed by solving the following problem
\begingroup\makeatletter\def\f@size{11}\check@mathfonts
\begin{equation}\label{eq:OWA_a}
  \begin{array}{llll}
\displaystyle\max_{x} \; & \sum_{j=1}^T \lambda^j f_{(j)}(x)  = & \displaystyle\max_{x, u, v} & \sum_{t=1}^T u_t + \sum_{j=1}^T v_j \\
& & \mbox{s.t.}  & u_t + v_j \le \lambda^j(CVaR_{t/T} (R_I) -CVaR_{t/T}(R(x))), \; \forall j,t \\
& & & u_t, v_j \mbox{ free}, \; x\in C
\end{array}
\end{equation}
\endgroup
where $f_{(j)}(x) = (CVaR_{\cdot/T} (R_I) -CVaR_{\cdot/T}(R(x)))_{(j)}$ and $C$ is a polyhedron.
This representation still needs some further refinement to become linear since the CVaR function requires an appropriate expansion.
Knowing that for a fixed portfolio $x\in C$
\begin{align}
CVaR_{t/T}(R(x)):= &   \min_{\zeta \in \mathbb{R}, d} \;  \zeta + \frac{T}{t} \sum_{j=1}^T \pi_j d_j \nonumber \\
& \mbox{s.t.}  \hspace*{0.1cm} d_j\ge -\zeta - \sum_{\ell=1}^{n} r_{\ell j} x_{\ell}, \quad j=1,\ldots,T, \nonumber \\
& \hspace*{.6cm} d_j\ge 0,\; \forall j \, , \nonumber
\end{align}
we can express $f_t(x)= CVaR_{t/T} (R_I) -CVaR_{t/T}(R(x))$ using the following linear program
\begin{equation}\label{def:CVaR}
  \begin{array}{lll}
f_t(x):= &  \displaystyle \max_{\zeta \in \mathbb{R}, d} \; CVaR_{t/T}(R_I) - \zeta - \frac{T}{t}  \sum_{j=1}^T \pi_j d_j & \\
& \mbox{s.t.}  \hspace*{0.1cm} d_j\ge -\zeta - \sum_{\ell=1}^{n} r_{\ell j} x_{\ell}, \quad j=1,\ldots,T,  \\
& \hspace*{.6cm} d_j\ge 0,\; \forall j.
\end{array}
\end{equation}
Now, the largest feasible region for the inequalities $u_t+v_j +   \lambda^j( \zeta_{t} +\frac{1}{t} \sum_{k=1}^T d_{t k}) \le \lambda^j  CVaR_{t/T}(R_I)  , \; \forall j,t$ subject to the constraints of the above problem, is attained when $  \lambda^j(\zeta_{t} +\frac{1}{t} \sum_{k=1}^T d_{t k})$ is minimized. 
Thus, we obtain an alternative valid formulation for \eqref{eq:OWA_a} substituting $CVaR_{t/T}(R(x))$ by $\lambda^j( \zeta_{t} +\frac{1}{t} \sum_{k=1}^T d_{t k})$ plus the constraints in \eqref{def:CVaR}. Hence, Problem \eqref{eq:OWA_a} becomes equivalent to the following one:
\begin{equation}\label{eq:OWA_def}
  \begin{array}{ll}
\displaystyle\max_{x, u, v, d, \zeta} & \displaystyle \sum_{t=1}^T u_t + \sum_{j=1}^T v_j \\
\quad \mbox{s.t.}  & \displaystyle u_t + v_j \le \lambda^j ( CVaR_{t/T}(R_I) - \zeta_{t} - \frac{1}{t} \sum_{k=1}^T d_{t k} ), \; \forall j,t \\
& \displaystyle d_{t k}  \ge -\zeta_{t} - \sum_{\ell=1}^{n} r_{\ell k} x_{\ell} , \;
t, k=1,\ldots,T,\\
& \zeta_{t}, u_t, v_j \mbox{ free, } d_{t k} \ge 0, \\
& x \in C \, ,
\end{array}
\end{equation}
where, again, $C$ is a polyhedron.
This last formulation is a Linear Program (LP) with $n+T+T+T^2+T$ (namely, $x, u, v, d, \zeta$)  variables, and $2 T^2$ constraints plus those defining the polyhedron $C$.
This LP problem returns the portfolio $x$ that maximizes the ordered weighted average (OWA) of the centered CVaRs. 
The reader may think that these figures are large, but still, the model solves as a linear program, which simplifies a lot due to nowadays efficiency of current solvers.

\subsubsection{Maximizing OWA of the centered Tails \label{sec:MaxOWATails}}

In this section, we briefly discuss how to extend the results obtained from the multi-objective optimization problem based on centered CVaRs for Problem \eqref{eq:SSDTail}, based on centered Tails, which is recalled here for the sake of readability:
\begin{equation} \label{eq:SSDTail_b}
	\begin{array}{rl}
		\displaystyle\max_x & \left[ (Tail_{\frac{1}{T}}(R(x))-Tail_{\frac{1}{T}}(R_{I})),\ldots,
 (Tail_{\frac{T}{T}}(R(x))-Tail_{\frac{T}{T}}(R_{I}))  \right]
		\\
		\mbox{s.t.} & x \in C \, .
	\end{array}
\end{equation}
\cite{roman2013enhanced} scalarize Model \eqref{eq:SSDTail_b} considering the following Maximin problem
\begin{equation}
	\begin{array}{rl}
		\displaystyle\max_{x} & \displaystyle\min_{1 \leq j \leq T} (Tail_{\frac{j}{T}}(R(x))-Tail_{\frac{j}{T}}(R_{I}))
		\\
		\mbox{s.t.} & x \in C
	\end{array}
	\label{eq:SSDTailMaxMin}
\end{equation}
Note that Problems \eqref{eq:SSDCVaRMinMax} and \eqref{eq:SSDTailMaxMin} lead to different solutions
\citep[see Remark 1 in][]{roman2013enhanced}.
Following the same procedure described in Section \ref{sec:MaxOWACVaRs}, we can denote by  $h_t(x):=Tail_{t/T} (R(x))-Tail_{t/T}(R_I)$ with $t \in \mathcal{T}$
and by $(h_{(j)}(x))_{1\le t\le T}$ the elements of a vector such that $h_{(1)}(x)\le h_{(2)}(x) \le \ldots \le h_{(T)}(x)$.
Therefore, we consider the following problem
\begin{equation}
	\begin{array}{rl}
	\displaystyle	\max_{x} & \displaystyle \sum_{j=1}^T \lambda^j h_{(j)}(x)
		\\
		\mbox{s.t.} & x \in C
	\end{array}
	\label{eq:0ProbPropCVaR1}
\end{equation}
where $(\lambda^j)_{1\le j\le T}$ is a vector such that $\lambda^1\ge \lambda^2 \ge \ldots \ge \lambda^T\ge 0$.

\noindent
Following the same arguments that lead to Problem \eqref{eq:OWA_a} and knowing that for a fixed portfolio $x\in C$, we can express $h_t(x)$ as the following linear program
\begin{align}
h_t(x):= &   \max_{\zeta \in \mathbb{R}, d} \; -\frac{t}{T} \zeta -  \sum_{j=1}^T \pi_j d_j  - Tail_{t/T}(R_I) \label{tail} \\
& \mbox{s.t.}  \hspace*{0.1cm} d_j\ge -\zeta - \sum_{\ell=1}^{n} r_{\ell j} x_{\ell}, \quad j=1,\ldots,T, \nonumber \\
& \hspace*{.6cm} d_j\ge 0,\; \forall j \, , \nonumber
\end{align}
%
we can reformulate Problem \eqref{eq:0ProbPropCVaR1} as %
\begin{equation}
	\begin{array}{rl}
		\displaystyle	\max_{x, u, v, d, \zeta} & \displaystyle \sum_{t=1}^T u_t + \sum_{j=1}^T v_j \\
		
		\mbox{s.t.} & \displaystyle u_t + v_j \le \lambda_j (- \frac{t}{T} \zeta_{t} - \frac{1}{T} \sum_{k=1}^T d_{t k} -Tail_{t/T}(R_I)), \; \forall j,t \\
& \displaystyle d_{t k}  \ge -\zeta_{t} - \sum_{\ell=1}^{n} r_{\ell k} x_{\ell} , \;
t, k=1,\ldots,T,\\
&  \zeta_{t}, u_t, v_j \mbox{ free, } d_{t k} \ge 0, \\
&  x \in C.
	\end{array}
	\label{eq:OWATailcase}
\end{equation}
Again, the above problem is an LP, and its solution returns the portfolio $x$ that maximizes the ordered weighted average (OWA) of the centered tails.

\section{Theoretical Analysis \label{sec:TheoreticalAnalysis}}

This section discusses some theoretical features of the OWA maximization of centered CVaRs approach and the CS$\mathcal{E}$SD relations.
More precisely, Section \ref{OWA-CESD} shows the theoretical link between Model \eqref{eq:SSDCVaR} scalarized by the OWA operator and the CS$\mathcal{E}$SD rule.
In Section \ref{OWA-CESD2}, we provide some statistical motivations for using the OWA operator. Furthermore, we discuss the economic interpretation of the CS$\mathcal{E}$SD concept, emphasizing its main advantages.
Finally, Section \ref{sec:examples} shows some illustrative examples about the $S \varepsilon S D$ (or, equivalently, ASSD), $C S \varepsilon S D$, and $C S \mathcal{E} S D$ relations in three practical cases.

\subsection{OWA maximization and CS$\mathcal{E}$SD \label{OWA-CESD}}

We provide here the theoretical relationship between Model \eqref{eq:ProbPropCVaR2}, which consists of maximizing the OWA of the centered CVaRs, and the CS$\mathcal{E}$SD criterion, described in Definition \ref{def:CSESD}.

\noindent
Let $g^{(t)}(x) := \left( CVaR_{t/T}(R(x))-CVaR_{t/T} (R_I) \right)^{(t)}$
and $G^{(t)}(x) := \sum_{j=1}^t g^{(j)}(x)$ for each $t \in \mathcal{T}$.
Note that the efficient portfolios, whose returns dominate the return of a given benchmark index $R_I$ w.r.t. CS$\mathcal{E}$SD, are those obtained by solving the following problems
\begin{equation}
	\begin{array}{rl}
	\displaystyle	\min_{x} & \displaystyle  G^{(j)}(x)
		\\
		\mbox{s.t.}  & G^{(t)}(x) \leq \varepsilon_t \quad t=1,\ldots,T \quad t\neq j \\
		& x \in C
	\end{array}
	\label{eq:ProbPropCVaR3}
\end{equation}
for each $j\in \mathcal{T}$.
In the following we show that Problem \eqref{eq:ProbPropCVaR3} is equivalent to Problem \eqref{eq:ProbPropCVaR2}.
For the sake of readability, we recall below some well-known concepts in multi-criteria optimization.
Consider the following multi-objective optimization problem
\begin{equation}\label{eq:MOP}
  \min _{x \in \mathcal{X}} \left( h_{1}(x), h_{2}(x), \ldots h_{p}(x) \right) \, .
\end{equation}
A standard technique to scalarize Problem \eqref{eq:MOP}
is the $\varepsilon$-constraint method, namely
\begin{equation}
	\begin{array}{rl}
	\displaystyle	\min_{x} & \displaystyle  h_{j}(x)
		\\
		\mbox{s.t.}  & h_{k}(x) \leq \varepsilon_k \quad k=1,\ldots,p \quad k\neq j \\
		& x \in \mathcal{X}
	\end{array}
	\label{eq:MOP1}
\end{equation}
An alternative is the following weighted sum
problem
\begin{equation}\label{eq:MOP2}
  \min _{x \in \mathcal{X}} \sum_{k=1}^{p} \gamma_{k} h_{k}(x)
\end{equation}
We first recall a theorem that links these two methods.
\begin{theorem}[Theorem 4.6 of \cite{ehrgott2005multicriteria}] \label{th:ehrgott}
\hspace{10cm}
\begin{enumerate}
  \item Suppose $\hat{x}$ is an optimal solution of \eqref{eq:MOP2}. If $\gamma_{j}>0$ there exists $\hat{\varepsilon}$ such that $\hat{x}$ is an optimal solution of \eqref{eq:MOP1}, too.
  \item Suppose $\mathcal{X}$ is a convex set and $h_{k}: \mathbb{R}^{n} \rightarrow \mathbb{R}$ are convex functions. If $\hat{x}$ is an optimal solution of \eqref{eq:MOP1} for some $j$, there exists $\hat{\gamma} \in \mathbb{R}_{\geq}^{p}$ such that $\hat{x}$ is optimal for $\min _{x \in \mathcal{X}} \sum_{k=1}^{p} \hat{\gamma}_{k} h_{k}(x)$.
\end{enumerate}
\end{theorem}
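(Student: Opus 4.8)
The plan is to prove the two implications separately: the first by a short contradiction argument, and the second by passing to the image space and invoking convex separation. Throughout I assume, as is standard for the weighted-sum method in \eqref{eq:MOP2}, that the weights are nonnegative.

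For the first part, I would simply fix the tolerances at the values attained by $\hat{x}$, setting $\hat{\varepsilon}_{k} = h_{k}(\hat{x})$ for every $k \neq j$. With this choice $\hat{x}$ is trivially feasible for \eqref{eq:MOP1}. To see that it is optimal, suppose some feasible $x'$ satisfied $h_{j}(x') < h_{j}(\hat{x})$ together with $h_{k}(x') \le \hat{\varepsilon}_{k} = h_{k}(\hat{x})$ for all $k \neq j$. Multiplying each inequality by its weight $\gamma_{k} \ge 0$, and using $\gamma_{j} > 0$ to keep the $j$-th inequality strict, yields $\sum_{k=1}^{p} \gamma_{k} h_{k}(x') < \sum_{k=1}^{p} \gamma_{k} h_{k}(\hat{x})$, contradicting the optimality of $\hat{x}$ for \eqref{eq:MOP2}. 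Hence no such $x'$ exists and $\hat{x}$ solves \eqref{eq:MOP1}. This direction needs only nonnegativity of the weights together with $\gamma_{j} > 0$, and no convexity.

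For the second part, which I expect to be the crux, I would avoid any constraint qualification by working geometrically. Define the achievable downward-dominating set
\[ S := \{\, z \in \mathbb{R}^{p} : \exists\, x \in \mathcal{X} \text{ with } h_{k}(x) \le z_{k},\ k = 1,\ldots,p \,\}. \]
Convexity of $\mathcal{X}$ and of each $h_{k}$ makes $S$ convex (a convex combination of witnesses $x, x'$ witnesses the corresponding combination of $z, z'$), and $S$ is upward closed by construction. Writing $\hat{z} := (h_{1}(\hat{x}),\ldots,h_{p}(\hat{x}))$ and tightening the tolerances to $\hat{\varepsilon}_{k} = h_{k}(\hat{x})$ (which preserves optimality of $\hat{x}$ since the feasible region only shrinks), the optimality of $\hat{x}$ for \eqref{eq:MOP1} says precisely that the convex set $D := \{\, z : z_{j} < \hat{z}_{j},\ z_{k} \le \hat{z}_{k}\ (k \neq j) \,\}$ does not meet $S$.

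I would then separate the disjoint convex sets $S$ and $D$ by a nonzero hyperplane with normal $\hat{\gamma}$, so that $\hat{\gamma}^{\top} z \ge \alpha$ on $S$ and $\hat{\gamma}^{\top} z \le \alpha$ on $D$. Because $S$ is upward closed, driving any single coordinate of a point of $S$ to $+\infty$ would break the lower bound unless $\hat{\gamma} \ge 0$; this forces $\hat{\gamma} \in \mathbb{R}_{\geq}^{p}$. Since $\hat{z} \in S$ and $\hat{z} \in \overline{D}$, the two inequalities pin down $\hat{\gamma}^{\top}\hat{z} = \alpha$, so for every $x \in \mathcal{X}$ the point $(h_{1}(x),\ldots,h_{p}(x)) \in S$ obeys $\sum_{k} \hat{\gamma}_{k} h_{k}(x) \ge \alpha = \sum_{k} \hat{\gamma}_{k} h_{k}(\hat{x})$, which is exactly optimality of $\hat{x}$ for the weighted-sum problem. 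The delicate points I would watch are the equivalence between $D \cap S = \emptyset$ and optimality of $\hat{x}$, and the guarantee that the separating hyperplane is genuinely nonzero with a nonnegative normal; the upward-closedness of $S$ is what makes both the nonnegativity and the non-triviality go through.
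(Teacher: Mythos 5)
Your proof is correct, but there is nothing in the paper to compare it against: the paper states this result verbatim as Theorem 4.6 of \cite{ehrgott2005multicriteria} and gives no proof, using it only as an imported tool to establish Theorem \ref{th:num9}. So the relevant comparison is with the textbook argument. Your part 1 is exactly the canonical proof: set $\hat{\varepsilon}_{k}=h_{k}(\hat{x})$ for $k\neq j$, then a feasible competitor with $h_{j}(x')<h_{j}(\hat{x})$ would, after weighting (using $\gamma_{k}\ge 0$ and $\gamma_{j}>0$), strictly beat $\hat{x}$ in \eqref{eq:MOP2} — a contradiction, and indeed no convexity is needed. For part 2, the usual textbook route is two-step: an optimal solution of the $\varepsilon$-constraint problem \eqref{eq:MOP1} is weakly efficient for \eqref{eq:MOP}, and for convex problems every weakly efficient point solves a weighted-sum problem with nonnegative, nonzero weights, the latter fact being proved by a separation argument in objective space. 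Your proposal compresses these two steps into a single direct separation: the set $S$ of vectors dominating some achievable objective vector is convex (by convexity of $\mathcal{X}$ and of each $h_{k}$) and upward closed; optimality of $\hat{x}$ after tightening the tolerances to $\hat{\varepsilon}_{k}=h_{k}(\hat{x})$ is precisely disjointness of $S$ from the convex set $D$; weak separation of two disjoint nonempty convex sets in $\mathbb{R}^{p}$ (valid without any closedness hypotheses) yields a nonzero normal $\hat{\gamma}$, whose nonnegativity follows from the upward closedness of $S$ and whose separation level equals $\hat{\gamma}^{\top}\hat{z}$ because $\hat{z}\in S\cap\overline{D}$; minimality of $\hat{x}$ for the weighted sum then drops out. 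All the delicate points you flag — disjointness versus optimality, nontriviality of the normal, its sign — are handled correctly, so the proposal is a complete, self-contained proof; what it buys over the paper's bare citation is exactly that self-containedness, and it bypasses the intermediate notion of weak efficiency entirely.
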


\noindent
The following result is instrumental to get the announced equivalence between models.
\begin{lemma} Assume that $C$ is a convex set. Then, $G^{(t)}$ is a convex function for all $t\in \mathcal{T}$. \label{le:num8}
\end{lemma}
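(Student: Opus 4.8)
The plan is to reduce the convexity of the order-statistic sum $G^{(t)}$ to the convexity of a pointwise maximum of convex functions, thereby sidestepping the apparent difficulty that the sorting permutation $\sigma$ depends on $x$.

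First, I would establish that each individual centered CVaR, $g_t(x) = CVaR_{t/T}(R(x)) - CVaR_{t/T}(R_I)$, is convex in $x$ on $C$. Since $CVaR_{t/T}(R_I)$ is a constant, it suffices to show that $CVaR_{t/T}(R(x))$ is convex in $x$. Using the Rockafellar--Uryasev representation recalled just before \eqref{def:CVaR}, this quantity is the optimal value of a minimization problem whose objective $\zeta + \frac{T}{t}\sum_j \pi_j d_j$ is linear in $(\zeta,d)$ and whose constraints $d_j \ge -\zeta - \sum_\ell r_{\ell j} x_\ell$, $d_j \ge 0$ are jointly linear in $(\zeta,d,x)$. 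Hence the extended-valued map $(x,\zeta,d) \mapsto \zeta + \frac{T}{t}\sum_j \pi_j d_j$ plus the indicator of this polyhedral feasible set is jointly convex, and partial minimization of a jointly convex function over $(\zeta,d)$ yields a convex function of $x$. Therefore each $g_t$ is convex on $C$.

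Second, the crucial step: I would rewrite $G^{(t)}(x)$, the sum of the $t$ largest values among $g_1(x),\ldots,g_T(x)$, as
$$
G^{(t)}(x) = \max_{\mathcal{S} \subseteq \mathcal{T} : |\mathcal{S}| = t} \ \sum_{s \in \mathcal{S}} g_s(x).
$$
This is precisely the identity already exploited in the proof of Proposition \ref{prop:CSeSD_form2}: the sum of the top $t$ order statistics is attained by the $t$-element subset of indices carrying the largest values. Its merit is that the right-hand side is a finite maximum over a fixed, $x$-independent collection of index sets, so the troublesome permutation $\sigma$ no longer appears explicitly. I would then conclude by two standard convexity-preserving operations: for each fixed $\mathcal{S}$ the map $x \mapsto \sum_{s \in \mathcal{S}} g_s(x)$ is a finite sum of convex functions, hence convex on $C$, and the pointwise maximum of finitely many convex functions is convex. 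Thus $G^{(t)}$ is convex on $C$ for every $t \in \mathcal{T}$, as claimed.

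The main obstacle is conceptual rather than computational: one must resist arguing directly on the sorted vector, since the maps $g^{(j)}$ and the induced permutation $\sigma$ are nonsmooth and vary with $x$, which makes a direct estimate awkward. The max-over-subsets reformulation is exactly what renders the convexity transparent; once it is in place, the remaining steps are routine applications of the fact that sums and pointwise maxima preserve convexity.
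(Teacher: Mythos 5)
Your proof is correct and is essentially the paper's own argument in modular form: the paper picks the index set $k_1,\ldots,k_t$ attaining the top-$t$ sum at the convex combination point and runs the resulting inequality chain, which is exactly the standard proof that a pointwise maximum of convex functions is convex, applied to your max-over-subsets representation of $G^{(t)}$. The only differences are cosmetic --- you cite the convexity-preservation facts explicitly (and additionally verify convexity of $CVaR_{t/T}(R(x))$ via partial minimization of the Rockafellar--Uryasev formulation, a step the paper takes as known), whereas the paper inlines the same reasoning.
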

\begin{proof}
Let $t\in \mathcal{T}$. For $x,y\in C$ and $\alpha \in [0,1]$ we have that
$$G^{(t)} (\alpha x + (1-\alpha) y) = \sum_{j=1}^t g^{(j)}(\alpha x + (1-\alpha) y).$$
Let $k_1,\ldots,k_t$ be indices such that
$$\sum_{j=1}^t g^{(j)}(\alpha x + (1-\alpha) y) = \sum_{j=1}^t g_{k_j}(\alpha x + (1-\alpha) y).$$
Thus, since CVaR function is convex,
$$\sum_{j=1}^t g_{k_j}(\alpha x + (1-\alpha) y)\leq \alpha \sum_{j=1}^t g_{k_j}(x) +  (1-\alpha) \sum_{j=1}^t g_{k_j}(y).$$
Finally, note that
$$\alpha \sum_{j=1}^t g_{k_j}(x) +  (1-\alpha) \sum_{j=1}^t g_{k_j}(y)\leq \alpha \sum_{j=1}^t g^{(j)}(x) +  (1-\alpha) \sum_{j=1}^t g^{(j)}(y) = \alpha G^{(t)}(x) + (1-\alpha)G^{(t)}(y).$$
\end{proof}
\noindent
Exploiting Theorem \ref{th:ehrgott} and Lemma \ref{le:num8}, we now provide the theoretical link between Problems \eqref{eq:ProbPropCVaR3} and \eqref{eq:ProbPropCVaR1_}.
\begin{theorem} \label{th:num9}
\hfill
%
\begin{enumerate}
\item Suppose $\hat{x}$ is an optimal solution of Problem \eqref{eq:ProbPropCVaR1_}.
If $\lambda^j-\lambda^{j+1}>0$, then there exists $\underline{\varepsilon} \in \mathbb{R}_{\geq}^{T}$ such that $\hat{x}$ is an optimal solution of Problem \eqref{eq:ProbPropCVaR3}, too.
\item If $\hat{x}$ is an optimal solution of Problem \eqref{eq:ProbPropCVaR3} for some $j$, there exists $\hat{\lambda} \in \mathbb{R}_{\geq}^{T}$ with $\hat{\lambda}_1\geq \ldots \geq \hat{\lambda}_T$ such that $\hat{x}$ is optimal for Problem \eqref{eq:ProbPropCVaR1_}.
\end{enumerate}
\end{theorem}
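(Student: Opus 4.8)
The plan is to recognize both problems as two classical scalarizations of one and the same multi-objective program, namely \eqref{eq:MOP} with $p=T$ objectives $h_t(x) := G^{(t)}(x)$, $t\in\mathcal{T}$, and then to invoke Theorem \ref{th:ehrgott}. Under this identification, Problem \eqref{eq:ProbPropCVaR3} (for a fixed index $j$) is literally the $\varepsilon$-constraint program \eqref{eq:MOP1}, while the weighted-sum program \eqref{eq:MOP2} reads $\min_{x\in C}\sum_{t=1}^T \gamma_t G^{(t)}(x)$. Consequently, the whole theorem reduces to showing that this weighted-sum program coincides with the OWA program \eqref{eq:ProbPropCVaR1_}.

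The central computation is an Abel (summation-by-parts) identity. Writing $G^{(t)}(x)=\sum_{j=1}^t g^{(j)}(x)$ and interchanging the order of summation,
$$
\sum_{t=1}^T \gamma_t\, G^{(t)}(x) = \sum_{t=1}^T \gamma_t \sum_{j=1}^t g^{(j)}(x) = \sum_{j=1}^T \Big(\sum_{t=j}^T \gamma_t\Big) g^{(j)}(x).
$$
Hence, setting $\lambda^j := \sum_{t=j}^T \gamma_t$ (equivalently $\gamma_j = \lambda^j-\lambda^{j+1}$ with the convention $\lambda^{T+1}:=0$, so that $\gamma_T=\lambda^T$), the two objectives are identical. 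I would then verify that this linear change of variables is a bijection between the relevant weight cones: if $\gamma\in\mathbb{R}^T_{\geq}$ then $\lambda^j=\sum_{t\geq j}\gamma_t$ is automatically nonincreasing with $\lambda^T\geq 0$, and conversely $\lambda^1\geq\cdots\geq\lambda^T\geq 0$ forces $\gamma_j=\lambda^j-\lambda^{j+1}\geq 0$. Thus the OWA monotonicity of the $\lambda^j$ is precisely the nonnegativity of the weighted-sum coefficients $\gamma_t$.

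With this dictionary in place, both statements follow from Theorem \ref{th:ehrgott}. For point 1, an optimal $\hat{x}$ of \eqref{eq:ProbPropCVaR1_} is optimal for the weighted-sum program with coefficients $\gamma_t=\lambda^t-\lambda^{t+1}\geq 0$; the hypothesis $\lambda^j-\lambda^{j+1}>0$ is exactly $\gamma_j>0$, so part 1 of Theorem \ref{th:ehrgott} yields an $\underline{\varepsilon}\in\mathbb{R}^T_{\geq}$ for which $\hat{x}$ solves \eqref{eq:MOP1}, that is, Problem \eqref{eq:ProbPropCVaR3} for that index $j$. For point 2, convexity of $C$ together with convexity of each $G^{(t)}$ (Lemma \ref{le:num8}) lets us apply part 2 of Theorem \ref{th:ehrgott}: an optimal $\hat{x}$ of \eqref{eq:ProbPropCVaR3} for some $j$ is optimal for some weighted sum with $\hat{\gamma}\in\mathbb{R}^T_{\geq}$, and the change of variables $\hat{\lambda}^j=\sum_{t\geq j}\hat{\gamma}_t$ produces nonincreasing, nonnegative weights $\hat{\lambda}_1\geq\cdots\geq\hat{\lambda}_T\geq 0$ for which $\hat{x}$ is optimal for \eqref{eq:ProbPropCVaR1_}.

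The only genuinely delicate point, and the step I would treat most carefully, is the bookkeeping of the two weight cones and the boundary convention $\lambda^{T+1}=0$: one must confirm that the ordering constraints $\lambda^1\geq\cdots\geq\lambda^T\geq 0$ correspond bijectively to nonnegativity of the $\gamma_t$, since it is this correspondence that makes the monotone OWA weights and the nonnegative weighted-sum coefficients interchangeable and hence lets Ehrgott's theorem transfer verbatim. Everything else is a direct transcription once the objectives are identified as $G^{(t)}$ and the Abel-summation identity is recorded.
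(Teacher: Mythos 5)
Your proposal is correct and follows essentially the same route as the paper: the same summation-by-parts identity converting the OWA objective $\sum_j \lambda^j g^{(j)}(x)$ into the weighted sum $\sum_j (\lambda^j-\lambda^{j+1}) G^{(j)}(x)$ with the convention $\lambda^{T+1}=0$, followed by the same two applications of Theorem \ref{th:ehrgott} (part 1 using $\gamma_j=\lambda^j-\lambda^{j+1}>0$, part 2 using the convexity of the $G^{(t)}$ from Lemma \ref{le:num8} and the inverse change of variables $\hat{\lambda}^t=\sum_{s\ge t}\hat{\gamma}^s$). Your explicit check that the map $\gamma\mapsto\lambda$ is a bijection between the nonnegative-weight cone and the monotone OWA-weight cone is a point the paper treats only implicitly, but it is the same argument.
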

\begin{proof}
\hspace{10cm}
\begin{enumerate}
\item We note that the objective function of Problem \eqref{eq:ProbPropCVaR1_} can be rewritten as follows
    \begin{eqnarray}
      \sum_{j=1}^T \lambda^j g^{(j)}(x) &=& \sum_{j=1}^T \sum_{t=j}^T \left( \lambda^t-\lambda^{t+1} \right) g^{(j)}(x) \quad \mbox{or, equivalently,} \label{eq:rel1} \\
       &=& \sum_{j=1}^T \sum_{t=1}^j \left( \lambda^j-\lambda^{j+1} \right) g^{(t)}(x) \label{eq:rel2} \\
       &=& \sum_{j=1}^T \left( \lambda^j-\lambda^{j+1} \right) G^{(j)}(x) \nonumber
    \end{eqnarray}
where $G^{(j)} := \sum_{t=1}^j g^{(t)}$ for each $j \in \mathcal{T}$ and we assume that $\lambda^{T+1}=0$.
Thus, we can reformulate Problem \eqref{eq:ProbPropCVaR1_} as
\begin{equation}
	\begin{array}{rl}
	\displaystyle	\min_{x} & \displaystyle \sum_{j=1}^T \left( \lambda^j-\lambda^{j+1} \right) G^{(j)}(x)
		\\
		\mbox{s.t.} & x \in C \, .
	\end{array}
	\label{eq:ProbPropCVaR1}
\end{equation}
Knowing that, by construction, $\lambda^1\ge \lambda^2 \ge \ldots \ge \lambda^T\ge 0$ and by hypothesis of item 1, $\lambda^j-\lambda^{j+1} >0$ for some $j \in \mathcal{T}$,
the result follows by Theorem \ref{th:ehrgott}.
\item Assuming that $\hat{x}$ is an optimal solution of Problem \eqref{eq:ProbPropCVaR3} and using point 2 of Theorem \ref{th:ehrgott}, there must exist some $\hat{\gamma} = \left(\hat{\gamma}^{1}, \hat{\gamma}^{2}, \ldots, \hat{\gamma}^{T}\right) \in \mathbb{R}_{\geq}^{T}$ such that
$$
\sum_{t=1}^{T} \hat{\gamma}^{t} G^{(t)}(x) \geq \sum_{t=1}^{T} \hat{\gamma}^{t} G^{(t)}(\hat{x})
$$
for all $x \in C$. Equivalently,
$$
\sum_{t=1}^{T} \hat{\gamma}^{t} \sum_{j=1}^{t} g^{(j)}(x) \geq \sum_{t=1}^{T} \hat{\gamma}^{t} \sum_{j=1}^{t} g^{(j)}(\hat{x})
$$
Exploiting relations \eqref{eq:rel1} and \eqref{eq:rel2} and denoting by $\hat{\gamma}^{t}=\hat{\lambda}^{t}-\hat{\lambda}^{t+1}$ $\forall t \in \mathcal{T}$, but
with $\hat{\lambda}^{T+1}=0$,
we have
$$
\sum_{t=1}^{T} \sum_{j=t}^{T} \left(\hat{\lambda}^{j}-\hat{\lambda}^{j+1}\right) g^{(t)}(x) \geq \sum_{t=1}^{T} \sum_{j=t}^{T} \left(\hat{\lambda}^{j}-\hat{\lambda}^{j+1}\right) g^{(t)}(\hat{x}) \, ,
$$
namely,
$$
\sum_{t=1}^T \hat{\lambda}^t g^{(t)}(x) \geq \sum_{t=1}^T \hat{\lambda}^t g^{(t)}(\hat{x}) \, ,
$$
for all $x \in C$, which completes the proof.
\end{enumerate}
\end{proof}

\subsection{Statistical and economic interpretations of CS$\mathcal{E}$SD \label{OWA-CESD2}}

In the previous section, we have shown (see Theorem \ref{th:num9}) the equivalence between the maximization of the OWA of the centered CVaRs, and the CS$\mathcal{E}$SD rule.

\noindent
Next, we note that the OWA of the centered CVaRs, for $\lambda=(1,\stackrel{t}{\ldots},1,0,\ldots,0)$ can be interpreted as the $t/T$-CVaR of the centered CVaRs.
Therefore, from a statistical viewpoint, according to \cite{BertsimasS04}, it is more robust to maximize the $t/T$-CVaR of the centered CVaRs than the minimax model ($1/T$-CVaR) proposed by \cite{roman2013enhanced}. 
It has been well-known since the early '70s that minimax models provide robust solutions against uncertain data, although they may be too conservative. 
An alternative is to consider the minimization of $t$-sums. 
In this regard, the reader may note that our proposal, namely the minimization of the $t$ largest losses of a sequence, provides more robust solutions than considering only the largest loss ($t=1$). 
The interested reader is referred to the papers by \cite{BertsimasS04}, \cite{puerto_rodriguez-chia_tamir_2017}, and the references therein for further insights on $k$-sum optimization.

On the other hand, based on the equivalence provided by Theorem \ref{th:num9}, we can also give a insightful interpretation to the CS$\mathcal{E}$SD rule. Indeed, any CS$\mathcal{E}$SD portfolio comes from an optimal solution of an OWA solution of centered CVaRs for particular choices of lambda weights. 
In other words, the CS$\mathcal{E}$SD portfolios optimize weighted sums of centered CVaRs; each choice of the lambda parameters reflects the importance given by the decision-maker to the ordinal sequence of centered CVaRs.

Finally, we discuss on the advantages of the $C S \varepsilon S D$ concept.
We prove that it is a refinement of the concept of ASSD in \cite{lizyayev2012tractable} and, moreover, that the expected utility of a $C S \varepsilon S D$ portfolio does not exceed the expected utility of the dominating one by more than $\varepsilon$ for a large class of reasonable utility functions.
\begin{prop}\label{prop:Thoretical}
\mbox{ \null}
Let $A$ and $B$ be two random variables.
\begin{enumerate}
\item If $A$ $C S \varepsilon S D$-dominates $B$ then $A$ ASSD-dominates $B$  ($C S \varepsilon S D\Rightarrow$ ASSD).
\item If A $C S \varepsilon S D$-dominates B, then $E[ u(A)] + \varepsilon \geq E[u(B)]$ for any nondecreasing concave utility function $u$ with first derivative $u^{\prime} \leq 1$.
\end{enumerate}
\end{prop}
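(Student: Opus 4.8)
The plan is to prove the two claims in order, deriving point 2 from the ASSD property established in point 1. For point 1, I would first note that it suffices to invoke the \emph{singleton} instances of the $CS\varepsilon SD$ condition. Taking $\mathcal{S}=\{t\}$ in Definition \ref{def:CSeSD_1} (equivalently, reading Proposition \ref{prop:CSeSD_form2} at the first order $t=1$) gives $CVaR_{t/T}(A)-CVaR_{t/T}(B)\le \varepsilon$ for every $t\in\mathcal{T}$. Since $t/T\le 1$, multiplying by $t/T$ yields $\tfrac{t}{T}\bigl(CVaR_{t/T}(A)-CVaR_{t/T}(B)\bigr)\le \varepsilon$. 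Using the identity $Tail_{t/T}(X)=-\tfrac{t}{T}CVaR_{t/T}(X)$ recalled just after \eqref{eq:CVaRexpre}, this reads $Tail_{t/T}(A)-Tail_{t/T}(B)\ge -\varepsilon$ for all $t$. Finally, because in the equally likely discrete setting $\beta\mapsto Tail_\beta(A)-Tail_\beta(B)$ is piecewise linear with breakpoints only at $\beta=t/T$, its minimum over $(0,1]$ is attained at one of these breakpoints; hence the bound extends to all $\beta\in(0,1]$, which is exactly the $Tail$-form of ASSD with tolerance $\vartheta=\varepsilon$ (the third equivalent criterion listed after Remark \ref{rem:ASSD_1}). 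This proves $CS\varepsilon SD\Rightarrow$ ASSD.

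For point 2, I would work from the integral form of ASSD just obtained, namely $H(\alpha):=\int_{-\infty}^{\alpha}\bigl(F_A(\tau)-F_B(\tau)\bigr)\,d\tau\le\varepsilon$ for all $\alpha$, with $H(\alpha')=0$ below the combined support. Writing $E[u(A)]-E[u(B)]=\int u\,d(F_A-F_B)$ and integrating by parts (the boundary terms vanish since $F_A-F_B$ equals $0$ below the support and at its top), one obtains $E[u(B)]-E[u(A)]=\int_{\alpha'}^{\alpha''} u'(s)\bigl(F_A(s)-F_B(s)\bigr)\,ds=\int_{\alpha'}^{\alpha''}u'(s)\,dH(s)$. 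A second integration by parts gives $u'(\alpha'')H(\alpha'')-\int_{\alpha'}^{\alpha''}H(s)\,du'(s)$. Here I would exploit the three structural facts about $u$: concavity makes $u'$ nonincreasing, so $-du'\ge 0$; monotonicity gives $u'\ge 0$; and the hypothesis $u'\le 1$ caps its size. Bounding $H\le\varepsilon$ in both terms yields $E[u(B)]-E[u(A)]\le u'(\alpha'')\varepsilon+\varepsilon\bigl(u'(\alpha')-u'(\alpha'')\bigr)=\varepsilon\,u'(\alpha')\le\varepsilon$, i.e. $E[u(A)]+\varepsilon\ge E[u(B)]$, as required.

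I expect the main obstacle to be technical rather than conceptual: making the two integration-by-parts steps rigorous when $A,B$ are discrete, so that $d(F_A-F_B)$ and $du'=u''$ must be treated as Stieltjes (signed) measures, and verifying that the boundary contributions genuinely vanish or are controlled by $H\le\varepsilon$. The single place where the hypotheses enter in an essential, non-negotiable way is the normalization $u'\le 1$: it is precisely what calibrates the utility gap to the \emph{same} constant $\varepsilon$ that appears in the dominance condition, and without it the bound would scale with $\sup u'$. It is also worth emphasizing, as a by-product, that point 1 uses only the singleton subsets of the $CS\varepsilon SD$ condition, which shows that $CS\varepsilon SD$ is a genuine refinement of ASSD and not merely equivalent to it.
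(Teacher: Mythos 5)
Your proposal is correct, and it diverges from the paper's proof in an instructive way. For part 1 you follow essentially the paper's route: the paper likewise extracts the first-order condition (Proposition \ref{prop:CSeSD_form2} read at $t=1$, which is the same as your singleton choice $\mathcal{S}=\{t\}$ in Definition \ref{def:CSeSD_1}) to conclude that every centered CVaR difference is at most $\varepsilon$, and then identifies this with ASSD; the only difference is that you spell out the passage from the CVaR form to the Tail form, $Tail_{t/T}(A)-Tail_{t/T}(B)\ge-\varepsilon$, extended to all $\beta\in(0,1]$ by piecewise linearity, before invoking the Tail criterion for ASSD, whereas the paper compresses all of this into the equivalence asserted in Remark \ref{rem:ASSD_1}. (Your piecewise-linearity step is sound; note only that the infimum over $(0,1]$ may also be approached as $\beta\to 0^{+}$, where the difference tends to $0>-\varepsilon$, so the conclusion is unaffected.) For part 2 the routes genuinely differ: the paper's entire argument is a citation---it applies Theorem 2 of \cite{lizyayev2012tractable} to the ASSD relation obtained in part 1---while you re-derive that result from scratch by two integrations by parts against $H(\alpha)=\int_{-\infty}^{\alpha}\bigl(F_A(\tau)-F_B(\tau)\bigr)\,d\tau$. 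The paper's approach buys brevity and delegates the measure-theoretic technicalities you flag (Stieltjes integration for discrete laws, a.e. differentiability of concave $u$); yours buys a self-contained argument that exposes exactly where each hypothesis acts: monotonicity for $u'(\alpha'')H(\alpha'')\le\varepsilon\,u'(\alpha'')$, concavity to make $-du'$ a nonnegative measure, and the normalization $u'\le 1$ entering only through the final factor, so that you in fact obtain the slightly sharper bound $E[u(B)]-E[u(A)]\le\varepsilon\,u'(\alpha')$. Both proofs share the same architecture---part 2 is deduced from part 1 plus a utility-gap bound for ASSD---but yours replaces the external theorem with an explicit proof of it.
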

\begin{proof}
  From Proposition \ref{prop:CSeSD_form2} it follows that $C S \varepsilon S D$ implies ASSD, or equivalently $S \varepsilon S D$, which requires  $\left(CVaR_{\frac{\cdot}{T}}(A)-CVaR_{\frac{\cdot}{T}}(B)\right)^{(1)} \leq \varepsilon$. This proves assertion (1.).

\noindent
Next, to prove assertion (2.) we use Theorem 2 in \cite{lizyayev2012tractable} and the result in assertion (1.).
\end{proof}

\subsection{Illustrative examples for $S \varepsilon S D$, $C S \varepsilon S D$ and $C S \mathcal{E} S D$}\label{sec:examples}

To support intuition, in Fig. \ref{fig:CumCVaRExample}, we give some illustrative examples of how the $S \varepsilon S D$ (and hence the Minimax scalarization), $C S \varepsilon S D$ and $C S \mathcal{E} S D$ (and hence the OWA scalarization) conditions work in three practical situations covering all possible real-world cases.
More precisely, we consider two discrete random variables $A$ and $B$ represented by five scenarios $t=1,...,5$.
Denoting by $g^{(t)} = \left( CVaR_{\cdot/T}(A)-CVaR_{\cdot/T} (B) \right)^{(t)}$
and $G^{(t)} := \sum_{j=1}^t g^{(j)}$, on the left side of Fig. \ref{fig:CumCVaRExample} we show a bar graph of $g^{(t)}$ as a function of $t$, while on the right side we report $G^{(t)}$ as a function of $t$.
In Case 1, $g^{(t)}>0$ $\forall t$, therefore all the SSD inequalities are violated;
in Case 2, $g^{(t)}>0$ for $t=1,2,3$ and $g^{(t)}<0$ for $t=4,5$;
in Case 3, $g^{(t)}<0$ $\forall t$, namely $A$ SSD-dominates $B$.

\noindent
In simple terms, the Minimax scalarization of \cite{roman2013enhanced} consists in fixing a single threshold ``epsilon'' (see the blue dashed line on the left side of Fig. \ref{fig:CumCVaRExample}) and minimizing it; therefore, we can only focus on the worst violation $g^{(1)}$.
On the other hand, using the OWA scalarization with an appropriate choice of lambda weights, we can potentially consider one or more thresholds "epsilon" (see the red dashed lines on the right side of Fig. \ref{fig:CumCVaRExample}), limiting not only the worst violation $G^{(1)}=g^{(1)}$ but also the remaining cumulative sums of the violations.
\begin{figure}[tbp]
\centering
\includegraphics[width=1.0\textwidth]{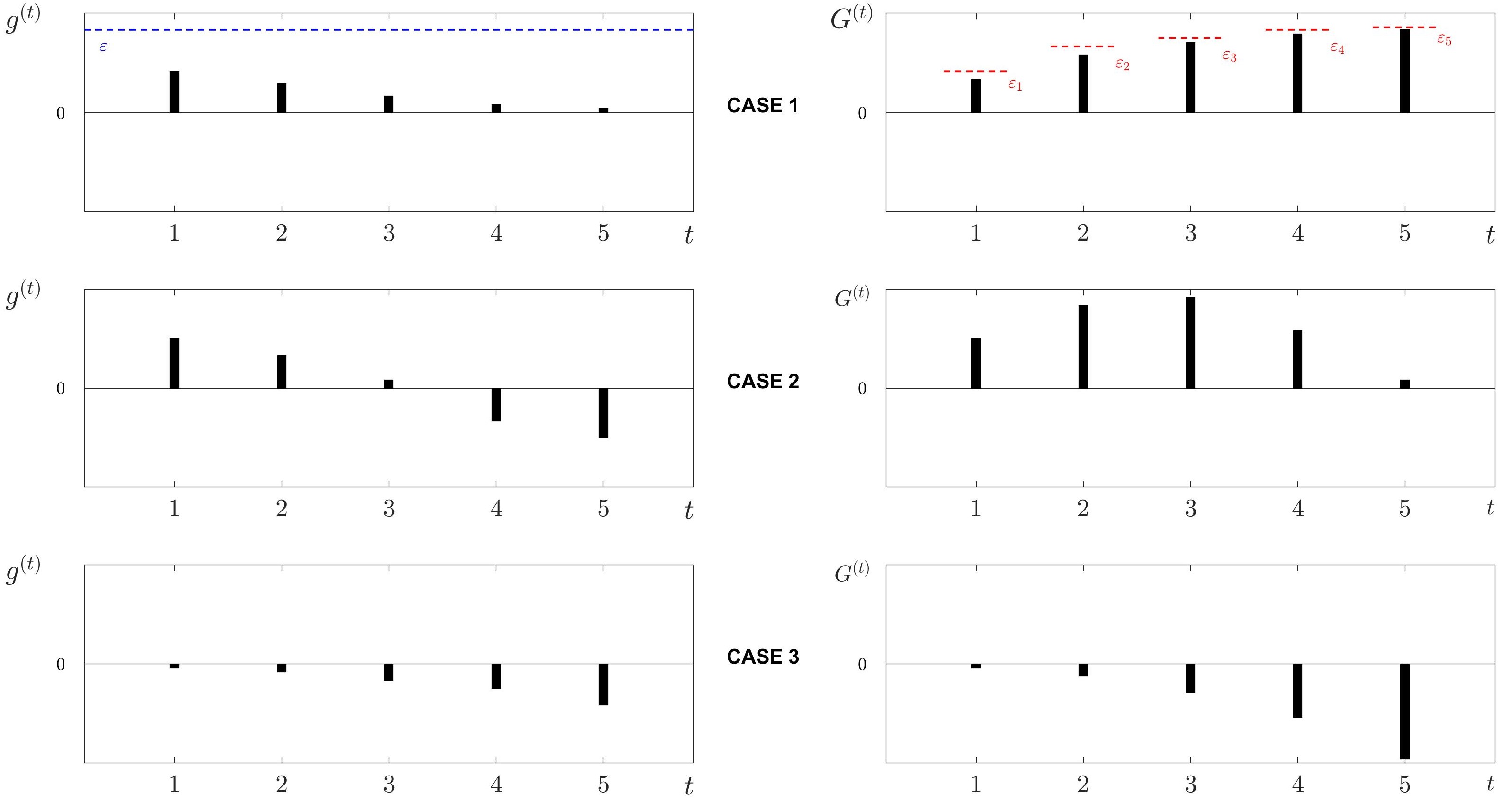} %
\caption{Illustrative examples of $S \varepsilon S D$ (left side), $C S \varepsilon S D$ and $C S \mathcal{E} S D$ (right side).}
\label{fig:CumCVaRExample}
\end{figure}

\section{Empirical Analysis \label{sec:EmpiricalAnalysis}}

We present here a detailed empirical analysis where the performance of efficient portfolios obtained by the OWA maximization of centered CVaRs (or Tails), are compared with those achieved by different SD-based strategies proposed in the literature for Enhanced Indexation.
In addition, we also consider 
the long-only,
long-short, 1-norm, and 2-norm minimum variance portfolios.
Summing up, we thus compare the portfolio selection models listed below.
\begin{itemize}
  \item \textbf{1-OWA-CVaR}$\, \equiv \,$ \textbf{RomanCVaR}: Model \eqref{eq:OWA_def} with $\lambda=(1, 0, 0, \ldots, 0)$, which is equivalent to Model \eqref{eq:SSDCVaRMinMax} \citep{roman2013enhanced}.
  \item \textbf{$k$-OWA-CVaR-$\beta$}: Model \eqref{eq:OWA_def} with $\lambda=(1,1,\stackrel{(k)}{\ldots},1,0,\ldots,0)$, where $k$ rounds $\beta T$ to the nearest integer, and the values of $\beta$ considered are $5\%, 10\%,25\%,50\%, 75\%, 100\%$.
     \item \textbf{$k$-OWA-CumCVaR-$\beta$}: Model \eqref{eq:OWA_def} with $\lambda=(k,k-1,k-2, \ldots, 2, 1,0,\ldots,0)$, where $k$ rounds $\beta T$ to the nearest integer, and the values of $\beta$ considered are $5\%, 10\%,25\%,50\%, 75\%, 100\%$.
      \item \textbf{1-OWA-Tail}$\, \equiv \,$ \textbf{RomanTail}: Model \eqref{eq:OWATailcase} with $\lambda=(1, 0, 0, \ldots, 0)$, which is equivalent to Model \eqref{eq:SSDTailMaxMin} \citep{roman2013enhanced}.
  \item \textbf{$k$-OWA-Tail-$\beta$}: Model \eqref{eq:OWATailcase} with $\lambda=(1,1,\stackrel{(k)}{\ldots},1,0,\ldots,0)$, where $k$ rounds $\beta T$ to the nearest integer, and the values of $\beta$ considered are $5\%, 10\%,25\%,50\%, 75\%, 100\%$.
  \item \textbf{$k$-OWA-CumTail-$\beta$}: Model \eqref{eq:OWATailcase} with $\lambda=(k,k-1,k-2, \ldots, 2, 1,0,\ldots,0)$, where $k$ rounds $\beta T$ to the nearest integer, and the values of $\beta$ considered are $5\%, 10\%,25\%,50\%, 75\%, 100\%$.
  \item \textbf{KP2011Min}: Model (10) of \cite{kopa2015general} with
      $$
      w_s =\left\{
    \begin{array}{ll}
      0.001, & \mbox{for $s=2, \ldots, T$} \\
      1- \sum_{i=2}^{T} w_i , & \mbox{for $s=1$}
    \end{array}
  \right.
      $$
\citep[see also][]{hodder2015improved}.
      \item \textbf{MinV}: long-only minimum variance portfolio \citep{Mark:52,Mark:59}, $x_{{MinV}}^{*}=\displaystyle\argmin_{x \geq 0} x^{T} \Sigma x$, where $\Sigma$ is the sample covariance matrix obtained from the returns of the assets belonging to an investment universe.
  \item \textbf{MinVsh}: long-short minimum variance portfolio, $x_{{MinV\!sh}}^{*}=\displaystyle\argmin_{x \in \mathbb{R}} x^{T} \Sigma x $.
  \item \textbf{MinVN1}: 1-norm constrained minimum variance portfolio, $x_{{MinV\!N1}}^{*}=\displaystyle\argmin_{x \in \mathbb{R}} x^{T} \Sigma x + \delta_1 \sum_{k=1}^{n} | x_k |$ \citep[see, e.g.][]{dhingra2023norm}.
  \item \textbf{MinVN2}: 2-norm constrained minimum variance portfolio, $x_{{MinV\!N2}}^{*}=\displaystyle\argmin_{x \in \mathbb{R}} x^{T} \Sigma x + \delta_2 \sum_{k=1}^{n} x^{2}_{k} $ \citep[see, e.g.][]{coqueret2015diversified,dhingra2023norm}.
\end{itemize}

\noindent
The models have been implemented in MATLAB R2019b, and they make calls to XPRESS solver version 8.9 for solving the LP problems. All experiments were run on a computer DellT5500 with an Intel(R) Xeon(R) processor with a CPU X5690 at 3.75 GHz and 48 GB of RAM.

\subsection{Datasets and Methodology}

In this section, we provide some details about the four real-world datasets on which we test all the portfolio selection strategies described in the previous section:
\begin{enumerate}
  \item FTSE100 (Financial Times Stock Exchange, UK), containing 80 assets and 3715 observations (Oct 2006-Dec 2020);
  \item NASDAQ 100 (National Association of Securities Dealers Automated Quotation, USA), containing 54 assets and 3715 observations (Oct 2006-Dec 2020);
  \item SP500 (Standard \& Poor's, USA), containing 336 assets and 3715 observations (Oct 2006-Dec 2020);
  \item FF49 (Fama \& French 49 Industry portfolios, USA), containing 49 portfolios considered as assets and using the subsample where all the returns of the 49 industries are available, namely from July 1969 to June 2023.
\end{enumerate}
The first three datasets are publicly available on the website %
\url{https://www.francescocesarone.com/data-sets} 
and consist of daily prices, adjusted for dividends and stock splits and obtained from Refinitiv \citep[see, for more details,][]{cesarone2022ESG}.
For the last dataset, we use daily returns downloaded from
\url{https://mba.tuck.dartmouth.edu/pages/faculty/ken.french/data_library.html#Research}. In this case, we consider as a benchmark the Equally-Weighted portfolio.

Our aim is to examine the performance of all models described above, using a rolling time windows
scheme of evaluation.
More precisely, we consider an in-sample time window of 6 months (i.e., 125 financial days).
We then evaluate the portfolio out-of-sample performance over the next month (i.e., 20 financial days). Next, we shift the in-sample window by 1 month to cover the out-of-sample period, recalculate the optimal portfolio relative to the new in-sample window, and repeat the operation.

\noindent
For each portfolio strategy, we calculate some standard performance measures recalled below \citep[see, e.g.,][]{cesarone2015linear,cesarone2016optimally,cesarone2020optimization,bruni2017exact,cesarone2017minimum}.
\begin{itemize}
  \item The average $\hat{\mu}^{out}$ (\textbf{ExpRet}) and the standard deviation $\hat{\sigma}^{out}$ (\textbf{Vol}) of the out-of-sample portfolio returns.
  \item The \textbf{Sharpe} ratio defined as $\hat{\mu}^{out}/\hat{\sigma}^{out}$ \citep{10.2307/2351741,sharpe1994sharpe}.
  \item The Maximum Drawdown \citep[\textbf{MDD}, see, e.g.,][and references therein]{chekhlov2005drawdown}.
 Denoting by $R_{\tau}^{out}$ the out-of-sample portfolio returns
for each portfolio strategy, we first consider the cumulative out-of-sample portfolio returns, which correspond to the values of wealth after $\tau $
periods $W_{\tau }=W_{\tau -1}(1+R_{\tau }^{out})$, where $\tau =M+1,\ldots ,\bar{T}$, $M=125$, $M$ is the length of the in-sample window, $\bar{T}$ is the number of observations available for a single dataset, and the initial wealth $W_{M}=1$.
We define the drawdowns as
\begin{equation*}
dd_{\tau }=\frac{W_{\tau }-\max_{M+1 \leq s \leq \tau }(W_{s})}{\max_{M+1 \leq s
\leq \tau }(W_{s})}. 
\end{equation*}
Note that the drawdowns $dd_{\tau }$ are obviously negative.
The Maximum Drawdown $Mdd$ corresponds to the worst drawdown over the entire
out-of-sample period, namely $Mdd=\min_{M+1 \leq \tau \leq T}(dd_{\tau })$.
\item The \textbf{Ulcer} Index, which is defined as the square root
of the mean of the squared percentage drawdowns $dd_{\tau }$ with $\tau
=M+1,\ldots ,T$
\begin{equation}
UI=\sqrt{\frac{\sum_{\tau=M+1}^{T}dd_{\tau}^{2}}{T-M}}.  \label{eq:UI}
\end{equation}
It evaluates the depth and the duration of drawdowns
in prices over the out-of-sample period \citep[see, e.g.,][]{martin1989investor}.
\item The \textbf{Sortino} ratio \citep{sortino2001managing}, defined as the ratio between the
average of $R^{out}$ and its downside deviation
$$
\frac{E[R^{out}-r_f]}{\sigma(\min\{R^{out}-r_f,\, 0\})} \, .
$$
\item The \textbf{Rachev} ratio with confidence levels of $5\%$ \citep{biglova2004different}, defined as the ratio between the average of the best $5\%$
values of $R^{out}$ (with the opposite sign) and that of the worst $5\%$ values of $R^{out}$
$$
\frac{C\!V\!a\!R_{5\%}( - R^{out})}{C\!V\!a\!R_{5\%}(R^{out})} \, .
$$
\item The \textbf{Turnover} \citep[see, e.g.,][]{Demiguel2009} evaluates the amount of trading required to perform in practice the portfolio strategy, and is defined as
\begin{equation*}
Turnover = \frac{1}{Q}\sum_{q=1}^{Q}\sum_{k=1}^{n}\mid x_{q,k}-x_{q-1,k}\mid ,
\end{equation*}
where $Q$ represents the number of rebalances, $x_{q,k}$ is the portfolio weight of asset $k$ after rebalancing, and $x_{q-1,k}$ is the portfolio weight before rebalancing at time $q$. Lower turnover values indicate better portfolio performance. We point out that this definition of portfolio turnover is a proxy of the effective one, since it evaluates only the amount
of trading generated by the models at each rebalance,
without considering the trades due to changes in asset
prices between one rebalance and the next. Thus, by definition,
the turnover of the EW portfolio is zero.
\item The Jensen's Alpha \citep[\textbf{AlphaJ},][]{jensen1968performance}, defined as the intercept of the line given by the linear regression of $R^{out}$ on $R_{I}^{out}$, namely
       $
        \alpha = E[R^{out}] - \beta E[R_{I}^{out}]
        $,
where $\beta=Cov(R^{out}, R_{I}^{out})/ \sigma^2(R_{I}^{out})$.
\item The Value-at-Risk of the out-of-sample portfolio returns $R_{\tau}^{out}$ with a confidence level equal to 1\% (\textbf{VaR1}).
\item The \textbf{Appraisal} ratio \citep{treynor1973use}, defined as the Jensen's Alpha divided by the standard deviation of the residual return $\epsilon_r$, namely:
$$
\frac{\alpha}{\sigma(\epsilon_r)} \, .
$$
where $\epsilon_r=R^{out} -\alpha - \beta E[R_{I}^{out}]$.
The larger is its value, the better is the portfolio performance.
This measure is particularly used by practitioners for evaluating the asset selection ability of a model.
It grasps the idea that a portfolio manager to generate $\alpha$ has to deviate from the benchmark, i.e., take residual risk $\sigma(\epsilon_r)$.
\item The Information ratio \citep[\textbf{Info},][]{goodwin1998information}, defined as the expected value of the difference between the out-of-sample portfolio return and that of the benchmark index divided by the standard deviation of such difference, namely
$$
\frac{E[R^{out}-R_{I}^{out}]}{\sigma(R^{out}-R_{I}^{out})} \, .
$$
The larger is its value, the better is the portfolio performance.
This measure is equivalent to the Appraisal ratio with $\beta=1$.
\item The \textbf{Omega} ratio, defined as
$
\displaystyle\frac{E[\max(0, R^{out} - \eta)]}{E[\min(0, R^{out} - \eta)]} \, ,
$
where $\eta$ is fixed to 0.
\item The average number of selected assets (\textbf{ave \#}) and running \textbf{time} in seconds.

\item The Return On Investment (\textbf{ROI}), namely the time-by-time return generated by each portfolio strategy over a specified time horizon $\Delta H$.
More precisely, ROI is defined as follows
\begin{equation}\label{eq:ROI}
ROI_{\tau }=\frac{W_{\tau }-W_{\tau - \Delta H}}{W_{\tau - \Delta H}} \qquad \tau = M + \Delta H + 1, \ldots, \bar{T}
\end{equation}
where $W_{\tau }=W_{\tau - \Delta H} \prod_{t=\tau - \Delta H +1}^{\tau} (1+R_{t}^{out})$ is the portfolio wealth generated by investing at the beginning of the time horizon
the amount of capital $W_{\tau - \Delta H}$.
\end{itemize}
We also test the statistical significance of our results. To this end, the significance of the difference between the Sharpe ratios of the selected portfolios and that of the benchmark is tested both using the bootstrapping and HAC methodologies as presented in \cite{Ledoit2008}.
Moreover, to test the statistical significance of the difference between the variances of the selected portfolios and that of the benchmark,
we use the bootstrapping method as proposed in \cite{ledoit2011robust}.
For both the bootstrap tests, we choose $M = 3000$ bootstrap repetitions and
block sizes $b =1, 10$.
Furthermore, we adopt a $t$-statistic as in \cite{goodwin1998information} to test the statistical significance of the positivity of the Information ratio.

\noindent
Tables \ref{tab:Perf_out_FTSE100_DL125_DH20_C}, \ref{tab:Perf_out_NASDAQ100_DL125_DH20_C}, \ref{tab:Perf_out_SP500_DL125_DH20_C}, and
\ref{tab:Perf_out_49IndPort_DL125_DH20_C} report the $p$-values in percentage
\begin{itemize}
  \item for the Sharpe ratio differences in the columns \textbf{Sharpe$\shortrightarrow$p-v/$b=1$}, \textbf{Sharpe$\shortrightarrow$p-v/$b=10$} (bootstrap test), and \textbf{Sharpe$\shortrightarrow$p-v/HAC} (HAC test);
  \item for the variance differences in the columns \textbf{Vol$\shortrightarrow$p-v/$b=1$} and \textbf{Vol$\shortrightarrow$p-v/$b=10$} (bootstrap test);
  \item for the positivity of the Information ratio in the column \textbf{InfoRatio$\shortrightarrow$p-v} ($t$-test).
\end{itemize}
Note that for FTSE100, NASDAQ100, and  SP500 the benchmark is represented by the Market index, while for FF49 by the Equally-Weighted portfolio.

\subsection{Computational results}

In this section, we report the out-of-sample results obtained by the OWA maximization of centered CVaRs approach, that are compared with those obtained by the Roman strategy with CVaRs and by the long-only, long-short, 1-norm, and 2-norm minimum variance portfolios.

\noindent
We point out that for the 1-norm constrained minimum variance portfolio we set $\delta_1=10^{-5}$, while for the 2-norm constrained minimum variance portfolio $\delta_2=10^{-3}$.
Furthermore, in the case of SP500, since the solution of the standard minimum variance model might not be unique, we solve the 2-norm minimum variance model regularizing the covariance matrix with $\delta_2=10^{-6}$.

\noindent
Appendix \ref{sec:OWATails} provides some additional computational results that compare the OWA maximization of centered Tails approach with that of Kopa and Post, which can be interpreted by means of Tails of portfolio returns, and, again, the MinV portfolio.

\noindent
To simplify the reading of the results reported in the following tables, for each dataset, we show with different colors the rank of the performance results of the proposed models. More specifically, for each column corresponding to a specific performance measure the colors span
from deep-green to deep-red, where deep-green represents
the best value while deep-red the worst
one.

In Tables \ref{tab:Perf_out_FTSE100_DL125_DH20_C}, \ref{tab:Perf_out_NASDAQ100_DL125_DH20_C},  \ref{tab:Perf_out_SP500_DL125_DH20_C}, and
\ref{tab:Perf_out_49IndPort_DL125_DH20_C}
we provide the out-of-sample performance results obtained by
the SD-based strategies using CVaRs, that are
compared with the long-only, long-short, 1-norm, 2-norm minimum variance portfolios and the Market Index on FTSE100, NASDAQ100, SP500, and FF49, respectively.
We can observe that the $k$-OWA-CVaR-$\beta$ and $k$-OWA-CCVaR-$\beta$ portfolios with $\beta=5\%, 10\%$ typically show the best performance in terms of \textbf{ExpRet}, \textbf{Sharpe}, \textbf{Sortino}, \textbf{Rachev}, \textbf{AlphaJ}, \textbf{Info}, \textbf{Appraisal}, \textbf{Omega}.
We find that on FTSE100 and FF49, the
Sharpe and Information ratios obtained with the $k$-OWA-CVaR-$\beta$ and $k$-OWA-CCVaR-$\beta$ portfolios ($\beta=5\%, 10\%$) are statistically
different from the benchmark,
while statistically significant differences are not realized in
the other investment universes.
These strategies also seem to provide slightly better performance than the Roman-CVaR portfolio.
On the other hand, the best values in terms of risk, namely \textbf{Vol}, \textbf{MDD}, \textbf{Ulcer}, and \textbf{VaR1}, are shown by the $k$-OWA-CVaR-$\beta$ and $k$-OWA-CCVaR-$\beta$ portfolios with $\beta=75\%, 100\%$, and by the MinV, MinVN1, and MinVN2 portfolios.
We observe that on all the datasets analyzed, the
variances obtained with the $k$-OWA-CVaR-$\beta$ and $k$-OWA-CCVaR-$\beta$ portfolios ($\beta=75\%, 100\%$), and with the MinV, MinVN1, and MinVN2 portfolios are statistically different from the benchmark.
Note that by increasing the values of $\beta$ for the $k$-OWA-CVaR-$\beta$ and $k$-OWA-CCVaR-$\beta$ portfolios, an investor can mitigate risk similarly to the minimum variance portfolios.
The best performance in terms of Turnover are typically achieved by the MinV and MinVN2 portfolios and by the $k$-OWA-CVaR-$\beta$ and $k$-OWA-CCVaR-$\beta$ portfolios with $\beta=75\%, 100\%$.
\begin{sidewaystable}[htbp]
  \centering
  \caption{Out-of-sample performance results for FTSE100}
  \scalebox{0.63}{
  {\renewcommand{\arraystretch}{1.1}
        \begin{tabular}{|l|c|c|c|c|c|r|r|r|c|c|c|c|c|c|c|c|c|c|c|}
    \hline
    \textbf{Approach} & \textbf{ExpRet} & \multicolumn{3}{c|}{\textbf{Vol}} & \multicolumn{4}{c|}{\textbf{Sharpe}} & \textbf{MDD} & \textbf{Ulcer} & \textbf{Sortino} & \textbf{Rachev5} & \textbf{Turn} & \textbf{AlphaJ} & \multicolumn{2}{c|}{\textbf{InfoRatio}} & \textbf{ApprRatio} & \textbf{VaR1} & \textbf{Omega} \\
    \hline
          &       & \textbf{value} & \multicolumn{1}{l|}{\textbf{p-v/b=1}} & \multicolumn{1}{l|}{\textbf{p-v/b=10}} & \textbf{value} & \multicolumn{1}{l|}{\textbf{p-v/b=1}} & \multicolumn{1}{l|}{\textbf{p-v/b=10}} & \multicolumn{1}{l|}{\textbf{p-v/HAC}} &       &       &       &       &       &       & \textbf{value} & \textbf{p-v} &       &       &  \\
    \hline
    \textbf{Roman-CVaR} & \cellcolor[rgb]{ .824,  .875,  .51}0.0490\% & \cellcolor[rgb]{ .973,  .412,  .42}1.3023\% & \multicolumn{1}{r|}{22.26\%} & \multicolumn{1}{r|}{17.39\%} & \cellcolor[rgb]{ .996,  .863,  .506}0.0377 & 1.63\% & 1.63\% & 1.37\% & \cellcolor[rgb]{ .98,  .616,  .459}-0.4736 & \cellcolor[rgb]{ .996,  .816,  .498}0.1382 & \cellcolor[rgb]{ .996,  .875,  .506}0.0549 & \cellcolor[rgb]{ .427,  .757,  .486}0.9897 & \cellcolor[rgb]{ 1,  .914,  .518}1.0020 & \cellcolor[rgb]{ .871,  .886,  .514}0.0428\% & \cellcolor[rgb]{ .996,  .851,  .502}0.0418 & \multicolumn{1}{r|}{0.62\%} & \cellcolor[rgb]{ .992,  .816,  .494}0.0461 & \cellcolor[rgb]{ .973,  .412,  .42}0.0357 & \cellcolor[rgb]{ .996,  .875,  .506}1.1214 \\
    \hline
    \textbf{k-OWA-CVaR-5} & \cellcolor[rgb]{ .451,  .765,  .486}0.0537\% & \cellcolor[rgb]{ .98,  .49,  .435}1.2828\% & \multicolumn{1}{r|}{32.69\%} & \multicolumn{1}{r|}{27.32\%} & \cellcolor[rgb]{ .953,  .91,  .518}0.0419 & 0.40\% & 0.70\% & 0.49\% & \cellcolor[rgb]{ .996,  .851,  .502}-0.4521 & \cellcolor[rgb]{ 1,  .91,  .518}0.1262 & \cellcolor[rgb]{ .859,  .882,  .51}0.0612 & \cellcolor[rgb]{ .541,  .792,  .494}0.9829 & \cellcolor[rgb]{ 1,  .914,  .518}0.9786 & \cellcolor[rgb]{ .463,  .769,  .49}0.0475\% & \cellcolor[rgb]{ 1,  .922,  .518}0.0470 & \multicolumn{1}{r|}{0.25\%} & \cellcolor[rgb]{ .996,  .898,  .51}0.0520 & \cellcolor[rgb]{ .98,  .502,  .439}0.0351 & \cellcolor[rgb]{ .843,  .878,  .51}1.1357 \\
    \hline
    \textbf{k-OWA-CVaR-10} & \cellcolor[rgb]{ .424,  .757,  .486}0.0541\% & \cellcolor[rgb]{ .984,  .569,  .451}1.2622\% & \multicolumn{1}{r|}{47.62\%} & \multicolumn{1}{r|}{46.08\%} & \cellcolor[rgb]{ .827,  .875,  .51}0.0428 & 0.50\% & 0.50\% & 0.31\% & \cellcolor[rgb]{ .992,  .922,  .518}-0.4448 & \cellcolor[rgb]{ .996,  .918,  .514}0.1242 & \cellcolor[rgb]{ .694,  .835,  .502}0.0631 & \cellcolor[rgb]{ .451,  .765,  .486}0.9884 & \cellcolor[rgb]{ 1,  .918,  .518}0.9728 & \cellcolor[rgb]{ .424,  .757,  .486}0.0480\% & \cellcolor[rgb]{ .906,  .898,  .514}0.0480 & \multicolumn{1}{r|}{0.21\%} & \cellcolor[rgb]{ .996,  .918,  .514}0.0535 & \cellcolor[rgb]{ .984,  .627,  .463}0.0342 & \cellcolor[rgb]{ .71,  .839,  .502}1.1392 \\
    \hline
    \textbf{k-OWA-CVaR-25} & \cellcolor[rgb]{ .498,  .776,  .49}0.0531\% & \cellcolor[rgb]{ .996,  .827,  .502}1.1948\% & \multicolumn{1}{r|}{72.51\%} & \multicolumn{1}{r|}{73.04\%} & \cellcolor[rgb]{ .616,  .812,  .498}0.0445 & 0.10\% & 0.43\% & 0.14\% & \cellcolor[rgb]{ .996,  .878,  .506}-0.4497 & \cellcolor[rgb]{ .925,  .898,  .51}0.1203 & \cellcolor[rgb]{ .549,  .792,  .494}0.0647 & \cellcolor[rgb]{ .557,  .796,  .494}0.9821 & \cellcolor[rgb]{ 1,  .922,  .518}0.9084 & \cellcolor[rgb]{ .498,  .776,  .49}0.0471\% & \cellcolor[rgb]{ .627,  .816,  .498}0.0509 & \multicolumn{1}{r|}{0.12\%} & \cellcolor[rgb]{ .69,  .835,  .502}0.0579 & \cellcolor[rgb]{ .996,  .82,  .498}0.0329 & \cellcolor[rgb]{ .541,  .788,  .494}1.1436 \\
    \hline
    \textbf{k-OWA-CVaR-50} & \cellcolor[rgb]{ .992,  .922,  .518}0.0470\% & \cellcolor[rgb]{ .792,  .859,  .502}1.0698\% & \multicolumn{1}{r|}{0.03\%} & \multicolumn{1}{r|}{0.03\%} & \cellcolor[rgb]{ .69,  .835,  .502}0.0439 & 0.13\% & 0.23\% & 0.11\% & \cellcolor[rgb]{ .878,  .886,  .514}-0.4326 & \cellcolor[rgb]{ .804,  .863,  .506}0.1136 & \cellcolor[rgb]{ .776,  .859,  .506}0.0621 & \cellcolor[rgb]{ .996,  .906,  .514}0.9541 & \cellcolor[rgb]{ .91,  .894,  .51}0.8198 & \cellcolor[rgb]{ 1,  .922,  .518}0.0413\% & \cellcolor[rgb]{ .741,  .847,  .506}0.0497 & \multicolumn{1}{r|}{0.15\%} & \cellcolor[rgb]{ .471,  .769,  .49}0.0610 & \cellcolor[rgb]{ .784,  .859,  .502}0.0298 & \cellcolor[rgb]{ .765,  .855,  .506}1.1378 \\
    \hline
    \textbf{k-OWA-CVaR-75} & \cellcolor[rgb]{ .992,  .812,  .494}0.0385\% & \cellcolor[rgb]{ .655,  .82,  .494}1.0037\% & \multicolumn{1}{r|}{0.03\%} & \multicolumn{1}{r|}{0.03\%} & \cellcolor[rgb]{ .996,  .875,  .506}0.0384 & 0.47\% & 0.43\% & 0.42\% & \cellcolor[rgb]{ .851,  .878,  .51}-0.4295 & \cellcolor[rgb]{ .914,  .894,  .51}0.1196 & \cellcolor[rgb]{ .996,  .863,  .506}0.0539 & \cellcolor[rgb]{ .996,  .851,  .502}0.9458 & \cellcolor[rgb]{ .851,  .878,  .506}0.7601 & \cellcolor[rgb]{ .988,  .769,  .486}0.0331\% & \cellcolor[rgb]{ .992,  .835,  .498}0.0404 & \multicolumn{1}{r|}{0.78\%} & \cellcolor[rgb]{ 1,  .922,  .518}0.0536 & \cellcolor[rgb]{ .612,  .808,  .494}0.0279 & \cellcolor[rgb]{ .996,  .859,  .506}1.1187 \\
    \hline
    \textbf{k-OWA-CVaR-100} & \cellcolor[rgb]{ .992,  .808,  .494}0.0384\% & \cellcolor[rgb]{ .561,  .792,  .49}0.9591\% & \multicolumn{1}{r|}{0.03\%} & \multicolumn{1}{r|}{0.03\%} & \cellcolor[rgb]{ .996,  .898,  .51}0.0400 & 0.40\% & 0.37\% & 0.37\% & \cellcolor[rgb]{ .612,  .812,  .498}-0.4040 & \cellcolor[rgb]{ .663,  .824,  .498}0.1060 & \cellcolor[rgb]{ .996,  .886,  .51}0.0561 & \cellcolor[rgb]{ .992,  .816,  .494}0.9401 & \cellcolor[rgb]{ .8,  .863,  .506}0.7093 & \cellcolor[rgb]{ .988,  .773,  .486}0.0333\% & \cellcolor[rgb]{ .992,  .827,  .498}0.0397 & \multicolumn{1}{r|}{0.87\%} & \cellcolor[rgb]{ .863,  .882,  .51}0.0555 & \cellcolor[rgb]{ .478,  .769,  .486}0.0263 & \cellcolor[rgb]{ .996,  .882,  .51}1.1235 \\
    \hline
    \textbf{k-OWA-CCVaR-5} & \cellcolor[rgb]{ .443,  .765,  .486}0.0538\% & \cellcolor[rgb]{ .976,  .478,  .435}1.2853\% & \multicolumn{1}{r|}{30.09\%} & \multicolumn{1}{r|}{30.49\%} & \cellcolor[rgb]{ .953,  .91,  .518}0.0419 & 0.50\% & 0.50\% & 0.50\% & \cellcolor[rgb]{ .992,  .792,  .49}-0.4573 & \cellcolor[rgb]{ 1,  .89,  .514}0.1285 & \cellcolor[rgb]{ .847,  .878,  .51}0.0613 & \cellcolor[rgb]{ .388,  .745,  .482}0.9919 & \cellcolor[rgb]{ 1,  .914,  .518}0.9828 & \cellcolor[rgb]{ .455,  .765,  .486}0.0476\% & \cellcolor[rgb]{ .98,  .918,  .518}0.0472 & \multicolumn{1}{r|}{0.24\%} & \cellcolor[rgb]{ .996,  .898,  .51}0.0521 & \cellcolor[rgb]{ .98,  .529,  .443}0.0349 & \cellcolor[rgb]{ .847,  .878,  .51}1.1356 \\
    \hline
    \textbf{k-OWA-CCVaR-10} & \cellcolor[rgb]{ .388,  .745,  .482}0.0545\% & \cellcolor[rgb]{ .98,  .518,  .443}1.2753\% & \multicolumn{1}{r|}{36.75\%} & \multicolumn{1}{r|}{33.52\%} & \cellcolor[rgb]{ .839,  .878,  .51}0.0427 & 0.47\% & 0.40\% & 0.36\% & \cellcolor[rgb]{ .996,  .91,  .514}-0.4469 & \cellcolor[rgb]{ 1,  .922,  .518}0.1244 & \cellcolor[rgb]{ .737,  .847,  .506}0.0626 & \cellcolor[rgb]{ .549,  .792,  .494}0.9825 & \cellcolor[rgb]{ 1,  .914,  .518}0.9768 & \cellcolor[rgb]{ .388,  .745,  .482}0.0484\% & \cellcolor[rgb]{ .914,  .898,  .514}0.0479 & \multicolumn{1}{r|}{0.21\%} & \cellcolor[rgb]{ .996,  .914,  .514}0.0532 & \cellcolor[rgb]{ .98,  .506,  .439}0.0351 & \cellcolor[rgb]{ .733,  .847,  .506}1.1385 \\
    \hline
    \textbf{k-OWA-CCVaR-25} & \cellcolor[rgb]{ .424,  .757,  .486}0.0541\% & \cellcolor[rgb]{ .988,  .671,  .471}1.2351\% & \multicolumn{1}{r|}{69.51\%} & \multicolumn{1}{r|}{68.44\%} & \cellcolor[rgb]{ .706,  .839,  .502}0.0438 & 0.23\% & 0.40\% & 0.23\% & \cellcolor[rgb]{ .996,  .855,  .502}-0.4518 & \cellcolor[rgb]{ 1,  .91,  .518}0.1261 & \cellcolor[rgb]{ .6,  .808,  .498}0.0642 & \cellcolor[rgb]{ .478,  .773,  .49}0.9867 & \cellcolor[rgb]{ 1,  .922,  .518}0.9401 & \cellcolor[rgb]{ .42,  .757,  .486}0.0480\% & \cellcolor[rgb]{ .773,  .859,  .506}0.0494 & \multicolumn{1}{r|}{0.16\%} & \cellcolor[rgb]{ .867,  .882,  .51}0.0555 & \cellcolor[rgb]{ .988,  .671,  .471}0.0340 & \cellcolor[rgb]{ .604,  .808,  .498}1.1420 \\
    \hline
    \textbf{k-OWA-CCVaR-50} & \cellcolor[rgb]{ .518,  .784,  .49}0.0529\% & \cellcolor[rgb]{ .945,  .906,  .514}1.1448\% & \multicolumn{1}{r|}{7.53\%} & \multicolumn{1}{r|}{6.43\%} & \cellcolor[rgb]{ .388,  .745,  .482}0.0462 & 0.17\% & 0.10\% & 0.08\% & \cellcolor[rgb]{ .902,  .894,  .514}-0.4349 & \cellcolor[rgb]{ .769,  .855,  .502}0.1117 & \cellcolor[rgb]{ .388,  .745,  .482}0.0666 & \cellcolor[rgb]{ .667,  .827,  .502}0.9757 & \cellcolor[rgb]{ .961,  .91,  .514}0.8708 & \cellcolor[rgb]{ .51,  .78,  .49}0.0470\% & \cellcolor[rgb]{ .388,  .745,  .482}0.0535 & \multicolumn{1}{r|}{0.07\%} & \cellcolor[rgb]{ .388,  .745,  .482}0.0622 & \cellcolor[rgb]{ .937,  .902,  .514}0.0315 & \cellcolor[rgb]{ .388,  .745,  .482}1.1476 \\
    \hline
    \textbf{k-OWA-CCVaR-75} & \cellcolor[rgb]{ .996,  .918,  .514}0.0467\% & \cellcolor[rgb]{ .792,  .859,  .502}1.0705\% & \multicolumn{1}{r|}{0.03\%} & \multicolumn{1}{r|}{0.03\%} & \cellcolor[rgb]{ .725,  .843,  .502}0.0436 & 0.10\% & 0.20\% & 0.13\% & \cellcolor[rgb]{ .898,  .894,  .514}-0.4348 & \cellcolor[rgb]{ .835,  .871,  .506}0.1154 & \cellcolor[rgb]{ .808,  .867,  .51}0.0617 & \cellcolor[rgb]{ .996,  .91,  .514}0.9542 & \cellcolor[rgb]{ .902,  .89,  .51}0.8111 & \cellcolor[rgb]{ .996,  .914,  .514}0.0410\% & \cellcolor[rgb]{ .792,  .863,  .506}0.0492 & \multicolumn{1}{r|}{0.16\%} & \cellcolor[rgb]{ .514,  .784,  .49}0.0604 & \cellcolor[rgb]{ .831,  .871,  .506}0.0303 & \cellcolor[rgb]{ .8,  .867,  .51}1.1368 \\
    \hline
    \textbf{k-OWA-CCVaR-100} & \cellcolor[rgb]{ .996,  .851,  .502}0.0416\% & \cellcolor[rgb]{ .671,  .824,  .498}1.0116\% & \multicolumn{1}{r|}{0.03\%} & \multicolumn{1}{r|}{0.03\%} & \cellcolor[rgb]{ .996,  .914,  .514}0.0411 & 0.27\% & 0.40\% & 0.23\% & \cellcolor[rgb]{ .843,  .878,  .51}-0.4286 & \cellcolor[rgb]{ .875,  .882,  .51}0.1175 & \cellcolor[rgb]{ .996,  .902,  .514}0.0578 & \cellcolor[rgb]{ .996,  .867,  .506}0.9477 & \cellcolor[rgb]{ .851,  .878,  .506}0.7606 & \cellcolor[rgb]{ .992,  .827,  .498}0.0362\% & \cellcolor[rgb]{ .996,  .878,  .51}0.0439 & \multicolumn{1}{r|}{0.44\%} & \cellcolor[rgb]{ .737,  .847,  .506}0.0573 & \cellcolor[rgb]{ .635,  .816,  .494}0.0281 & \cellcolor[rgb]{ .996,  .902,  .514}1.1275 \\
    \hline
    \textbf{MinV} & \cellcolor[rgb]{ .992,  .78,  .49}0.0362\% & \cellcolor[rgb]{ .494,  .773,  .486}0.9254\% & \multicolumn{1}{r|}{0.03\%} & \multicolumn{1}{r|}{0.03\%} & \cellcolor[rgb]{ .996,  .886,  .51}0.0391 & 0.13\% & 0.33\% & 0.13\% & \cellcolor[rgb]{ .388,  .745,  .482}-0.3802 & \cellcolor[rgb]{ .388,  .745,  .482}0.0909 & \cellcolor[rgb]{ .996,  .867,  .506}0.0545 & \cellcolor[rgb]{ .973,  .914,  .518}0.9577 & \cellcolor[rgb]{ .678,  .827,  .498}0.5836 & \cellcolor[rgb]{ .988,  .729,  .478}0.0310\% & \cellcolor[rgb]{ .992,  .839,  .498}0.0407 & \multicolumn{1}{r|}{0.75\%} & \cellcolor[rgb]{ .584,  .804,  .494}0.0595 & \cellcolor[rgb]{ .541,  .788,  .49}0.0271 & \cellcolor[rgb]{ .996,  .875,  .506}1.1215 \\
    \hline
    \textbf{MinVsh} & \cellcolor[rgb]{ .984,  .651,  .463}0.0266\% & \cellcolor[rgb]{ .98,  .529,  .443}1.2719\% & \multicolumn{1}{r|}{13.53\%} & \multicolumn{1}{r|}{14.20\%} & \cellcolor[rgb]{ .98,  .616,  .459}0.0209 & 48.32\% & 49.05\% & 47.75\% & \cellcolor[rgb]{ .973,  .412,  .42}-0.4923 & \cellcolor[rgb]{ .976,  .451,  .427}0.1848 & \cellcolor[rgb]{ .98,  .608,  .455}0.0289 & \cellcolor[rgb]{ .992,  .827,  .498}0.9422 & \cellcolor[rgb]{ .973,  .412,  .42}5.1897 & \cellcolor[rgb]{ .98,  .604,  .455}0.0241\% & \cellcolor[rgb]{ .973,  .471,  .427}0.0123 & \multicolumn{1}{r|}{23.16\%} & \cellcolor[rgb]{ .973,  .447,  .424}0.0197 & \cellcolor[rgb]{ .98,  .486,  .435}0.0352 & \cellcolor[rgb]{ .98,  .592,  .451}1.0603 \\
    \hline
    \textbf{MinVN1} & \cellcolor[rgb]{ .976,  .522,  .439}0.0166\% & \cellcolor[rgb]{ .463,  .765,  .486}0.9100\% & \multicolumn{1}{r|}{0.03\%} & \multicolumn{1}{r|}{0.03\%} & \cellcolor[rgb]{ .98,  .576,  .451}0.0183 & 49.08\% & 48.52\% & 47.85\% & \cellcolor[rgb]{ .976,  .529,  .439}-0.4815 & \cellcolor[rgb]{ .973,  .412,  .42}0.1895 & \cellcolor[rgb]{ .98,  .565,  .447}0.0249 & \cellcolor[rgb]{ .98,  .584,  .451}0.9055 & \cellcolor[rgb]{ 1,  .882,  .514}1.2402 & \cellcolor[rgb]{ .973,  .412,  .42}0.0134\% & \cellcolor[rgb]{ .973,  .412,  .42}0.0077 & \multicolumn{1}{r|}{32.25\%} & \cellcolor[rgb]{ .973,  .412,  .42}0.0172 & \cellcolor[rgb]{ .475,  .769,  .486}0.0263 & \cellcolor[rgb]{ .98,  .565,  .447}1.0547 \\
    \hline
    \textbf{MinVN2} & \cellcolor[rgb]{ .98,  .557,  .447}0.0195\% & \cellcolor[rgb]{ .388,  .745,  .482}0.8738\% & \multicolumn{1}{r|}{0.03\%} & \multicolumn{1}{r|}{0.03\%} & \cellcolor[rgb]{ .984,  .635,  .463}0.0223 & 14.46\% & 14.43\% & 14.08\% & \cellcolor[rgb]{ .867,  .882,  .51}-0.4312 & \cellcolor[rgb]{ 1,  .851,  .506}0.1335 & \cellcolor[rgb]{ .98,  .62,  .459}0.0301 & \cellcolor[rgb]{ .973,  .412,  .42}0.8793 & \cellcolor[rgb]{ .388,  .745,  .482}0.2878 & \cellcolor[rgb]{ .973,  .431,  .424}0.0147\% & \cellcolor[rgb]{ .976,  .51,  .435}0.0153 & \multicolumn{1}{r|}{17.94\%} & \cellcolor[rgb]{ .98,  .565,  .447}0.0282 & \cellcolor[rgb]{ .388,  .745,  .482}0.0253 & \cellcolor[rgb]{ .98,  .62,  .459}1.0666 \\
    \hline
    \textbf{Index} & \cellcolor[rgb]{ .973,  .412,  .42}0.0083\% & \cellcolor[rgb]{ .992,  .761,  .49}1.2117\% &       &       & \cellcolor[rgb]{ .973,  .412,  .42}0.0069 &       &       &       & \cellcolor[rgb]{ .98,  .565,  .447}-0.4783 & \cellcolor[rgb]{ .988,  .682,  .475}0.1549 & \cellcolor[rgb]{ .973,  .412,  .42}0.0095 & \cellcolor[rgb]{ .988,  .718,  .478}0.9257 &       &       &       &       &       & \cellcolor[rgb]{ .984,  .588,  .455}0.0345 & \cellcolor[rgb]{ .973,  .412,  .42}1.0210 \\
    \hline
    \end{tabular}%
    }
    }
  \label{tab:Perf_out_FTSE100_DL125_DH20_C}%
\end{sidewaystable}%
%
%
\begin{sidewaystable}[htbp]
  \centering
  \caption{Out-of-sample performance results for NASDAQ100}
  \scalebox{0.63}{
  {\renewcommand{\arraystretch}{1.1}
            \begin{tabular}{|l|c|c|c|c|c|r|r|r|c|c|c|c|c|c|c|c|c|c|c|}
    \hline
    \textbf{Approach} & \textbf{ExpRet} & \multicolumn{3}{c|}{\textbf{Vol}} & \multicolumn{4}{c|}{\textbf{Sharpe}} & \textbf{MDD} & \textbf{Ulcer} & \textbf{Sortino} & \textbf{Rachev5} & \textbf{Turn} & \textbf{AlphaJ} & \multicolumn{2}{c|}{\textbf{InfoRatio}} & \textbf{ApprRatio} & \textbf{VaR1} & \textbf{Omega} \\
    \hline
          &       & \textbf{Value} & \multicolumn{1}{l|}{\textbf{p-v/b=1}} & \multicolumn{1}{l|}{\textbf{p-v/b=10}} & \textbf{value} & \multicolumn{1}{l|}{\textbf{p-v/b=1}} & \multicolumn{1}{l|}{\textbf{p-v/b=10}} & \multicolumn{1}{l|}{\textbf{p-v/HAC}} &       &       &       &       &       &       & \textbf{value} & \textbf{p-v} &       &       &  \\
    \hline
    \textbf{Roman-CVaR} & \cellcolor[rgb]{ .392,  .749,  .486}0.0770\% & \cellcolor[rgb]{ .973,  .412,  .42}1.4215\% & \multicolumn{1}{r|}{61.01\%} & \multicolumn{1}{r|}{63.48\%} & \cellcolor[rgb]{ .455,  .765,  .486}0.0541 & 44.05\% & 44.05\% & 43.66\% & \cellcolor[rgb]{ .996,  .855,  .502}-0.4041 & \cellcolor[rgb]{ 1,  .91,  .518}0.1135 & \cellcolor[rgb]{ .463,  .769,  .49}0.0780 & \cellcolor[rgb]{ .584,  .804,  .494}0.9674 & \cellcolor[rgb]{ 1,  .91,  .518}0.9220 & \cellcolor[rgb]{ .408,  .753,  .486}0.0238\% & \cellcolor[rgb]{ .4,  .749,  .486}0.0136 & \multicolumn{1}{r|}{20.76\%} & \cellcolor[rgb]{ .459,  .765,  .486}0.0286 & \cellcolor[rgb]{ .973,  .412,  .42}0.0407 & \cellcolor[rgb]{ .463,  .769,  .49}1.1733 \\
    \hline
    \textbf{k-OWA-CVaR-5} & \cellcolor[rgb]{ .388,  .745,  .482}0.0771\% & \cellcolor[rgb]{ .976,  .471,  .431}1.4016\% & \multicolumn{1}{r|}{73.18\%} & \multicolumn{1}{r|}{73.44\%} & \cellcolor[rgb]{ .388,  .745,  .482}0.0550 & 38.19\% & 36.49\% & 37.46\% & \cellcolor[rgb]{ .992,  .827,  .498}-0.4123 & \cellcolor[rgb]{ .992,  .918,  .514}0.1114 & \cellcolor[rgb]{ .388,  .745,  .482}0.0795 & \cellcolor[rgb]{ .388,  .745,  .482}0.9750 & \cellcolor[rgb]{ 1,  .914,  .518}0.9166 & \cellcolor[rgb]{ .388,  .745,  .482}0.0242\% & \cellcolor[rgb]{ .388,  .745,  .482}0.0141 & \multicolumn{1}{r|}{19.91\%} & \cellcolor[rgb]{ .388,  .745,  .482}0.0299 & \cellcolor[rgb]{ .984,  .588,  .455}0.0393 & \cellcolor[rgb]{ .388,  .745,  .482}1.1759 \\
    \hline
    \textbf{k-OWA-CVaR-10} & \cellcolor[rgb]{ .533,  .788,  .494}0.0721\% & \cellcolor[rgb]{ .98,  .549,  .447}1.3743\% & \multicolumn{1}{r|}{9.46\%} & \multicolumn{1}{r|}{10.13\%} & \cellcolor[rgb]{ .588,  .804,  .494}0.0524 & 51.88\% & 52.85\% & 51.81\% & \cellcolor[rgb]{ .996,  .871,  .506}-0.3992 & \cellcolor[rgb]{ 1,  .894,  .514}0.1155 & \cellcolor[rgb]{ .584,  .804,  .494}0.0756 & \cellcolor[rgb]{ .573,  .8,  .494}0.9679 & \cellcolor[rgb]{ 1,  .914,  .518}0.9196 & \cellcolor[rgb]{ .584,  .804,  .494}0.0201\% & \cellcolor[rgb]{ .518,  .784,  .49}0.0084 & \multicolumn{1}{r|}{30.81\%} & \cellcolor[rgb]{ .608,  .812,  .498}0.0255 & \cellcolor[rgb]{ .988,  .682,  .475}0.0386 & \cellcolor[rgb]{ .631,  .816,  .498}1.1672 \\
    \hline
    \textbf{k-OWA-CVaR-25} & \cellcolor[rgb]{ .784,  .863,  .506}0.0634\% & \cellcolor[rgb]{ .988,  .694,  .475}1.3221\% & \multicolumn{1}{r|}{0.03\%} & \multicolumn{1}{r|}{0.03\%} & \cellcolor[rgb]{ .945,  .906,  .518}0.0479 & 86.34\% & 85.87\% & 85.85\% & \cellcolor[rgb]{ .914,  .898,  .514}-0.3791 & \cellcolor[rgb]{ 1,  .922,  .518}0.1119 & \cellcolor[rgb]{ .941,  .906,  .518}0.0685 & \cellcolor[rgb]{ .922,  .902,  .514}0.9541 & \cellcolor[rgb]{ 1,  .922,  .518}0.8848 & \cellcolor[rgb]{ .914,  .898,  .514}0.0131\% & \cellcolor[rgb]{ .745,  .851,  .506}-0.0021 & \multicolumn{1}{r|}{54.96\%} & \cellcolor[rgb]{ 1,  .922,  .518}0.0175 & \cellcolor[rgb]{ .992,  .737,  .482}0.0381 & \cellcolor[rgb]{ 1,  .922,  .518}1.1538 \\
    \hline
    \textbf{k-OWA-CVaR-50} & \cellcolor[rgb]{ .996,  .871,  .506}0.0532\% & \cellcolor[rgb]{ .898,  .89,  .51}1.1964\% & \multicolumn{1}{r|}{0.03\%} & \multicolumn{1}{r|}{0.03\%} & \cellcolor[rgb]{ .996,  .855,  .502}0.0445 & 86.50\% & 87.17\% & 86.43\% & \cellcolor[rgb]{ .996,  .914,  .514}-0.3865 & \cellcolor[rgb]{ 1,  .906,  .518}0.1141 & \cellcolor[rgb]{ .996,  .851,  .502}0.0631 & \cellcolor[rgb]{ .984,  .682,  .471}0.9322 & \cellcolor[rgb]{ .906,  .894,  .51}0.7893 & \cellcolor[rgb]{ .976,  .486,  .431}0.0076\% & \cellcolor[rgb]{ .996,  .875,  .506}-0.0149 & \multicolumn{1}{r|}{81.33\%} & \cellcolor[rgb]{ .98,  .624,  .459}0.0113 & \cellcolor[rgb]{ .878,  .886,  .51}0.0348 & \cellcolor[rgb]{ .996,  .851,  .502}1.1436 \\
    \hline
    \textbf{k-OWA-CVaR-75} & \cellcolor[rgb]{ .992,  .824,  .498}0.0509\% & \cellcolor[rgb]{ .698,  .831,  .498}1.1073\% & \multicolumn{1}{r|}{0.03\%} & \multicolumn{1}{r|}{0.03\%} & \cellcolor[rgb]{ .996,  .89,  .51}0.0459 & 98.33\% & 98.20\% & 98.19\% & \cellcolor[rgb]{ .514,  .784,  .49}-0.3558 & \cellcolor[rgb]{ .624,  .812,  .494}0.0945 & \cellcolor[rgb]{ .996,  .886,  .51}0.0652 & \cellcolor[rgb]{ .984,  .667,  .467}0.9310 & \cellcolor[rgb]{ .816,  .867,  .506}0.6941 & \cellcolor[rgb]{ .984,  .659,  .467}0.0091\% & \cellcolor[rgb]{ .988,  .769,  .486}-0.0175 & \multicolumn{1}{r|}{85.20\%} & \cellcolor[rgb]{ .988,  .765,  .486}0.0142 & \cellcolor[rgb]{ .667,  .824,  .498}0.0315 & \cellcolor[rgb]{ .996,  .89,  .51}1.1493 \\
    \hline
    \textbf{k-OWA-CVaR-100} & \cellcolor[rgb]{ .992,  .816,  .494}0.0504\% & \cellcolor[rgb]{ .553,  .792,  .49}1.0427\% & \multicolumn{1}{r|}{0.03\%} & \multicolumn{1}{r|}{0.03\%} & \cellcolor[rgb]{ .91,  .898,  .514}0.0484 & 84.34\% & 84.11\% & 83.32\% & \cellcolor[rgb]{ .408,  .753,  .486}-0.3495 & \cellcolor[rgb]{ .388,  .745,  .482}0.0836 & \cellcolor[rgb]{ .918,  .898,  .514}0.0690 & \cellcolor[rgb]{ .929,  .902,  .514}0.9538 & \cellcolor[rgb]{ .729,  .843,  .502}0.6033 & \cellcolor[rgb]{ .969,  .914,  .518}0.0120\% & \cellcolor[rgb]{ .992,  .78,  .49}-0.0172 & \multicolumn{1}{r|}{84.80\%} & \cellcolor[rgb]{ .922,  .902,  .514}0.0191 & \cellcolor[rgb]{ .58,  .8,  .49}0.0301 & \cellcolor[rgb]{ .902,  .894,  .514}1.1574 \\
    \hline
    \textbf{k-OWA-CCVaR-5} & \cellcolor[rgb]{ .404,  .749,  .486}0.0766\% & \cellcolor[rgb]{ .976,  .451,  .427}1.4083\% & \multicolumn{1}{r|}{96.57\%} & \multicolumn{1}{r|}{96.70\%} & \cellcolor[rgb]{ .435,  .761,  .486}0.0544 & 40.85\% & 38.69\% & 41.01\% & \cellcolor[rgb]{ .992,  .835,  .498}-0.4092 & \cellcolor[rgb]{ .965,  .91,  .514}0.1102 & \cellcolor[rgb]{ .431,  .761,  .486}0.0786 & \cellcolor[rgb]{ .408,  .753,  .486}0.9743 & \cellcolor[rgb]{ 1,  .914,  .518}0.9178 & \cellcolor[rgb]{ .416,  .753,  .486}0.0236\% & \cellcolor[rgb]{ .404,  .753,  .486}0.0135 & \multicolumn{1}{r|}{20.99\%} & \cellcolor[rgb]{ .439,  .761,  .486}0.0290 & \cellcolor[rgb]{ .98,  .561,  .451}0.0396 & \cellcolor[rgb]{ .443,  .761,  .486}1.1739 \\
    \hline
    \textbf{k-OWA-CCVaR-10} & \cellcolor[rgb]{ .549,  .792,  .494}0.0716\% & \cellcolor[rgb]{ .98,  .498,  .439}1.3912\% & \multicolumn{1}{r|}{40.45\%} & \multicolumn{1}{r|}{41.05\%} & \cellcolor[rgb]{ .671,  .827,  .502}0.0514 & 57.28\% & 60.61\% & 59.67\% & \cellcolor[rgb]{ .992,  .831,  .498}-0.4106 & \cellcolor[rgb]{ 1,  .867,  .51}0.1192 & \cellcolor[rgb]{ .663,  .827,  .502}0.0740 & \cellcolor[rgb]{ .647,  .82,  .498}0.9649 & \cellcolor[rgb]{ 1,  .91,  .518}0.9259 & \cellcolor[rgb]{ .631,  .816,  .498}0.0191\% & \cellcolor[rgb]{ .533,  .788,  .494}0.0077 & \multicolumn{1}{r|}{32.35\%} & \cellcolor[rgb]{ .69,  .835,  .502}0.0238 & \cellcolor[rgb]{ .988,  .631,  .463}0.0390 & \cellcolor[rgb]{ .722,  .843,  .502}1.1639 \\
    \hline
    \textbf{k-OWA-CCVaR-25} & \cellcolor[rgb]{ .584,  .804,  .494}0.0703\% & \cellcolor[rgb]{ .984,  .612,  .459}1.3520\% & \multicolumn{1}{r|}{0.63\%} & \multicolumn{1}{r|}{0.47\%} & \cellcolor[rgb]{ .627,  .816,  .498}0.0520 & 56.51\% & 56.58\% & 55.55\% & \cellcolor[rgb]{ .855,  .882,  .51}-0.3757 & \cellcolor[rgb]{ .988,  .918,  .514}0.1113 & \cellcolor[rgb]{ .627,  .816,  .498}0.0748 & \cellcolor[rgb]{ .808,  .867,  .51}0.9586 & \cellcolor[rgb]{ 1,  .922,  .518}0.8920 & \cellcolor[rgb]{ .627,  .816,  .498}0.0192\% & \cellcolor[rgb]{ .561,  .796,  .494}0.0063 & \multicolumn{1}{r|}{35.37\%} & \cellcolor[rgb]{ .647,  .82,  .498}0.0247 & \cellcolor[rgb]{ .988,  .651,  .467}0.0388 & \cellcolor[rgb]{ .655,  .824,  .498}1.1664 \\
    \hline
    \textbf{k-OWA-CCVaR-50} & \cellcolor[rgb]{ .949,  .91,  .518}0.0577\% & \cellcolor[rgb]{ .996,  .8,  .494}1.2852\% & \multicolumn{1}{r|}{0.03\%} & \multicolumn{1}{r|}{0.03\%} & \cellcolor[rgb]{ .996,  .863,  .506}0.0449 & 89.30\% & 90.10\% & 89.56\% & \cellcolor[rgb]{ .996,  .886,  .51}-0.3936 & \cellcolor[rgb]{ 1,  .851,  .506}0.1208 & \cellcolor[rgb]{ .996,  .859,  .506}0.0636 & \cellcolor[rgb]{ .992,  .776,  .49}0.9397 & \cellcolor[rgb]{ .976,  .914,  .514}0.8634 & \cellcolor[rgb]{ .98,  .616,  .459}0.0087\% & \cellcolor[rgb]{ .902,  .894,  .514}-0.0092 & \multicolumn{1}{r|}{70.87\%} & \cellcolor[rgb]{ .984,  .655,  .467}0.0120 & \cellcolor[rgb]{ .992,  .706,  .478}0.0384 & \cellcolor[rgb]{ .996,  .851,  .502}1.1440 \\
    \hline
    \textbf{k-OWA-CCVaR-75} & \cellcolor[rgb]{ .996,  .882,  .51}0.0539\% & \cellcolor[rgb]{ .898,  .89,  .51}1.1974\% & \multicolumn{1}{r|}{0.03\%} & \multicolumn{1}{r|}{0.03\%} & \cellcolor[rgb]{ .996,  .867,  .506}0.0450 & 91.24\% & 91.60\% & 91.09\% & \cellcolor[rgb]{ .961,  .914,  .518}-0.3818 & \cellcolor[rgb]{ .992,  .918,  .514}0.1114 & \cellcolor[rgb]{ .996,  .867,  .506}0.0640 & \cellcolor[rgb]{ .988,  .718,  .478}0.9351 & \cellcolor[rgb]{ .902,  .89,  .51}0.7834 & \cellcolor[rgb]{ .98,  .596,  .455}0.0085\% & \cellcolor[rgb]{ 1,  .922,  .518}-0.0138 & \multicolumn{1}{r|}{79.59\%} & \cellcolor[rgb]{ .984,  .678,  .471}0.0125 & \cellcolor[rgb]{ .902,  .89,  .51}0.0351 & \cellcolor[rgb]{ .996,  .867,  .506}1.1458 \\
    \hline
    \textbf{k-OWA-CCVaR-100} & \cellcolor[rgb]{ .996,  .847,  .502}0.0521\% & \cellcolor[rgb]{ .729,  .843,  .502}1.1206\% & \multicolumn{1}{r|}{0.03\%} & \multicolumn{1}{r|}{0.03\%} & \cellcolor[rgb]{ .996,  .902,  .514}0.0465 & 97.63\% & 98.00\% & 97.72\% & \cellcolor[rgb]{ .584,  .804,  .494}-0.3598 & \cellcolor[rgb]{ .643,  .816,  .494}0.0954 & \cellcolor[rgb]{ .996,  .898,  .51}0.0660 & \cellcolor[rgb]{ .988,  .706,  .475}0.9341 & \cellcolor[rgb]{ .824,  .871,  .506}0.7000 & \cellcolor[rgb]{ .988,  .761,  .486}0.0099\% & \cellcolor[rgb]{ .992,  .831,  .498}-0.0159 & \multicolumn{1}{r|}{82.92\%} & \cellcolor[rgb]{ .992,  .812,  .494}0.0153 & \cellcolor[rgb]{ .706,  .835,  .498}0.0321 & \cellcolor[rgb]{ .996,  .902,  .514}1.1510 \\
    \hline
    \textbf{MinV} & \cellcolor[rgb]{ .988,  .741,  .482}0.0465\% & \cellcolor[rgb]{ .459,  .765,  .486}0.9999\% & \multicolumn{1}{r|}{0.03\%} & \multicolumn{1}{r|}{0.03\%} & \cellcolor[rgb]{ .996,  .902,  .514}0.0465 & 97.87\% & 97.90\% & 97.57\% & \cellcolor[rgb]{ .388,  .745,  .482}-0.3485 & \cellcolor[rgb]{ .451,  .761,  .482}0.0865 & \cellcolor[rgb]{ .996,  .898,  .514}0.0661 & \cellcolor[rgb]{ .988,  .725,  .478}0.9357 & \cellcolor[rgb]{ .573,  .796,  .49}0.4399 & \cellcolor[rgb]{ .988,  .757,  .482}0.0099\% & \cellcolor[rgb]{ .98,  .6,  .455}-0.0214 & \multicolumn{1}{r|}{90.00\%} & \cellcolor[rgb]{ .996,  .859,  .506}0.0162 & \cellcolor[rgb]{ .478,  .769,  .486}0.0286 & \cellcolor[rgb]{ .996,  .918,  .514}1.1537 \\
    \hline
    \textbf{MinVsh} & \cellcolor[rgb]{ .973,  .412,  .42}0.0293\% & \cellcolor[rgb]{ .737,  .843,  .502}1.1254\% & \multicolumn{1}{r|}{0.03\%} & \multicolumn{1}{r|}{0.03\%} & \cellcolor[rgb]{ .973,  .412,  .42}0.0260 & 24.83\% & 24.36\% & 24.12\% & \cellcolor[rgb]{ .992,  .835,  .498}-0.4100 & \cellcolor[rgb]{ .973,  .412,  .42}0.1767 & \cellcolor[rgb]{ .973,  .412,  .42}0.0365 & \cellcolor[rgb]{ .894,  .89,  .514}0.9553 & \cellcolor[rgb]{ .973,  .412,  .42}2.4534 & \cellcolor[rgb]{ .973,  .412,  .42}0.0070\% & \cellcolor[rgb]{ .973,  .412,  .42}-0.0260 & \multicolumn{1}{r|}{94.00\%} & \cellcolor[rgb]{ .973,  .412,  .42}0.0069 & \cellcolor[rgb]{ .612,  .808,  .494}0.0307 & \cellcolor[rgb]{ .973,  .412,  .42}1.0801 \\
    \hline
    \textbf{MinVN1} & \cellcolor[rgb]{ .98,  .561,  .447}0.0372\% & \cellcolor[rgb]{ .388,  .745,  .482}0.9681\% & \multicolumn{1}{r|}{0.03\%} & \multicolumn{1}{r|}{0.03\%} & \cellcolor[rgb]{ .988,  .71,  .475}0.0385 & 63.78\% & 61.55\% & 60.89\% & \cellcolor[rgb]{ .62,  .812,  .498}-0.3618 & \cellcolor[rgb]{ 1,  .906,  .518}0.1140 & \cellcolor[rgb]{ .988,  .71,  .475}0.0546 & \cellcolor[rgb]{ .992,  .792,  .49}0.9410 & \cellcolor[rgb]{ 1,  .878,  .51}1.0237 & \cellcolor[rgb]{ 1,  .922,  .518}0.0113\% & \cellcolor[rgb]{ .976,  .49,  .431}-0.0241 & \multicolumn{1}{r|}{92.50\%} & \cellcolor[rgb]{ .988,  .765,  .486}0.0143 & \cellcolor[rgb]{ .388,  .745,  .482}0.0271 & \cellcolor[rgb]{ .988,  .71,  .475}1.1232 \\
    \hline
    \textbf{MinVN2} & \cellcolor[rgb]{ .992,  .82,  .498}0.0506\% & \cellcolor[rgb]{ .518,  .78,  .486}1.0264\% & \multicolumn{1}{r|}{0.03\%} & \multicolumn{1}{r|}{0.03\%} & \cellcolor[rgb]{ .835,  .875,  .51}0.0493 & 71.11\% & 72.88\% & 71.64\% & \cellcolor[rgb]{ .922,  .902,  .514}-0.3795 & \cellcolor[rgb]{ .941,  .902,  .514}0.1090 & \cellcolor[rgb]{ .882,  .89,  .514}0.0697 & \cellcolor[rgb]{ .973,  .412,  .42}0.9108 & \cellcolor[rgb]{ .388,  .745,  .482}0.2424 & \cellcolor[rgb]{ .988,  .722,  .478}0.0096\% & \cellcolor[rgb]{ .984,  .671,  .467}-0.0197 & \multicolumn{1}{r|}{88.10\%} & \cellcolor[rgb]{ .941,  .906,  .518}0.0187 & \cellcolor[rgb]{ .553,  .792,  .49}0.0297 & \cellcolor[rgb]{ .745,  .851,  .506}1.1630 \\
    \hline
    \textbf{Index} & \cellcolor[rgb]{ .737,  .847,  .506}0.0651\% & \cellcolor[rgb]{ .976,  .447,  .427}1.4093\% &       &       & \cellcolor[rgb]{ .996,  .894,  .51}0.0462 &       &       &       & \cellcolor[rgb]{ .973,  .412,  .42}-0.5371 & \cellcolor[rgb]{ .992,  .749,  .486}0.1339 & \cellcolor[rgb]{ .996,  .886,  .51}0.0654 & \cellcolor[rgb]{ .996,  .882,  .51}0.9481 &       &       &       &       &       & \cellcolor[rgb]{ .976,  .424,  .424}0.0407 & \cellcolor[rgb]{ .996,  .91,  .514}1.1523 \\
    \hline
    \end{tabular}%
    }
    }
  \label{tab:Perf_out_NASDAQ100_DL125_DH20_C}%
\end{sidewaystable}%
%
%
%
\begin{sidewaystable}[htbp]
  \centering
  \caption{Out-of-sample performance results for SP500}
  \scalebox{0.63}{
  {\renewcommand{\arraystretch}{1.1}
    \begin{tabular}{|l|c|c|c|c|c|r|r|r|c|c|c|c|c|c|c|c|c|c|c|}
    \hline
    \textbf{Approach} & \textbf{ExpRet} & \multicolumn{3}{c|}{\textbf{Vol}} & \multicolumn{4}{c|}{\textbf{Sharpe}} & \textbf{MDD} & \textbf{Ulcer} & \textbf{Sortino} & \textbf{Rachev5} & \textbf{Turn} & \textbf{AlphaJ} & \multicolumn{2}{c|}{\textbf{InfoRatio}} & \textbf{ApprRatio} & \textbf{VaR1} & \textbf{Omega} \\
    \hline
          &       & \textbf{value} & \multicolumn{1}{l|}{\textbf{p-v/b=1}} & \multicolumn{1}{l|}{\textbf{p-v/b=10}} & \textbf{value} & \multicolumn{1}{l|}{\textbf{p-v/b=1}} & \multicolumn{1}{l|}{\textbf{p-v/b=10}} & \multicolumn{1}{l|}{\textbf{p-v/HAC}} &       &       &       &       &       &       & \textbf{value} & \textbf{p-v} &       &       &  \\
    \hline
    \textbf{Roman-CVaR} & \cellcolor[rgb]{ .502,  .78,  .49}0.0420\% & \cellcolor[rgb]{ .976,  .431,  .424}1.2895\% & \multicolumn{1}{r|}{79.91\%} & \multicolumn{1}{r|}{81.24\%} & \cellcolor[rgb]{ .933,  .902,  .514}0.0326 & 61.15\% & 60.05\% & 60.91\% & \cellcolor[rgb]{ .839,  .878,  .51}-0.4693 & \cellcolor[rgb]{ .427,  .757,  .482}0.1429 & \cellcolor[rgb]{ .898,  .894,  .514}0.0457 & \cellcolor[rgb]{ .655,  .824,  .498}0.9223 & \cellcolor[rgb]{ 1,  .91,  .518}1.1162 & \cellcolor[rgb]{ .729,  .847,  .506}0.0142\% & \cellcolor[rgb]{ .498,  .78,  .49}0.0081 & \multicolumn{1}{r|}{31.37\%} & \cellcolor[rgb]{ .937,  .906,  .518}0.0181 & \cellcolor[rgb]{ .976,  .475,  .431}0.0384 & \cellcolor[rgb]{ .992,  .922,  .518}1.1011 \\
    \hline
    \textbf{k-OWA-CVaR-5} & \cellcolor[rgb]{ .388,  .745,  .482}0.0436\% & \cellcolor[rgb]{ .976,  .478,  .435}1.2707\% & \multicolumn{1}{r|}{32.59\%} & \multicolumn{1}{r|}{32.76\%} & \cellcolor[rgb]{ .408,  .753,  .486}0.0343 & 50.35\% & 50.95\% & 49.53\% & \cellcolor[rgb]{ .961,  .914,  .518}-0.4837 & \cellcolor[rgb]{ .565,  .796,  .49}0.1483 & \cellcolor[rgb]{ .416,  .753,  .486}0.0482 & \cellcolor[rgb]{ .678,  .831,  .502}0.9214 & \cellcolor[rgb]{ 1,  .918,  .518}1.0794 & \cellcolor[rgb]{ .396,  .749,  .486}0.0162\% & \cellcolor[rgb]{ .388,  .745,  .482}0.0102 & \multicolumn{1}{r|}{27.16\%} & \cellcolor[rgb]{ .408,  .753,  .486}0.0210 & \cellcolor[rgb]{ .976,  .471,  .431}0.0385 & \cellcolor[rgb]{ .604,  .808,  .498}1.1069 \\
    \hline
    \textbf{k-OWA-CVaR-10} & \cellcolor[rgb]{ .533,  .788,  .494}0.0415\% & \cellcolor[rgb]{ .98,  .529,  .443}1.2505\% & \multicolumn{1}{r|}{6.70\%} & \multicolumn{1}{r|}{7.70\%} & \cellcolor[rgb]{ .737,  .847,  .506}0.0332 & 56.61\% & 56.41\% & 56.58\% & \cellcolor[rgb]{ .992,  .816,  .494}-0.5044 & \cellcolor[rgb]{ .945,  .906,  .514}0.1636 & \cellcolor[rgb]{ .702,  .835,  .502}0.0467 & \cellcolor[rgb]{ .576,  .8,  .494}0.9252 & \cellcolor[rgb]{ 1,  .918,  .518}1.0823 & \cellcolor[rgb]{ .647,  .824,  .498}0.0147\% & \cellcolor[rgb]{ .525,  .784,  .49}0.0076 & \multicolumn{1}{r|}{32.43\%} & \cellcolor[rgb]{ .737,  .847,  .506}0.0192 & \cellcolor[rgb]{ .98,  .51,  .439}0.0380 & \cellcolor[rgb]{ .863,  .882,  .51}1.1030 \\
    \hline
    \textbf{k-OWA-CVaR-25} & \cellcolor[rgb]{ .792,  .863,  .506}0.0378\% & \cellcolor[rgb]{ .992,  .745,  .486}1.1674\% & \multicolumn{1}{r|}{0.03\%} & \multicolumn{1}{r|}{0.03\%} & \cellcolor[rgb]{ .996,  .922,  .518}0.0323 & 62.51\% & 63.05\% & 61.13\% & \cellcolor[rgb]{ .996,  .867,  .506}-0.4963 & \cellcolor[rgb]{ .996,  .847,  .506}0.1744 & \cellcolor[rgb]{ .961,  .914,  .518}0.0454 & \cellcolor[rgb]{ .82,  .871,  .51}0.9162 & \cellcolor[rgb]{ 1,  .922,  .518}1.0596 & \cellcolor[rgb]{ 1,  .922,  .518}0.0126\% & \cellcolor[rgb]{ .765,  .855,  .506}0.0032 & \multicolumn{1}{r|}{42.52\%} & \cellcolor[rgb]{ 1,  .922,  .518}0.0177 & \cellcolor[rgb]{ .996,  .812,  .498}0.0337 & \cellcolor[rgb]{ .996,  .894,  .51}1.0995 \\
    \hline
    \textbf{k-OWA-CVaR-50} & \cellcolor[rgb]{ .996,  .882,  .51}0.0335\% & \cellcolor[rgb]{ .925,  .898,  .51}1.0643\% & \multicolumn{1}{r|}{0.03\%} & \multicolumn{1}{r|}{0.03\%} & \cellcolor[rgb]{ .996,  .878,  .506}0.0315 & 68.18\% & 68.74\% & 67.09\% & \cellcolor[rgb]{ .996,  .902,  .514}-0.4913 & \cellcolor[rgb]{ .992,  .725,  .482}0.1886 & \cellcolor[rgb]{ .996,  .871,  .506}0.0438 & \cellcolor[rgb]{ .973,  .482,  .431}0.8816 & \cellcolor[rgb]{ .91,  .894,  .51}1.0046 & \cellcolor[rgb]{ .988,  .749,  .482}0.0106\% & \cellcolor[rgb]{ .996,  .886,  .51}-0.0022 & \multicolumn{1}{r|}{55.16\%} & \cellcolor[rgb]{ .992,  .835,  .498}0.0163 & \cellcolor[rgb]{ .945,  .906,  .514}0.0315 & \cellcolor[rgb]{ .996,  .875,  .506}1.0984 \\
    \hline
    \textbf{k-OWA-CVaR-75} & \cellcolor[rgb]{ .992,  .78,  .49}0.0304\% & \cellcolor[rgb]{ .729,  .843,  .502}0.9740\% & \multicolumn{1}{r|}{0.03\%} & \multicolumn{1}{r|}{0.03\%} & \cellcolor[rgb]{ .996,  .863,  .506}0.0312 & 70.21\% & 70.61\% & 68.98\% & \cellcolor[rgb]{ .918,  .898,  .514}-0.4783 & \cellcolor[rgb]{ 1,  .859,  .506}0.1732 & \cellcolor[rgb]{ .996,  .855,  .502}0.0433 & \cellcolor[rgb]{ .973,  .467,  .427}0.8806 & \cellcolor[rgb]{ .804,  .863,  .506}0.9382 & \cellcolor[rgb]{ .984,  .635,  .463}0.0092\% & \cellcolor[rgb]{ .988,  .737,  .478}-0.0062 & \multicolumn{1}{r|}{64.47\%} & \cellcolor[rgb]{ .992,  .8,  .494}0.0158 & \cellcolor[rgb]{ .725,  .839,  .498}0.0287 & \cellcolor[rgb]{ .996,  .878,  .51}1.0987 \\
    \hline
    \textbf{k-OWA-CVaR-100} & \cellcolor[rgb]{ .992,  .808,  .494}0.0312\% & \cellcolor[rgb]{ .608,  .808,  .494}0.9165\% & \multicolumn{1}{r|}{0.03\%} & \multicolumn{1}{r|}{0.03\%} & \cellcolor[rgb]{ .478,  .773,  .49}0.0341 & 51.95\% & 51.15\% & 50.57\% & \cellcolor[rgb]{ .996,  .859,  .506}-0.4976 & \cellcolor[rgb]{ .855,  .878,  .506}0.1599 & \cellcolor[rgb]{ .549,  .792,  .494}0.0475 & \cellcolor[rgb]{ .98,  .62,  .459}0.8905 & \cellcolor[rgb]{ .694,  .831,  .498}0.8700 & \cellcolor[rgb]{ .992,  .835,  .498}0.0116\% & \cellcolor[rgb]{ .992,  .784,  .49}-0.0050 & \multicolumn{1}{r|}{61.65\%} & \cellcolor[rgb]{ .478,  .773,  .49}0.0206 & \cellcolor[rgb]{ .631,  .812,  .494}0.0275 & \cellcolor[rgb]{ .388,  .745,  .482}1.1101 \\
    \hline
    \textbf{k-OWA-CCVaR-5} & \cellcolor[rgb]{ .443,  .765,  .486}0.0428\% & \cellcolor[rgb]{ .976,  .467,  .431}1.2754\% & \multicolumn{1}{r|}{42.75\%} & \multicolumn{1}{r|}{43.85\%} & \cellcolor[rgb]{ .631,  .816,  .498}0.0336 & 54.65\% & 55.88\% & 54.67\% & \cellcolor[rgb]{ .894,  .894,  .514}-0.4756 & \cellcolor[rgb]{ .463,  .765,  .486}0.1444 & \cellcolor[rgb]{ .631,  .816,  .498}0.0471 & \cellcolor[rgb]{ .835,  .875,  .51}0.9156 & \cellcolor[rgb]{ 1,  .918,  .518}1.0895 & \cellcolor[rgb]{ .533,  .788,  .494}0.0154\% & \cellcolor[rgb]{ .447,  .765,  .486}0.0091 & \multicolumn{1}{r|}{29.24\%} & \cellcolor[rgb]{ .631,  .816,  .498}0.0197 & \cellcolor[rgb]{ .98,  .49,  .435}0.0382 & \cellcolor[rgb]{ .765,  .855,  .506}1.1045 \\
    \hline
    \textbf{k-OWA-CCVaR-10} & \cellcolor[rgb]{ .412,  .753,  .486}0.0433\% & \cellcolor[rgb]{ .98,  .51,  .439}1.2593\% & \multicolumn{1}{r|}{15.19\%} & \multicolumn{1}{r|}{16.59\%} & \cellcolor[rgb]{ .388,  .745,  .482}0.0344 & 49.58\% & 49.35\% & 49.57\% & \cellcolor[rgb]{ .98,  .918,  .518}-0.4857 & \cellcolor[rgb]{ .588,  .8,  .49}0.1494 & \cellcolor[rgb]{ .388,  .745,  .482}0.0483 & \cellcolor[rgb]{ .639,  .82,  .498}0.9229 & \cellcolor[rgb]{ 1,  .918,  .518}1.0782 & \cellcolor[rgb]{ .388,  .745,  .482}0.0162\% & \cellcolor[rgb]{ .412,  .753,  .486}0.0097 & \multicolumn{1}{r|}{28.01\%} & \cellcolor[rgb]{ .388,  .745,  .482}0.0210 & \cellcolor[rgb]{ .984,  .561,  .451}0.0372 & \cellcolor[rgb]{ .608,  .808,  .498}1.1069 \\
    \hline
    \textbf{k-OWA-CCVaR-25} & \cellcolor[rgb]{ .647,  .82,  .498}0.0399\% & \cellcolor[rgb]{ .984,  .627,  .463}1.2137\% & \multicolumn{1}{r|}{0.20\%} & \multicolumn{1}{r|}{0.53\%} & \cellcolor[rgb]{ .847,  .878,  .51}0.0328 & 58.41\% & 58.01\% & 58.32\% & \cellcolor[rgb]{ .992,  .843,  .502}-0.5004 & \cellcolor[rgb]{ .816,  .867,  .506}0.1584 & \cellcolor[rgb]{ .808,  .867,  .51}0.0462 & \cellcolor[rgb]{ .408,  .753,  .486}0.9314 & \cellcolor[rgb]{ 1,  .922,  .518}1.0711 & \cellcolor[rgb]{ .808,  .867,  .51}0.0138\% & \cellcolor[rgb]{ .631,  .816,  .498}0.0057 & \multicolumn{1}{r|}{36.69\%} & \cellcolor[rgb]{ .851,  .878,  .51}0.0185 & \cellcolor[rgb]{ .992,  .757,  .486}0.0345 & \cellcolor[rgb]{ .984,  .918,  .518}1.1012 \\
    \hline
    \textbf{k-OWA-CCVaR-50} & \cellcolor[rgb]{ .788,  .863,  .506}0.0378\% & \cellcolor[rgb]{ .996,  .835,  .502}1.1338\% & \multicolumn{1}{r|}{0.03\%} & \multicolumn{1}{r|}{0.03\%} & \cellcolor[rgb]{ .694,  .835,  .502}0.0334 & 55.98\% & 55.61\% & 54.61\% & \cellcolor[rgb]{ .992,  .843,  .502}-0.5000 & \cellcolor[rgb]{ 1,  .867,  .51}0.1722 & \cellcolor[rgb]{ .71,  .839,  .502}0.0467 & \cellcolor[rgb]{ .992,  .847,  .502}0.9047 & \cellcolor[rgb]{ .961,  .91,  .514}1.0367 & \cellcolor[rgb]{ .863,  .882,  .51}0.0135\% & \cellcolor[rgb]{ .761,  .855,  .506}0.0032 & \multicolumn{1}{r|}{42.31\%} & \cellcolor[rgb]{ .694,  .835,  .502}0.0194 & \cellcolor[rgb]{ 1,  .878,  .51}0.0328 & \cellcolor[rgb]{ .863,  .882,  .51}1.1031 \\
    \hline
    \textbf{k-OWA-CCVaR-75} & \cellcolor[rgb]{ .996,  .902,  .514}0.0342\% & \cellcolor[rgb]{ .91,  .894,  .51}1.0587\% & \multicolumn{1}{r|}{0.03\%} & \multicolumn{1}{r|}{0.03\%} & \cellcolor[rgb]{ .996,  .918,  .514}0.0323 & 62.18\% & 60.11\% & 61.17\% & \cellcolor[rgb]{ .996,  .902,  .514}-0.4908 & \cellcolor[rgb]{ .992,  .757,  .486}0.1848 & \cellcolor[rgb]{ .996,  .914,  .514}0.0450 & \cellcolor[rgb]{ .976,  .49,  .431}0.8822 & \cellcolor[rgb]{ .89,  .89,  .51}0.9932 & \cellcolor[rgb]{ .992,  .816,  .494}0.0114\% & \cellcolor[rgb]{ 1,  .922,  .518}-0.0013 & \multicolumn{1}{r|}{53.07\%} & \cellcolor[rgb]{ .996,  .918,  .514}0.0177 & \cellcolor[rgb]{ .945,  .902,  .514}0.0314 & \cellcolor[rgb]{ .996,  .918,  .514}1.1008 \\
    \hline
    \textbf{k-OWA-CCVaR-100} & \cellcolor[rgb]{ .992,  .792,  .49}0.0307\% & \cellcolor[rgb]{ .749,  .847,  .502}0.9829\% & \multicolumn{1}{r|}{0.03\%} & \multicolumn{1}{r|}{0.03\%} & \cellcolor[rgb]{ .996,  .863,  .506}0.0312 & 69.61\% & 70.44\% & 69.24\% & \cellcolor[rgb]{ .996,  .894,  .51}-0.4924 & \cellcolor[rgb]{ .996,  .788,  .494}0.1813 & \cellcolor[rgb]{ .996,  .855,  .502}0.0433 & \cellcolor[rgb]{ .973,  .412,  .42}0.8771 & \cellcolor[rgb]{ .808,  .867,  .506}0.9414 & \cellcolor[rgb]{ .984,  .663,  .467}0.0095\% & \cellcolor[rgb]{ .988,  .753,  .482}-0.0057 & \multicolumn{1}{r|}{63.42\%} & \cellcolor[rgb]{ .992,  .808,  .494}0.0159 & \cellcolor[rgb]{ .792,  .859,  .502}0.0296 & \cellcolor[rgb]{ .996,  .871,  .506}1.0983 \\
    \hline
    \textbf{MinV} & \cellcolor[rgb]{ .988,  .702,  .475}0.0279\% & \cellcolor[rgb]{ .533,  .784,  .49}0.8816\% & \multicolumn{1}{r|}{0.03\%} & \multicolumn{1}{r|}{0.03\%} & \cellcolor[rgb]{ .996,  .886,  .51}0.0317 & 65.04\% & 65.81\% & 64.39\% & \cellcolor[rgb]{ .741,  .847,  .506}-0.4578 & \cellcolor[rgb]{ .388,  .745,  .482}0.1412 & \cellcolor[rgb]{ .996,  .878,  .51}0.0440 & \cellcolor[rgb]{ .98,  .608,  .455}0.8897 & \cellcolor[rgb]{ .42,  .753,  .482}0.6970 & \cellcolor[rgb]{ .98,  .561,  .447}0.0083\% & \cellcolor[rgb]{ .98,  .612,  .455}-0.0095 & \multicolumn{1}{r|}{71.52\%} & \cellcolor[rgb]{ .992,  .843,  .502}0.0164 & \cellcolor[rgb]{ .451,  .761,  .482}0.0253 & \cellcolor[rgb]{ .667,  .827,  .502}1.1060 \\
    \hline
    \textbf{MinVsh} & \cellcolor[rgb]{ .98,  .569,  .447}0.0237\% & \cellcolor[rgb]{ .506,  .776,  .486}0.8700\% & \multicolumn{1}{r|}{0.03\%} & \multicolumn{1}{r|}{0.03\%} & \cellcolor[rgb]{ .984,  .659,  .467}0.0272 & 99.80\% & 99.67\% & 99.76\% & \cellcolor[rgb]{ .973,  .914,  .518}-0.4850 & \cellcolor[rgb]{ .973,  .412,  .42}0.2241 & \cellcolor[rgb]{ .984,  .671,  .467}0.0381 & \cellcolor[rgb]{ .388,  .745,  .482}0.9321 & \cellcolor[rgb]{ .973,  .412,  .42}3.4491 & \cellcolor[rgb]{ .961,  .914,  .518}0.0129\% & \cellcolor[rgb]{ .98,  .604,  .455}-0.0097 & \multicolumn{1}{r|}{71.99\%} & \cellcolor[rgb]{ .996,  .855,  .502}0.0166 & \cellcolor[rgb]{ .522,  .78,  .486}0.0262 & \cellcolor[rgb]{ .984,  .698,  .475}1.0887 \\
    \hline
    \textbf{MinVN1} & \cellcolor[rgb]{ .973,  .412,  .42}0.0188\% & \cellcolor[rgb]{ .443,  .761,  .482}0.8396\% & \multicolumn{1}{r|}{0.03\%} & \multicolumn{1}{r|}{0.03\%} & \cellcolor[rgb]{ .973,  .412,  .42}0.0223 & 75.14\% & 74.08\% & 75.73\% & \cellcolor[rgb]{ .91,  .898,  .514}-0.4776 & \cellcolor[rgb]{ .984,  .608,  .459}0.2018 & \cellcolor[rgb]{ .973,  .412,  .42}0.0307 & \cellcolor[rgb]{ .996,  .914,  .514}0.9090 & \cellcolor[rgb]{ .996,  .792,  .494}1.6692 & \cellcolor[rgb]{ .973,  .412,  .42}0.0065\% & \cellcolor[rgb]{ .973,  .412,  .42}-0.0149 & \multicolumn{1}{r|}{81.42\%} & \cellcolor[rgb]{ .973,  .412,  .42}0.0092 & \cellcolor[rgb]{ .388,  .745,  .482}0.0245 & \cellcolor[rgb]{ .973,  .412,  .42}1.0726 \\
    \hline
    \textbf{MinVN2} & \cellcolor[rgb]{ .98,  .58,  .451}0.0242\% & \cellcolor[rgb]{ .388,  .745,  .482}0.8134\% & \multicolumn{1}{r|}{0.03\%} & \multicolumn{1}{r|}{0.03\%} & \cellcolor[rgb]{ .992,  .784,  .49}0.0297 & 85.57\% & 86.07\% & 85.12\% & \cellcolor[rgb]{ .388,  .745,  .482}-0.4167 & \cellcolor[rgb]{ 1,  .922,  .518}0.1660 & \cellcolor[rgb]{ .988,  .769,  .486}0.0408 & \cellcolor[rgb]{ .973,  .467,  .427}0.8807 & \cellcolor[rgb]{ .388,  .745,  .482}0.6760 & \cellcolor[rgb]{ .98,  .596,  .455}0.0087\% & \cellcolor[rgb]{ .976,  .525,  .439}-0.0119 & \multicolumn{1}{r|}{76.11\%} & \cellcolor[rgb]{ .988,  .757,  .482}0.0150 & \cellcolor[rgb]{ .396,  .745,  .482}0.0246 & \cellcolor[rgb]{ .996,  .894,  .51}1.0995 \\
    \hline
    \textbf{Index} & \cellcolor[rgb]{ .969,  .914,  .518}0.0352\% & \cellcolor[rgb]{ .973,  .412,  .42}1.2960\% &       &       & \cellcolor[rgb]{ .984,  .659,  .467}0.0272 &       &       &       & \cellcolor[rgb]{ .973,  .412,  .42}-0.5678 & \cellcolor[rgb]{ .988,  .918,  .514}0.1653 & \cellcolor[rgb]{ .984,  .659,  .467}0.0378 & \cellcolor[rgb]{ .992,  .922,  .518}0.9099 &       &       &       &       &       & \cellcolor[rgb]{ .973,  .412,  .42}0.0393 & \cellcolor[rgb]{ .988,  .769,  .486}1.0926 \\
    \hline
    \end{tabular}%
    }
    }
  \label{tab:Perf_out_SP500_DL125_DH20_C}%
\end{sidewaystable}%

\begin{sidewaystable}[htbp]
  \centering
  \caption{Out-of-sample performance results for FF49}
  \scalebox{0.67}{
  {\renewcommand{\arraystretch}{1.1}
    \begin{tabular}{|l|c|c|c|c|c|r|r|r|c|c|c|c|c|c|c|c|c|}
    \hline
    \textbf{Approach} & \textbf{ExpRet} & \multicolumn{3}{c|}{\textbf{Vol}} & \multicolumn{4}{c|}{\textbf{Sharpe}} & \textbf{Sortino} & \textbf{Rachev5} & \textbf{Turn} & \textbf{AlphaJ} & \multicolumn{2}{c|}{\textbf{InfoRatio}} & \textbf{ApprRatio} & \textbf{VaR1} & \textbf{Omega} \\
    \hline
          &       & \textbf{value} & \multicolumn{1}{l|}{\textbf{p-v/b=1}} & \multicolumn{1}{l|}{\textbf{p-v/b=10}} & \textbf{value} & \multicolumn{1}{l|}{\textbf{p-v/b=1}} & \multicolumn{1}{l|}{\textbf{p-v/b=10}} & \multicolumn{1}{l|}{\textbf{p-v/HAC}} &       &       &       &       & \textbf{value} & \textbf{p-v} &       &       &  \\
    \hline
    \textbf{Roman-CVaR} & \cellcolor[rgb]{ .388,  .745,  .482}10.7464\% & \cellcolor[rgb]{ .984,  .576,  .455}95.4540\% & \multicolumn{1}{r|}{0.03\%} & \multicolumn{1}{r|}{0.03\%} & \cellcolor[rgb]{ .996,  .918,  .514}0.1126 & 0.03\% & 0.03\% & 0.00\% & \cellcolor[rgb]{ .867,  .882,  .51}0.1626 & \cellcolor[rgb]{ .914,  .898,  .514}0.9610 & \cellcolor[rgb]{ 1,  .902,  .514}0.8274 & \cellcolor[rgb]{ .388,  .745,  .482}4.8309\% & \cellcolor[rgb]{ .388,  .745,  .482}0.0540 & \multicolumn{1}{r|}{0.00\%} & \cellcolor[rgb]{ .996,  .906,  .514}0.0890 & \cellcolor[rgb]{ .984,  .612,  .459}2.6588 & \cellcolor[rgb]{ .996,  .91,  .514}1.3926 \\
    \hline
    \textbf{k-OWA-CVaR-5} & \cellcolor[rgb]{ .435,  .761,  .486}10.6469\% & \cellcolor[rgb]{ .984,  .616,  .459}94.3017\% & \multicolumn{1}{r|}{0.03\%} & \multicolumn{1}{r|}{0.03\%} & \cellcolor[rgb]{ .933,  .902,  .514}0.1129 & 0.03\% & 0.03\% & 0.00\% & \cellcolor[rgb]{ .831,  .875,  .51}0.1627 & \cellcolor[rgb]{ .949,  .91,  .518}0.9568 & \cellcolor[rgb]{ 1,  .91,  .518}0.8016 & \cellcolor[rgb]{ .424,  .757,  .486}4.7910\% & \cellcolor[rgb]{ .427,  .757,  .486}0.0527 & \multicolumn{1}{r|}{0.00\%} & \cellcolor[rgb]{ .996,  .918,  .514}0.0897 & \cellcolor[rgb]{ .988,  .659,  .471}2.6147 & \cellcolor[rgb]{ .996,  .918,  .514}1.3944 \\
    \hline
    \textbf{k-OWA-CVaR-10} & \cellcolor[rgb]{ .506,  .78,  .49}10.4907\% & \cellcolor[rgb]{ .988,  .655,  .467}92.9753\% & \multicolumn{1}{r|}{0.03\%} & \multicolumn{1}{r|}{0.03\%} & \cellcolor[rgb]{ .953,  .91,  .518}0.1128 & 0.03\% & 0.03\% & 0.00\% & \cellcolor[rgb]{ .976,  .918,  .518}0.1622 & \cellcolor[rgb]{ 1,  .922,  .518}0.9505 & \cellcolor[rgb]{ 1,  .918,  .518}0.7800 & \cellcolor[rgb]{ .506,  .78,  .49}4.6959\% & \cellcolor[rgb]{ .49,  .776,  .49}0.0507 & \multicolumn{1}{r|}{0.00\%} & \cellcolor[rgb]{ 1,  .922,  .518}0.0899 & \cellcolor[rgb]{ .988,  .678,  .475}2.5955 & \cellcolor[rgb]{ .996,  .918,  .514}1.3950 \\
    \hline
    \textbf{k-OWA-CVaR-25} & \cellcolor[rgb]{ .631,  .816,  .498}10.2073\% & \cellcolor[rgb]{ .992,  .776,  .49}89.1384\% & \multicolumn{1}{r|}{0.03\%} & \multicolumn{1}{r|}{0.03\%} & \cellcolor[rgb]{ .408,  .753,  .486}0.1145 & 0.03\% & 0.03\% & 0.00\% & \cellcolor[rgb]{ .427,  .757,  .486}0.1640 & \cellcolor[rgb]{ .988,  .733,  .478}0.9397 & \cellcolor[rgb]{ .961,  .91,  .514}0.7301 & \cellcolor[rgb]{ .576,  .8,  .494}4.6138\% & \cellcolor[rgb]{ .608,  .808,  .498}0.0469 & \multicolumn{1}{r|}{0.00\%} & \cellcolor[rgb]{ .549,  .792,  .494}0.0935 & \cellcolor[rgb]{ .992,  .733,  .482}2.5450 & \cellcolor[rgb]{ .553,  .796,  .494}1.4059 \\
    \hline
    \textbf{k-OWA-CVaR-50} & \cellcolor[rgb]{ .996,  .922,  .518}9.3912\% & \cellcolor[rgb]{ .937,  .902,  .514}82.3421\% & \multicolumn{1}{r|}{0.03\%} & \multicolumn{1}{r|}{0.03\%} & \cellcolor[rgb]{ .557,  .796,  .494}0.1141 & 0.03\% & 0.03\% & 0.00\% & \cellcolor[rgb]{ .996,  .918,  .514}0.1621 & \cellcolor[rgb]{ .973,  .412,  .42}0.9216 & \cellcolor[rgb]{ .827,  .871,  .506}0.6318 & \cellcolor[rgb]{ 1,  .922,  .518}4.1242\% & \cellcolor[rgb]{ 1,  .922,  .518}0.0340 & \multicolumn{1}{r|}{0.00\%} & \cellcolor[rgb]{ .388,  .745,  .482}0.0948 & \cellcolor[rgb]{ .918,  .898,  .51}2.2656 & \cellcolor[rgb]{ .396,  .749,  .486}1.4098 \\
    \hline
    \textbf{k-OWA-CVaR-75} & \cellcolor[rgb]{ .992,  .808,  .494}8.6147\% & \cellcolor[rgb]{ .776,  .855,  .502}76.5745\% & \multicolumn{1}{r|}{0.03\%} & \multicolumn{1}{r|}{0.03\%} & \cellcolor[rgb]{ .996,  .918,  .514}0.1125 & 0.03\% & 0.03\% & 0.00\% & \cellcolor[rgb]{ .996,  .894,  .51}0.1593 & \cellcolor[rgb]{ .973,  .424,  .42}0.9223 & \cellcolor[rgb]{ .714,  .839,  .498}0.5465 & \cellcolor[rgb]{ .992,  .82,  .498}3.6686\% & \cellcolor[rgb]{ .992,  .796,  .49}0.0196 & \multicolumn{1}{r|}{1.15\%} & \cellcolor[rgb]{ .604,  .808,  .498}0.0931 & \cellcolor[rgb]{ .812,  .867,  .506}2.1367 & \cellcolor[rgb]{ .455,  .765,  .486}1.4083 \\
    \hline
    \textbf{k-OWA-CVaR-100} & \cellcolor[rgb]{ .984,  .694,  .475}7.8655\% & \cellcolor[rgb]{ .655,  .82,  .494}72.1837\% & \multicolumn{1}{r|}{0.03\%} & \multicolumn{1}{r|}{0.03\%} & \cellcolor[rgb]{ .996,  .871,  .506}0.1090 & 0.03\% & 0.03\% & 0.00\% & \cellcolor[rgb]{ .996,  .855,  .502}0.1550 & \cellcolor[rgb]{ .996,  .914,  .514}0.9499 & \cellcolor[rgb]{ .596,  .804,  .494}0.4568 & \cellcolor[rgb]{ .988,  .725,  .478}3.2263\% & \cellcolor[rgb]{ .984,  .671,  .467}0.0052 & \multicolumn{1}{r|}{27.24\%} & \cellcolor[rgb]{ .996,  .863,  .506}0.0857 & \cellcolor[rgb]{ .714,  .839,  .498}2.0144 & \cellcolor[rgb]{ .984,  .918,  .518}1.3954 \\
    \hline
    \textbf{k-OWA-CCVaR-5} & \cellcolor[rgb]{ .435,  .761,  .486}10.6470\% & \cellcolor[rgb]{ .984,  .604,  .459}94.6393\% & \multicolumn{1}{r|}{0.03\%} & \multicolumn{1}{r|}{0.03\%} & \cellcolor[rgb]{ .996,  .918,  .514}0.1125 & 0.03\% & 0.03\% & 0.00\% & \cellcolor[rgb]{ .996,  .922,  .518}0.1621 & \cellcolor[rgb]{ .945,  .906,  .518}0.9572 & \cellcolor[rgb]{ 1,  .906,  .518}0.8106 & \cellcolor[rgb]{ .443,  .765,  .486}4.7676\% & \cellcolor[rgb]{ .427,  .757,  .486}0.0527 & \multicolumn{1}{r|}{0.00\%} & \cellcolor[rgb]{ .996,  .91,  .514}0.0890 & \cellcolor[rgb]{ .988,  .655,  .467}2.6202 & \cellcolor[rgb]{ .996,  .91,  .514}1.3926 \\
    \hline
    \textbf{k-OWA-CCVaR-10} & \cellcolor[rgb]{ .459,  .769,  .49}10.5938\% & \cellcolor[rgb]{ .984,  .631,  .463}93.7354\% & \multicolumn{1}{r|}{0.03\%} & \multicolumn{1}{r|}{0.03\%} & \cellcolor[rgb]{ .894,  .894,  .514}0.1130 & 0.03\% & 0.03\% & 0.00\% & \cellcolor[rgb]{ .784,  .859,  .506}0.1628 & \cellcolor[rgb]{ .949,  .91,  .518}0.9565 & \cellcolor[rgb]{ 1,  .914,  .518}0.7938 & \cellcolor[rgb]{ .447,  .765,  .486}4.7636\% & \cellcolor[rgb]{ .447,  .765,  .486}0.0521 & \multicolumn{1}{r|}{0.00\%} & \cellcolor[rgb]{ .98,  .918,  .518}0.0901 & \cellcolor[rgb]{ .988,  .678,  .475}2.5965 & \cellcolor[rgb]{ 1,  .922,  .518}1.3951 \\
    \hline
    \textbf{k-OWA-CCVaR-25} & \cellcolor[rgb]{ .518,  .784,  .49}10.4592\% & \cellcolor[rgb]{ .988,  .698,  .475}91.6365\% & \multicolumn{1}{r|}{0.03\%} & \multicolumn{1}{r|}{0.03\%} & \cellcolor[rgb]{ .529,  .788,  .494}0.1141 & 0.03\% & 0.03\% & 0.00\% & \cellcolor[rgb]{ .388,  .745,  .482}0.1641 & \cellcolor[rgb]{ .996,  .918,  .514}0.9501 & \cellcolor[rgb]{ 1,  .922,  .518}0.7591 & \cellcolor[rgb]{ .463,  .769,  .49}4.7480\% & \cellcolor[rgb]{ .502,  .78,  .49}0.0503 & \multicolumn{1}{r|}{0.00\%} & \cellcolor[rgb]{ .714,  .839,  .502}0.0922 & \cellcolor[rgb]{ .988,  .686,  .475}2.5895 & \cellcolor[rgb]{ .694,  .835,  .502}1.4025 \\
    \hline
    \textbf{k-OWA-CCVaR-50} & \cellcolor[rgb]{ .757,  .851,  .506}9.9291\% & \cellcolor[rgb]{ 1,  .855,  .506}86.6683\% & \multicolumn{1}{r|}{0.03\%} & \multicolumn{1}{r|}{0.03\%} & \cellcolor[rgb]{ .388,  .745,  .482}0.1146 & 0.03\% & 0.03\% & 0.00\% & \cellcolor[rgb]{ .518,  .784,  .49}0.1637 & \cellcolor[rgb]{ .98,  .592,  .451}0.9317 & \cellcolor[rgb]{ .898,  .89,  .51}0.6843 & \cellcolor[rgb]{ .714,  .839,  .502}4.4557\% & \cellcolor[rgb]{ .733,  .847,  .506}0.0427 & \multicolumn{1}{r|}{0.00\%} & \cellcolor[rgb]{ .455,  .765,  .486}0.0943 & \cellcolor[rgb]{ .996,  .824,  .502}2.4592 & \cellcolor[rgb]{ .463,  .769,  .49}1.4081 \\
    \hline
    \textbf{k-OWA-CCVaR-75} & \cellcolor[rgb]{ .996,  .918,  .514}9.3683\% & \cellcolor[rgb]{ .933,  .902,  .514}82.1244\% & \multicolumn{1}{r|}{0.03\%} & \multicolumn{1}{r|}{0.03\%} & \cellcolor[rgb]{ .549,  .792,  .494}0.1141 & 0.03\% & 0.03\% & 0.00\% & \cellcolor[rgb]{ .976,  .918,  .518}0.1622 & \cellcolor[rgb]{ .973,  .412,  .42}0.9215 & \cellcolor[rgb]{ .824,  .871,  .506}0.6270 & \cellcolor[rgb]{ .996,  .918,  .514}4.1232\% & \cellcolor[rgb]{ .996,  .914,  .514}0.0334 & \multicolumn{1}{r|}{0.01\%} & \cellcolor[rgb]{ .408,  .753,  .486}0.0947 & \cellcolor[rgb]{ .922,  .898,  .51}2.2680 & \cellcolor[rgb]{ .388,  .745,  .482}1.4099 \\
    \hline
    \textbf{k-OWA-CCVaR-100} & \cellcolor[rgb]{ .992,  .827,  .498}8.7489\% & \cellcolor[rgb]{ .804,  .863,  .506}77.5644\% & \multicolumn{1}{r|}{0.03\%} & \multicolumn{1}{r|}{0.03\%} & \cellcolor[rgb]{ .969,  .914,  .518}0.1128 & 0.03\% & 0.03\% & 0.00\% & \cellcolor[rgb]{ .996,  .898,  .514}0.1599 & \cellcolor[rgb]{ .973,  .424,  .42}0.9222 & \cellcolor[rgb]{ .737,  .843,  .502}0.5628 & \cellcolor[rgb]{ .992,  .839,  .502}3.7591\% & \cellcolor[rgb]{ .992,  .816,  .494}0.0220 & \multicolumn{1}{r|}{0.53\%} & \cellcolor[rgb]{ .6,  .808,  .498}0.0931 & \cellcolor[rgb]{ .82,  .867,  .506}2.1458 & \cellcolor[rgb]{ .459,  .765,  .486}1.4083 \\
    \hline
    \textbf{MinV} & \cellcolor[rgb]{ .973,  .455,  .427}6.2402\% & \cellcolor[rgb]{ .545,  .788,  .49}68.1950\% & \multicolumn{1}{r|}{0.03\%} & \multicolumn{1}{r|}{0.03\%} & \cellcolor[rgb]{ .984,  .631,  .459}0.0915 & 0.27\% & 0.57\% & 1.17\% & \cellcolor[rgb]{ .98,  .62,  .459}0.1294 & \cellcolor[rgb]{ .957,  .91,  .518}0.9556 & \cellcolor[rgb]{ .388,  .745,  .482}0.2986 & \cellcolor[rgb]{ .973,  .412,  .42}1.7604\% & \cellcolor[rgb]{ .973,  .412,  .42}-0.0253 & \multicolumn{1}{r|}{99.84\%} & \cellcolor[rgb]{ .973,  .412,  .42}0.0527 & \cellcolor[rgb]{ .576,  .8,  .49}1.8493 & \cellcolor[rgb]{ .984,  .678,  .471}1.3299 \\
    \hline
    \textbf{MinVsh} & \cellcolor[rgb]{ .973,  .412,  .42}5.9300\% & \cellcolor[rgb]{ .392,  .745,  .482}62.6256\% & \multicolumn{1}{r|}{0.03\%} & \multicolumn{1}{r|}{0.03\%} & \cellcolor[rgb]{ .984,  .675,  .467}0.0947 & 8.90\% & 8.86\% & 11.69\% & \cellcolor[rgb]{ .984,  .694,  .471}0.1372 & \cellcolor[rgb]{ .388,  .745,  .482}1.0241 & \cellcolor[rgb]{ .973,  .412,  .42}2.3053 & \cellcolor[rgb]{ .996,  .902,  .514}4.0419\% & \cellcolor[rgb]{ .973,  .478,  .431}-0.0174 & \multicolumn{1}{r|}{97.84\%} & \cellcolor[rgb]{ .984,  .651,  .463}0.0704 & \cellcolor[rgb]{ .392,  .745,  .482}1.6200 & \cellcolor[rgb]{ .984,  .686,  .471}1.3315 \\
    \hline
    \textbf{MinVN1} & \cellcolor[rgb]{ .973,  .412,  .42}5.9299\% & \cellcolor[rgb]{ .392,  .745,  .482}62.6202\% & \multicolumn{1}{r|}{0.03\%} & \multicolumn{1}{r|}{0.03\%} & \cellcolor[rgb]{ .984,  .675,  .467}0.0947 & 8.90\% & 9.96\% & 11.68\% & \cellcolor[rgb]{ .984,  .694,  .471}0.1372 & \cellcolor[rgb]{ .392,  .749,  .486}1.0241 & \cellcolor[rgb]{ .976,  .416,  .424}2.3039 & \cellcolor[rgb]{ .996,  .902,  .514}4.0416\% & \cellcolor[rgb]{ .973,  .478,  .431}-0.0174 & \multicolumn{1}{r|}{97.84\%} & \cellcolor[rgb]{ .984,  .651,  .463}0.0704 & \cellcolor[rgb]{ .392,  .745,  .482}1.6198 & \cellcolor[rgb]{ .984,  .686,  .471}1.3315 \\
    \hline
    \textbf{MinVN2} & \cellcolor[rgb]{ .973,  .412,  .42}5.9399\% & \cellcolor[rgb]{ .388,  .745,  .482}62.4297\% & \multicolumn{1}{r|}{0.03\%} & \multicolumn{1}{r|}{0.03\%} & \cellcolor[rgb]{ .984,  .678,  .471}0.0951 & 8.26\% & 8.26\% & 10.77\% & \cellcolor[rgb]{ .984,  .698,  .475}0.1378 & \cellcolor[rgb]{ .396,  .749,  .486}1.0232 & \cellcolor[rgb]{ .976,  .431,  .424}2.2554 & \cellcolor[rgb]{ .996,  .902,  .514}4.0402\% & \cellcolor[rgb]{ .973,  .478,  .431}-0.0174 & \multicolumn{1}{r|}{97.81\%} & \cellcolor[rgb]{ .984,  .659,  .467}0.0707 & \cellcolor[rgb]{ .388,  .745,  .482}1.6143 & \cellcolor[rgb]{ .984,  .694,  .471}1.3339 \\
    \hline
    \textbf{Index - EW} & \cellcolor[rgb]{ .984,  .655,  .463}7.5825\% & \cellcolor[rgb]{ .973,  .412,  .42}100.6446\% &       &       & \cellcolor[rgb]{ .973,  .412,  .42}0.0753 &       &       &       & \cellcolor[rgb]{ .973,  .412,  .42}0.1062 & \cellcolor[rgb]{ .996,  .914,  .514}0.9500 &       &       &       &       &       & \cellcolor[rgb]{ .973,  .412,  .42}2.8491 & \cellcolor[rgb]{ .973,  .412,  .42}1.2568 \\
    \hline
    \end{tabular}%
  }
  }
  \label{tab:Perf_out_49IndPort_DL125_DH20_C}%
\end{sidewaystable}%

\noindent
In Tables \ref{tab:ROI750_out_FTSE100_DL125_DH20_new_C}, \ref{tab:ROI750_out_NASDAQ100_DL125_DH20_new_C}, \ref{tab:ROI750_out_SP500_DL125_DH20_new_C} we report some statistics of ROI based on a 3-years time horizon, computed using Expression \eqref{eq:ROI} with $\Delta H =750$ days, for all the portfolio strategies analyzed.
Again, for the FTSE100, NASDAQ100, and SP500 daily datasets, we can observe that the most profitable portfolio strategies appear to be the $k$-OWA-CVaR-$\beta$ and $k$-OWA-CCVaR-$\beta$ portfolios with $\beta=5\%, 10\%$, along with the Roman-CVaR approach, in terms of mean, 25\%, 50\%, 75\%, and 95\% percentile of ROI.
%
%
%
\begin{table}[htbp]
  \centering
  \caption{Summary statistics of ROI based on a 3-years time horizon for FTSE100}
  \scalebox{0.75}{
  {\renewcommand{\arraystretch}{1.1}
    \begin{tabular}{|l|c|c|c|c|c|c|c|}
    \hline
    \textbf{Approach} & \textbf{ExpRet} & \textbf{Vol} & \textbf{5\%-perc} & \textbf{25\%-perc} & \textbf{50\%-perc} & \textbf{75\%-perc} & \textbf{95\%-perc} \\
    \hline
    \textbf{Roman-CVaR} & \cellcolor[rgb]{ .537,  .788,  .494}65\% & \cellcolor[rgb]{ .973,  .412,  .42}45\% & \cellcolor[rgb]{ .98,  .576,  .451}-21\% & \cellcolor[rgb]{ .996,  .906,  .514}20\% & \cellcolor[rgb]{ .451,  .765,  .486}80\% & \cellcolor[rgb]{ .533,  .788,  .494}99\% & \cellcolor[rgb]{ .522,  .784,  .49}123\% \\
    \hline
    \textbf{k-OWA-CVaR-5} & \cellcolor[rgb]{ .412,  .753,  .486}69\% & \cellcolor[rgb]{ .976,  .447,  .427}45\% & \cellcolor[rgb]{ .992,  .812,  .494}-13\% & \cellcolor[rgb]{ .584,  .804,  .494}25\% & \cellcolor[rgb]{ .427,  .757,  .486}81\% & \cellcolor[rgb]{ .427,  .757,  .486}104\% & \cellcolor[rgb]{ .51,  .78,  .49}124\% \\
    \hline
    \textbf{k-OWA-CVaR-10} & \cellcolor[rgb]{ .431,  .757,  .486}68\% & \cellcolor[rgb]{ .976,  .416,  .424}45\% & \cellcolor[rgb]{ .996,  .89,  .51}-10\% & \cellcolor[rgb]{ .937,  .906,  .518}21\% & \cellcolor[rgb]{ .443,  .761,  .486}81\% & \cellcolor[rgb]{ .51,  .78,  .49}100\% & \cellcolor[rgb]{ .388,  .745,  .482}130\% \\
    \hline
    \textbf{k-OWA-CVaR-25} & \cellcolor[rgb]{ .592,  .804,  .494}64\% & \cellcolor[rgb]{ .984,  .608,  .459}41\% & \cellcolor[rgb]{ .875,  .886,  .514}-7\% & \cellcolor[rgb]{ .98,  .918,  .518}21\% & \cellcolor[rgb]{ .557,  .796,  .494}77\% & \cellcolor[rgb]{ .612,  .812,  .498}96\% & \cellcolor[rgb]{ .655,  .824,  .498}116\% \\
    \hline
    \textbf{k-OWA-CVaR-50} & \cellcolor[rgb]{ .996,  .922,  .518}53\% & \cellcolor[rgb]{ .996,  .918,  .514}33\% & \cellcolor[rgb]{ .788,  .863,  .506}-5\% & \cellcolor[rgb]{ .996,  .918,  .514}20\% & \cellcolor[rgb]{ .996,  .922,  .518}64\% & \cellcolor[rgb]{ .988,  .918,  .518}78\% & \cellcolor[rgb]{ .996,  .914,  .514}95\% \\
    \hline
    \textbf{k-OWA-CVaR-75} & \cellcolor[rgb]{ .992,  .8,  .494}43\% & \cellcolor[rgb]{ .816,  .867,  .506}28\% & \cellcolor[rgb]{ .894,  .89,  .514}-7\% & \cellcolor[rgb]{ .988,  .741,  .482}14\% & \cellcolor[rgb]{ .992,  .816,  .494}53\% & \cellcolor[rgb]{ .992,  .8,  .494}63\% & \cellcolor[rgb]{ .988,  .765,  .486}76\% \\
    \hline
    \textbf{k-OWA-CVaR-100} & \cellcolor[rgb]{ .992,  .8,  .494}43\% & \cellcolor[rgb]{ .804,  .863,  .506}27\% & \cellcolor[rgb]{ .71,  .839,  .502}-3\% & \cellcolor[rgb]{ .992,  .831,  .498}17\% & \cellcolor[rgb]{ .992,  .792,  .49}50\% & \cellcolor[rgb]{ .992,  .792,  .49}63\% & \cellcolor[rgb]{ .992,  .792,  .49}80\% \\
    \hline
    \textbf{k-OWA-CCVaR-5} & \cellcolor[rgb]{ .392,  .749,  .486}69\% & \cellcolor[rgb]{ .976,  .435,  .424}45\% & \cellcolor[rgb]{ .992,  .784,  .49}-14\% & \cellcolor[rgb]{ .388,  .745,  .482}28\% & \cellcolor[rgb]{ .404,  .749,  .486}82\% & \cellcolor[rgb]{ .4,  .749,  .486}105\% & \cellcolor[rgb]{ .49,  .776,  .49}125\% \\
    \hline
    \textbf{k-OWA-CCVaR-10} & \cellcolor[rgb]{ .388,  .745,  .482}69\% & \cellcolor[rgb]{ .976,  .416,  .424}45\% & \cellcolor[rgb]{ .996,  .847,  .502}-12\% & \cellcolor[rgb]{ .82,  .871,  .51}23\% & \cellcolor[rgb]{ .388,  .745,  .482}82\% & \cellcolor[rgb]{ .388,  .745,  .482}106\% & \cellcolor[rgb]{ .467,  .769,  .49}126\% \\
    \hline
    \textbf{k-OWA-CCVaR-25} & \cellcolor[rgb]{ .459,  .769,  .49}67\% & \cellcolor[rgb]{ .976,  .478,  .435}44\% & \cellcolor[rgb]{ .949,  .91,  .518}-8\% & \cellcolor[rgb]{ .996,  .918,  .514}20\% & \cellcolor[rgb]{ .431,  .757,  .486}81\% & \cellcolor[rgb]{ .498,  .776,  .49}101\% & \cellcolor[rgb]{ .51,  .78,  .49}124\% \\
    \hline
    \textbf{k-OWA-CCVaR-50} & \cellcolor[rgb]{ .686,  .831,  .502}61\% & \cellcolor[rgb]{ .992,  .737,  .482}38\% & \cellcolor[rgb]{ .694,  .835,  .502}-3\% & \cellcolor[rgb]{ .886,  .89,  .514}22\% & \cellcolor[rgb]{ .655,  .824,  .498}74\% & \cellcolor[rgb]{ .725,  .843,  .502}90\% & \cellcolor[rgb]{ .78,  .859,  .506}108\% \\
    \hline
    \textbf{k-OWA-CCVaR-75} & \cellcolor[rgb]{ .996,  .918,  .514}53\% & \cellcolor[rgb]{ 1,  .922,  .518}33\% & \cellcolor[rgb]{ .804,  .867,  .51}-5\% & \cellcolor[rgb]{ 1,  .922,  .518}20\% & \cellcolor[rgb]{ .996,  .918,  .514}63\% & \cellcolor[rgb]{ .996,  .914,  .514}77\% & \cellcolor[rgb]{ .988,  .922,  .518}97\% \\
    \hline
    \textbf{k-OWA-CCVaR-100} & \cellcolor[rgb]{ .992,  .847,  .502}47\% & \cellcolor[rgb]{ .882,  .886,  .51}30\% & \cellcolor[rgb]{ .812,  .867,  .51}-5\% & \cellcolor[rgb]{ .992,  .792,  .49}16\% & \cellcolor[rgb]{ .996,  .855,  .502}57\% & \cellcolor[rgb]{ .992,  .843,  .502}69\% & \cellcolor[rgb]{ .992,  .831,  .498}85\% \\
    \hline
    \textbf{MinV} & \cellcolor[rgb]{ .988,  .737,  .478}37\% & \cellcolor[rgb]{ .51,  .78,  .486}19\% & \cellcolor[rgb]{ .388,  .745,  .482}3\% & \cellcolor[rgb]{ .592,  .804,  .494}25\% & \cellcolor[rgb]{ .984,  .702,  .475}41\% & \cellcolor[rgb]{ .984,  .686,  .471}50\% & \cellcolor[rgb]{ .984,  .667,  .467}63\% \\
    \hline
    \textbf{MinVsh} & \cellcolor[rgb]{ .984,  .69,  .471}33\% & \cellcolor[rgb]{ .706,  .835,  .498}24\% & \cellcolor[rgb]{ .988,  .749,  .482}-15\% & \cellcolor[rgb]{ .675,  .827,  .502}24\% & \cellcolor[rgb]{ .984,  .682,  .471}39\% & \cellcolor[rgb]{ .984,  .682,  .471}49\% & \cellcolor[rgb]{ .984,  .663,  .467}63\% \\
    \hline
    \textbf{MinVN1} & \cellcolor[rgb]{ .984,  .647,  .463}29\% & \cellcolor[rgb]{ .824,  .871,  .506}28\% & \cellcolor[rgb]{ .973,  .412,  .42}-27\% & \cellcolor[rgb]{ .992,  .831,  .498}17\% & \cellcolor[rgb]{ .984,  .678,  .471}38\% & \cellcolor[rgb]{ .984,  .675,  .467}48\% & \cellcolor[rgb]{ .984,  .655,  .463}62\% \\
    \hline
    \textbf{MinVN2} & \cellcolor[rgb]{ .98,  .604,  .455}25\% & \cellcolor[rgb]{ .51,  .776,  .486}18\% & \cellcolor[rgb]{ .992,  .827,  .498}-12\% & \cellcolor[rgb]{ .992,  .812,  .494}16\% & \cellcolor[rgb]{ .98,  .604,  .455}30\% & \cellcolor[rgb]{ .98,  .588,  .451}38\% & \cellcolor[rgb]{ .976,  .557,  .447}49\% \\
    \hline
    \textbf{Index} & \cellcolor[rgb]{ .973,  .412,  .42}8\% & \cellcolor[rgb]{ .388,  .745,  .482}15\% & \cellcolor[rgb]{ .984,  .635,  .463}-19\% & \cellcolor[rgb]{ .973,  .412,  .42}1\% & \cellcolor[rgb]{ .973,  .412,  .42}10\% & \cellcolor[rgb]{ .973,  .412,  .42}17\% & \cellcolor[rgb]{ .973,  .412,  .42}31\% \\
    \hline
    \end{tabular}%
    }
    }
  \label{tab:ROI750_out_FTSE100_DL125_DH20_new_C}%
\end{table}%
%
%
\begin{table}[htbp]
  \centering
  \caption{Summary statistics of ROI based on a 3-years time horizon for NASDAQ100}
  \scalebox{0.75}{
  {\renewcommand{\arraystretch}{1.1}
        \begin{tabular}{|l|c|c|c|c|c|c|c|}
    \hline
    \textbf{Approach} & \textbf{ExpRet} & \textbf{Vol} & \textbf{5\%-perc} & \textbf{25\%-perc} & \textbf{50\%-perc} & \textbf{75\%-perc} & \textbf{95\%-perc} \\
    \hline
    \textbf{Roman-CVaR} & \cellcolor[rgb]{ .388,  .745,  .482}79\% & \cellcolor[rgb]{ .996,  .847,  .506}30\% & \cellcolor[rgb]{ .533,  .788,  .494}12\% & \cellcolor[rgb]{ .494,  .776,  .49}61\% & \cellcolor[rgb]{ .388,  .745,  .482}86\% & \cellcolor[rgb]{ .388,  .745,  .482}102\% & \cellcolor[rgb]{ .455,  .765,  .486}117\% \\
    \hline
    \textbf{k-OWA-CVaR-5} & \cellcolor[rgb]{ .404,  .749,  .486}79\% & \cellcolor[rgb]{ .996,  .918,  .514}27\% & \cellcolor[rgb]{ .451,  .765,  .486}14\% & \cellcolor[rgb]{ .388,  .745,  .482}66\% & \cellcolor[rgb]{ .427,  .757,  .486}84\% & \cellcolor[rgb]{ .486,  .776,  .49}98\% & \cellcolor[rgb]{ .529,  .788,  .494}114\% \\
    \hline
    \textbf{k-OWA-CVaR-10} & \cellcolor[rgb]{ .518,  .784,  .49}74\% & \cellcolor[rgb]{ 1,  .886,  .514}29\% & \cellcolor[rgb]{ .455,  .765,  .486}14\% & \cellcolor[rgb]{ .596,  .808,  .498}58\% & \cellcolor[rgb]{ .557,  .796,  .494}78\% & \cellcolor[rgb]{ .529,  .788,  .494}96\% & \cellcolor[rgb]{ .604,  .808,  .498}112\% \\
    \hline
    \textbf{k-OWA-CVaR-25} & \cellcolor[rgb]{ .812,  .867,  .51}62\% & \cellcolor[rgb]{ 1,  .922,  .518}28\% & \cellcolor[rgb]{ .871,  .886,  .514}6\% & \cellcolor[rgb]{ .918,  .898,  .514}45\% & \cellcolor[rgb]{ .851,  .878,  .51}64\% & \cellcolor[rgb]{ .843,  .878,  .51}83\% & \cellcolor[rgb]{ .871,  .886,  .514}101\% \\
    \hline
    \textbf{k-OWA-CVaR-50} & \cellcolor[rgb]{ .996,  .863,  .506}52\% & \cellcolor[rgb]{ .957,  .906,  .514}27\% & \cellcolor[rgb]{ .992,  .788,  .49}-4\% & \cellcolor[rgb]{ .996,  .863,  .506}37\% & \cellcolor[rgb]{ .996,  .875,  .506}55\% & \cellcolor[rgb]{ .992,  .835,  .498}72\% & \cellcolor[rgb]{ .992,  .827,  .498}91\% \\
    \hline
    \textbf{k-OWA-CVaR-75} & \cellcolor[rgb]{ .988,  .765,  .486}50\% & \cellcolor[rgb]{ .62,  .812,  .494}21\% & \cellcolor[rgb]{ .996,  .918,  .514}3\% & \cellcolor[rgb]{ .996,  .851,  .502}36\% & \cellcolor[rgb]{ .992,  .82,  .498}54\% & \cellcolor[rgb]{ .984,  .694,  .471}67\% & \cellcolor[rgb]{ .976,  .541,  .443}76\% \\
    \hline
    \textbf{k-OWA-CVaR-100} & \cellcolor[rgb]{ .98,  .62,  .459}46\% & \cellcolor[rgb]{ .388,  .745,  .482}17\% & \cellcolor[rgb]{ .808,  .867,  .51}7\% & \cellcolor[rgb]{ .996,  .855,  .502}36\% & \cellcolor[rgb]{ .984,  .675,  .467}49\% & \cellcolor[rgb]{ .976,  .518,  .439}59\% & \cellcolor[rgb]{ .973,  .412,  .42}69\% \\
    \hline
    \textbf{k-OWA-CCVaR-5} & \cellcolor[rgb]{ .416,  .753,  .486}78\% & \cellcolor[rgb]{ 1,  .91,  .518}28\% & \cellcolor[rgb]{ .388,  .745,  .482}15\% & \cellcolor[rgb]{ .408,  .753,  .486}65\% & \cellcolor[rgb]{ .447,  .765,  .486}83\% & \cellcolor[rgb]{ .486,  .773,  .49}98\% & \cellcolor[rgb]{ .49,  .776,  .49}116\% \\
    \hline
    \textbf{k-OWA-CCVaR-10} & \cellcolor[rgb]{ .529,  .788,  .494}73\% & \cellcolor[rgb]{ 1,  .878,  .51}29\% & \cellcolor[rgb]{ .655,  .824,  .498}10\% & \cellcolor[rgb]{ .592,  .804,  .494}58\% & \cellcolor[rgb]{ .549,  .792,  .494}78\% & \cellcolor[rgb]{ .549,  .792,  .494}95\% & \cellcolor[rgb]{ .608,  .808,  .498}111\% \\
    \hline
    \textbf{k-OWA-CCVaR-25} & \cellcolor[rgb]{ .62,  .812,  .498}70\% & \cellcolor[rgb]{ 1,  .902,  .514}28\% & \cellcolor[rgb]{ .627,  .816,  .498}10\% & \cellcolor[rgb]{ .706,  .839,  .502}53\% & \cellcolor[rgb]{ .678,  .831,  .502}72\% & \cellcolor[rgb]{ .627,  .816,  .498}92\% & \cellcolor[rgb]{ .671,  .827,  .502}109\% \\
    \hline
    \textbf{k-OWA-CCVaR-50} & \cellcolor[rgb]{ .941,  .906,  .518}56\% & \cellcolor[rgb]{ 1,  .882,  .51}29\% & \cellcolor[rgb]{ .992,  .824,  .498}-2\% & \cellcolor[rgb]{ .996,  .875,  .506}38\% & \cellcolor[rgb]{ .976,  .918,  .518}58\% & \cellcolor[rgb]{ .945,  .906,  .518}79\% & \cellcolor[rgb]{ .914,  .898,  .514}99\% \\
    \hline
    \textbf{k-OWA-CCVaR-75} & \cellcolor[rgb]{ .996,  .902,  .514}53\% & \cellcolor[rgb]{ .961,  .91,  .514}27\% & \cellcolor[rgb]{ .992,  .804,  .494}-3\% & \cellcolor[rgb]{ .996,  .871,  .506}38\% & \cellcolor[rgb]{ .996,  .898,  .51}56\% & \cellcolor[rgb]{ .996,  .863,  .506}74\% & \cellcolor[rgb]{ .996,  .851,  .502}92\% \\
    \hline
    \textbf{k-OWA-CCVaR-100} & \cellcolor[rgb]{ .992,  .804,  .494}51\% & \cellcolor[rgb]{ .643,  .816,  .494}22\% & \cellcolor[rgb]{ 1,  .922,  .518}3\% & \cellcolor[rgb]{ .996,  .859,  .506}37\% & \cellcolor[rgb]{ .996,  .855,  .502}55\% & \cellcolor[rgb]{ .988,  .729,  .478}68\% & \cellcolor[rgb]{ .98,  .573,  .451}78\% \\
    \hline
    \textbf{MinV} & \cellcolor[rgb]{ .973,  .412,  .42}41\% & \cellcolor[rgb]{ .49,  .773,  .486}19\% & \cellcolor[rgb]{ .996,  .867,  .506}0\% & \cellcolor[rgb]{ .988,  .765,  .486}30\% & \cellcolor[rgb]{ .976,  .502,  .435}44\% & \cellcolor[rgb]{ .973,  .412,  .42}55\% & \cellcolor[rgb]{ .973,  .42,  .42}70\% \\
    \hline
    \textbf{MinVsh} & \cellcolor[rgb]{ .973,  .455,  .427}42\% & \cellcolor[rgb]{ .973,  .412,  .42}46\% & \cellcolor[rgb]{ .973,  .412,  .42}-25\% & \cellcolor[rgb]{ .973,  .412,  .42}3\% & \cellcolor[rgb]{ .973,  .424,  .42}41\% & \cellcolor[rgb]{ .918,  .898,  .514}80\% & \cellcolor[rgb]{ .388,  .745,  .482}120\% \\
    \hline
    \textbf{MinVN1} & \cellcolor[rgb]{ .973,  .443,  .424}42\% & \cellcolor[rgb]{ 1,  .855,  .506}30\% & \cellcolor[rgb]{ .98,  .612,  .455}-14\% & \cellcolor[rgb]{ .98,  .6,  .455}18\% & \cellcolor[rgb]{ .973,  .412,  .42}41\% & \cellcolor[rgb]{ .988,  .761,  .486}69\% & \cellcolor[rgb]{ .988,  .71,  .475}85\% \\
    \hline
    \textbf{MinVN2} & \cellcolor[rgb]{ .996,  .871,  .506}52\% & \cellcolor[rgb]{ .878,  .886,  .51}26\% & \cellcolor[rgb]{ .984,  .627,  .459}-13\% & \cellcolor[rgb]{ .804,  .867,  .51}49\% & \cellcolor[rgb]{ .988,  .918,  .518}58\% & \cellcolor[rgb]{ .988,  .714,  .475}67\% & \cellcolor[rgb]{ .984,  .655,  .467}82\% \\
    \hline
    \textbf{Index} & \cellcolor[rgb]{ .992,  .922,  .518}54\% & \cellcolor[rgb]{ .753,  .851,  .502}23\% & \cellcolor[rgb]{ .996,  .894,  .51}2\% & \cellcolor[rgb]{ .914,  .898,  .514}45\% & \cellcolor[rgb]{ .996,  .875,  .506}55\% & \cellcolor[rgb]{ .988,  .725,  .478}68\% & \cellcolor[rgb]{ .992,  .784,  .49}88\% \\
    \hline
    \end{tabular}%
    }
    }
  \label{tab:ROI750_out_NASDAQ100_DL125_DH20_new_C}%
\end{table}%
%
%
\begin{table}[htbp]
  \centering
  \caption{Summary statistics of ROI based on a 3-years time horizon for SP500}
  \scalebox{0.75}{
  {\renewcommand{\arraystretch}{1.1}
        \begin{tabular}{|l|c|c|c|c|c|c|c|}
    \hline
    \textbf{Approach} & \textbf{ExpRet} & \textbf{Vol} & \textbf{5\%-perc} & \textbf{25\%-perc} & \textbf{50\%-perc} & \textbf{75\%-perc} & \textbf{95\%-perc} \\
    \hline
    \textbf{Roman-CVaR} & \cellcolor[rgb]{ .627,  .816,  .498}37\% & \cellcolor[rgb]{ .976,  .914,  .514}24\% & \cellcolor[rgb]{ .388,  .745,  .482}-13\% & \cellcolor[rgb]{ .961,  .914,  .518}28\% & \cellcolor[rgb]{ .769,  .855,  .506}40\% & \cellcolor[rgb]{ .855,  .882,  .51}48\% & \cellcolor[rgb]{ .388,  .745,  .482}77\% \\
    \hline
    \textbf{k-OWA-CVaR-5} & \cellcolor[rgb]{ .388,  .745,  .482}40\% & \cellcolor[rgb]{ 1,  .867,  .51}24\% & \cellcolor[rgb]{ .592,  .804,  .494}-15\% & \cellcolor[rgb]{ .459,  .769,  .49}33\% & \cellcolor[rgb]{ .443,  .761,  .486}43\% & \cellcolor[rgb]{ .412,  .753,  .486}53\% & \cellcolor[rgb]{ .455,  .765,  .486}76\% \\
    \hline
    \textbf{k-OWA-CVaR-10} & \cellcolor[rgb]{ .553,  .792,  .494}38\% & \cellcolor[rgb]{ 1,  .875,  .51}24\% & \cellcolor[rgb]{ .961,  .914,  .518}-19\% & \cellcolor[rgb]{ .439,  .761,  .486}33\% & \cellcolor[rgb]{ .545,  .792,  .494}42\% & \cellcolor[rgb]{ .671,  .827,  .502}50\% & \cellcolor[rgb]{ .584,  .804,  .494}72\% \\
    \hline
    \textbf{k-OWA-CVaR-25} & \cellcolor[rgb]{ .847,  .878,  .51}34\% & \cellcolor[rgb]{ .925,  .898,  .51}23\% & \cellcolor[rgb]{ .894,  .894,  .514}-18\% & \cellcolor[rgb]{ .824,  .871,  .51}29\% & \cellcolor[rgb]{ .851,  .878,  .51}39\% & \cellcolor[rgb]{ .996,  .902,  .514}46\% & \cellcolor[rgb]{ .843,  .878,  .51}65\% \\
    \hline
    \textbf{k-OWA-CVaR-50} & \cellcolor[rgb]{ .996,  .855,  .502}31\% & \cellcolor[rgb]{ 1,  .894,  .514}24\% & \cellcolor[rgb]{ .992,  .78,  .49}-24\% & \cellcolor[rgb]{ .992,  .816,  .494}25\% & \cellcolor[rgb]{ .996,  .863,  .506}36\% & \cellcolor[rgb]{ .957,  .91,  .518}47\% & \cellcolor[rgb]{ .988,  .729,  .478}57\% \\
    \hline
    \textbf{k-OWA-CVaR-75} & \cellcolor[rgb]{ .984,  .694,  .471}27\% & \cellcolor[rgb]{ .584,  .8,  .49}21\% & \cellcolor[rgb]{ .992,  .804,  .494}-24\% & \cellcolor[rgb]{ .988,  .769,  .486}24\% & \cellcolor[rgb]{ .984,  .702,  .475}32\% & \cellcolor[rgb]{ .98,  .565,  .447}42\% & \cellcolor[rgb]{ .973,  .412,  .42}50\% \\
    \hline
    \textbf{k-OWA-CVaR-100} & \cellcolor[rgb]{ .992,  .839,  .502}30\% & \cellcolor[rgb]{ .435,  .757,  .482}21\% & \cellcolor[rgb]{ .996,  .91,  .514}-20\% & \cellcolor[rgb]{ .996,  .898,  .514}27\% & \cellcolor[rgb]{ .996,  .863,  .506}36\% & \cellcolor[rgb]{ .984,  .663,  .467}43\% & \cellcolor[rgb]{ .973,  .475,  .431}51\% \\
    \hline
    \textbf{k-OWA-CCVaR-5} & \cellcolor[rgb]{ .478,  .773,  .49}39\% & \cellcolor[rgb]{ 1,  .914,  .518}24\% & \cellcolor[rgb]{ .518,  .784,  .49}-14\% & \cellcolor[rgb]{ .565,  .796,  .494}32\% & \cellcolor[rgb]{ .592,  .804,  .494}42\% & \cellcolor[rgb]{ .573,  .8,  .494}51\% & \cellcolor[rgb]{ .494,  .776,  .49}75\% \\
    \hline
    \textbf{k-OWA-CCVaR-10} & \cellcolor[rgb]{ .431,  .761,  .486}39\% & \cellcolor[rgb]{ 1,  .894,  .514}24\% & \cellcolor[rgb]{ .608,  .808,  .498}-15\% & \cellcolor[rgb]{ .388,  .745,  .482}34\% & \cellcolor[rgb]{ .475,  .773,  .49}43\% & \cellcolor[rgb]{ .506,  .78,  .49}52\% & \cellcolor[rgb]{ .486,  .776,  .49}75\% \\
    \hline
    \textbf{k-OWA-CCVaR-25} & \cellcolor[rgb]{ .702,  .835,  .502}36\% & \cellcolor[rgb]{ .82,  .867,  .506}23\% & \cellcolor[rgb]{ .745,  .851,  .506}-17\% & \cellcolor[rgb]{ .725,  .843,  .502}30\% & \cellcolor[rgb]{ .788,  .863,  .506}40\% & \cellcolor[rgb]{ .925,  .902,  .514}47\% & \cellcolor[rgb]{ .69,  .835,  .502}69\% \\
    \hline
    \textbf{k-OWA-CCVaR-50} & \cellcolor[rgb]{ .824,  .871,  .51}34\% & \cellcolor[rgb]{ .929,  .898,  .51}23\% & \cellcolor[rgb]{ .886,  .89,  .514}-18\% & \cellcolor[rgb]{ .773,  .859,  .506}30\% & \cellcolor[rgb]{ .925,  .902,  .514}38\% & \cellcolor[rgb]{ .996,  .906,  .514}46\% & \cellcolor[rgb]{ .808,  .867,  .51}66\% \\
    \hline
    \textbf{k-OWA-CCVaR-75} & \cellcolor[rgb]{ .996,  .878,  .51}31\% & \cellcolor[rgb]{ 1,  .91,  .518}24\% & \cellcolor[rgb]{ .992,  .8,  .494}-24\% & \cellcolor[rgb]{ .996,  .859,  .502}26\% & \cellcolor[rgb]{ .996,  .886,  .51}37\% & \cellcolor[rgb]{ .988,  .918,  .518}46\% & \cellcolor[rgb]{ .988,  .753,  .482}57\% \\
    \hline
    \textbf{k-OWA-CCVaR-100} & \cellcolor[rgb]{ .988,  .729,  .478}28\% & \cellcolor[rgb]{ .773,  .855,  .502}22\% & \cellcolor[rgb]{ .988,  .737,  .482}-26\% & \cellcolor[rgb]{ .992,  .82,  .498}25\% & \cellcolor[rgb]{ .988,  .765,  .486}34\% & \cellcolor[rgb]{ .984,  .671,  .467}43\% & \cellcolor[rgb]{ .973,  .463,  .427}51\% \\
    \hline
    \textbf{MinV} & \cellcolor[rgb]{ .984,  .639,  .463}26\% & \cellcolor[rgb]{ .388,  .745,  .482}20\% & \cellcolor[rgb]{ .729,  .843,  .502}-16\% & \cellcolor[rgb]{ .976,  .514,  .439}18\% & \cellcolor[rgb]{ .973,  .447,  .424}27\% & \cellcolor[rgb]{ .984,  .663,  .467}43\% & \cellcolor[rgb]{ .976,  .549,  .443}53\% \\
    \hline
    \textbf{MinVsh} & \cellcolor[rgb]{ .941,  .906,  .518}33\% & \cellcolor[rgb]{ .973,  .412,  .42}31\% & \cellcolor[rgb]{ .973,  .412,  .42}-38\% & \cellcolor[rgb]{ .929,  .902,  .514}28\% & \cellcolor[rgb]{ .388,  .745,  .482}44\% & \cellcolor[rgb]{ .388,  .745,  .482}53\% & \cellcolor[rgb]{ .973,  .914,  .518}62\% \\
    \hline
    \textbf{MinVN1} & \cellcolor[rgb]{ .973,  .412,  .42}22\% & \cellcolor[rgb]{ .996,  .808,  .498}25\% & \cellcolor[rgb]{ .976,  .533,  .439}-34\% & \cellcolor[rgb]{ .973,  .412,  .42}16\% & \cellcolor[rgb]{ .973,  .412,  .42}26\% & \cellcolor[rgb]{ .973,  .412,  .42}40\% & \cellcolor[rgb]{ .976,  .51,  .435}52\% \\
    \hline
    \textbf{MinVN2} & \cellcolor[rgb]{ .988,  .722,  .478}28\% & \cellcolor[rgb]{ 1,  .918,  .518}24\% & \cellcolor[rgb]{ .98,  .624,  .459}-30\% & \cellcolor[rgb]{ .992,  .8,  .494}24\% & \cellcolor[rgb]{ .992,  .808,  .494}35\% & \cellcolor[rgb]{ .984,  .682,  .471}43\% & \cellcolor[rgb]{ .976,  .49,  .435}51\% \\
    \hline
    \textbf{Index} & \cellcolor[rgb]{ .992,  .824,  .498}30\% & \cellcolor[rgb]{ .608,  .808,  .494}22\% & \cellcolor[rgb]{ .992,  .847,  .502}-22\% & \cellcolor[rgb]{ .988,  .769,  .486}24\% & \cellcolor[rgb]{ .984,  .69,  .471}32\% & \cellcolor[rgb]{ .98,  .561,  .447}42\% & \cellcolor[rgb]{ .996,  .882,  .51}60\% \\
    \hline
    \end{tabular}%
    }
    }
  \label{tab:ROI750_out_SP500_DL125_DH20_new_C}%
\end{table}%

\newpage

\section{Conclusions \label{sec:Conclusions}}

This paper has analyzed portfolio selection strategies for EI that are based on SD relations.
We have introduced a new type of approximate stochastic dominance rule, named $C S \mathcal{E} S D$, and we have proved that it implies the almost SSD proposed by \cite{lizyayev2012tractable}.
We have proposed a new methodology that select portfolios using the OWA  operator which generalizes previous approaches based on minimax selection rules, and still leads to solving linear programming models.
Next, we have proved that our enhanced indexation model based on OWA selects portfolios that dominate the benchmarks in terms of this new form of stochastic dominance rule, and it highlights encouraging out-of-sample performances with respect to a benchmark index.
Finally, we have tested and validated  the performance of our new proposal reporting an extensive empirical analysis based on three real-world datasets belonging to major stock markets across the world (FTSE100, NASDAQ100 and SP500) and on the Fama and French 49 industry portfolios.

This paper has also opened new avenues of research into the EI field. In a follow-up paper, we propose to extend the analysis to portfolio selection strategies based on approximate stochastic dominance conditions by considering an appropriate reshaping of the original benchmark index, as suggested by \cite{valle2017novel} and \cite{cesarone2022comparing}.


{\footnotesize
\bibliographystyle{apa}
\bibliography{BibbaseCPM_SSD_20220523}
}

\appendix

\section{SD-based strategies using Tails}\label{sec:OWATails}

For the sake of completeness, we report here additional tables containing the
computational results obtained by the OWA maximization of centered Tails approach along with the RomanTail and KP2011Min strategies, described in Section \ref{sec:EmpiricalAnalysis}. These portfolios are also compared with the MinV portfolio and the Market Index.
More precisely, in Tables \ref{tab:Perf_out_FTSE100_DL125_DH20}, \ref{tab:Perf_out_NASDAQ100_DL125_DH20}, and \ref{tab:Perf_out_SP500_DL125_DH20},  we show the out-of-sample performance results obtained on the FTSE100, NASDAQ100, and SP500 datasets, respectively. 
From these empirical findings we can observe three main aspects. On the one hand, as also highlighted by \cite{roman2013enhanced}, the portfolios obtained with the SD models using centered CVaRs, except for FTSE100, lead to better performance.
On the other hand, we can notice that the $k$-OWA-Tail-$\beta$ and $k$-OWA-CTail-$\beta$ strategies with low $\beta$ typically continue to show the best performance in terms of profitability.
Finally, the risk mitigation feature of our approach, obtained by increasing the $\beta$ values, is no longer found in the ``Tails'' case.
%
\begin{table}[htbp]
  \centering
  \caption{Out-of-sample performance results of the SD-based strategies using Tails compared with the MinV portfolio and the Market Index on the FTSE100 daily dataset}
  \scalebox{0.65}{
  {\renewcommand{\arraystretch}{1.1}
        \begin{tabular}{|l|c|c|c|c|c|c|c|c|c|c|c|c|}
    \hline
    \textbf{Approach} & \textbf{ExpRet} & \textbf{Vol} & \textbf{Sharpe} & \textbf{MDD} & \textbf{Ulcer} & \textbf{Sortino} & \textbf{Rachev} & \textbf{AlphaJ} & \textbf{VaR1} & \textbf{Omega} & \textbf{ave \#} & \textbf{time} \\
    \hline
    \textbf{Index} & \cellcolor[rgb]{ .973,  .412,  .42}0.0083\% & \cellcolor[rgb]{ .973,  .412,  .42}1.2117\% & \cellcolor[rgb]{ .973,  .412,  .42}0.0069 & \cellcolor[rgb]{ .98,  .58,  .451}-0.4783 & \cellcolor[rgb]{ .973,  .412,  .42}0.1549 & \cellcolor[rgb]{ .973,  .412,  .42}0.0095 & \cellcolor[rgb]{ .973,  .412,  .42}0.9257 &       & \cellcolor[rgb]{ .973,  .412,  .42}0.0345 & \cellcolor[rgb]{ .973,  .412,  .42}1.0210 &       &  \\
    \hline
    \textbf{Roman-Tail} & \cellcolor[rgb]{ .451,  .765,  .486}0.0541\% & \cellcolor[rgb]{ .992,  .718,  .478}1.0800\% & \cellcolor[rgb]{ .58,  .804,  .494}0.0501 & \cellcolor[rgb]{ .973,  .412,  .42}-0.5032 & \cellcolor[rgb]{ 1,  .894,  .514}0.1140 & \cellcolor[rgb]{ .494,  .776,  .49}0.0742 & \cellcolor[rgb]{ .388,  .745,  .482}1.0280 & \cellcolor[rgb]{ .455,  .765,  .486}0.00049 & \cellcolor[rgb]{ .992,  .773,  .49}0.0293 & \cellcolor[rgb]{ .502,  .78,  .49}1.1630 & 10.5  & 30.0 \\
    \hline
    \textbf{k-OWA-Tail-5} & \cellcolor[rgb]{ .537,  .788,  .494}0.0520\% & \cellcolor[rgb]{ 1,  .882,  .51}1.0080\% & \cellcolor[rgb]{ .506,  .78,  .49}0.0516 & \cellcolor[rgb]{ .973,  .463,  .427}-0.4956 & \cellcolor[rgb]{ .996,  .784,  .494}0.1232 & \cellcolor[rgb]{ .514,  .784,  .49}0.0736 & \cellcolor[rgb]{ .839,  .878,  .51}0.9736 & \cellcolor[rgb]{ .541,  .788,  .494}0.00047 & \cellcolor[rgb]{ 1,  .855,  .506}0.0282 & \cellcolor[rgb]{ .529,  .788,  .494}1.1611 & 11.5  & 24.4 \\
    \hline
    \textbf{k-OWA-Tail-10} & \cellcolor[rgb]{ .741,  .847,  .506}0.0472\% & \cellcolor[rgb]{ .839,  .875,  .506}0.9732\% & \cellcolor[rgb]{ .659,  .824,  .498}0.0485 & \cellcolor[rgb]{ .992,  .816,  .494}-0.4436 & \cellcolor[rgb]{ .914,  .894,  .51}0.1085 & \cellcolor[rgb]{ .675,  .831,  .502}0.0686 & \cellcolor[rgb]{ .929,  .902,  .514}0.9628 & \cellcolor[rgb]{ .745,  .851,  .506}0.00042 & \cellcolor[rgb]{ .859,  .878,  .506}0.0270 & \cellcolor[rgb]{ .694,  .835,  .502}1.1497 & 12.6  & 32.4 \\
    \hline
    \textbf{k-OWA-Tail-25} & \cellcolor[rgb]{ .996,  .918,  .514}0.0409\% & \cellcolor[rgb]{ .878,  .886,  .51}0.9774\% & \cellcolor[rgb]{ .988,  .922,  .518}0.0418 & \cellcolor[rgb]{ .953,  .91,  .518}-0.4246 & \cellcolor[rgb]{ .996,  .918,  .514}0.1113 & \cellcolor[rgb]{ .988,  .922,  .518}0.0589 & \cellcolor[rgb]{ .992,  .792,  .49}0.9467 & \cellcolor[rgb]{ .996,  .875,  .506}0.00036 & \cellcolor[rgb]{ .706,  .835,  .498}0.0268 & \cellcolor[rgb]{ .992,  .922,  .518}1.1291 & 13.6  & 36.9 \\
    \hline
    \textbf{k-OWA-Tail-50} & \cellcolor[rgb]{ .996,  .894,  .51}0.0395\% & \cellcolor[rgb]{ .98,  .914,  .514}0.9882\% & \cellcolor[rgb]{ .996,  .898,  .51}0.0400 & \cellcolor[rgb]{ .706,  .839,  .502}-0.4049 & \cellcolor[rgb]{ .941,  .902,  .514}0.1095 & \cellcolor[rgb]{ .996,  .898,  .51}0.0564 & \cellcolor[rgb]{ .992,  .922,  .518}0.9552 & \cellcolor[rgb]{ .988,  .733,  .478}0.00034 & \cellcolor[rgb]{ .58,  .8,  .49}0.0267 & \cellcolor[rgb]{ .996,  .894,  .51}1.1232 & 14.7  & 47.7 \\
    \hline
    \textbf{k-OWA-Tail-75} & \cellcolor[rgb]{ .996,  .914,  .514}0.0407\% & \cellcolor[rgb]{ 1,  .91,  .518}0.9959\% & \cellcolor[rgb]{ .996,  .91,  .514}0.0408 & \cellcolor[rgb]{ .796,  .863,  .506}-0.4120 & \cellcolor[rgb]{ 1,  .922,  .518}0.1114 & \cellcolor[rgb]{ .996,  .91,  .514}0.0576 & \cellcolor[rgb]{ .996,  .882,  .51}0.9519 & \cellcolor[rgb]{ .992,  .843,  .502}0.00035 & \cellcolor[rgb]{ 1,  .886,  .514}0.0277 & \cellcolor[rgb]{ .996,  .91,  .514}1.1265 & 15.0  & 61.7 \\
    \hline
    \textbf{k-OWA-Tail-100} & \cellcolor[rgb]{ .969,  .914,  .518}0.0419\% & \cellcolor[rgb]{ 1,  .871,  .51}1.0138\% & \cellcolor[rgb]{ .996,  .918,  .514}0.0413 & \cellcolor[rgb]{ .996,  .894,  .51}-0.4320 & \cellcolor[rgb]{ .996,  .835,  .502}0.1190 & \cellcolor[rgb]{ .996,  .914,  .514}0.0580 & \cellcolor[rgb]{ .992,  .804,  .494}0.9475 & \cellcolor[rgb]{ .988,  .918,  .518}0.00036 & \cellcolor[rgb]{ 1,  .855,  .506}0.0282 & \cellcolor[rgb]{ .996,  .918,  .514}1.1279 & 14.3  & 69.7 \\
    \hline
    \textbf{k-OWA-CTail-5} & \cellcolor[rgb]{ .388,  .745,  .482}0.0555\% & \cellcolor[rgb]{ .996,  .835,  .502}1.0285\% & \cellcolor[rgb]{ .388,  .745,  .482}0.0540 & \cellcolor[rgb]{ .976,  .514,  .439}-0.4881 & \cellcolor[rgb]{ .996,  .847,  .506}0.1178 & \cellcolor[rgb]{ .388,  .745,  .482}0.0774 & \cellcolor[rgb]{ .729,  .847,  .506}0.9867 & \cellcolor[rgb]{ .388,  .745,  .482}0.00050 & \cellcolor[rgb]{ .996,  .831,  .502}0.0285 & \cellcolor[rgb]{ .388,  .745,  .482}1.1706 & 11.2  & 17.9 \\
    \hline
    \textbf{k-OWA-CTail-10} & \cellcolor[rgb]{ .608,  .808,  .498}0.0504\% & \cellcolor[rgb]{ .988,  .918,  .514}0.9893\% & \cellcolor[rgb]{ .541,  .792,  .494}0.0509 & \cellcolor[rgb]{ .988,  .725,  .478}-0.4570 & \cellcolor[rgb]{ .996,  .847,  .506}0.1180 & \cellcolor[rgb]{ .549,  .792,  .494}0.0725 & \cellcolor[rgb]{ .914,  .898,  .514}0.9646 & \cellcolor[rgb]{ .608,  .812,  .498}0.00045 & \cellcolor[rgb]{ .843,  .875,  .506}0.0270 & \cellcolor[rgb]{ .573,  .8,  .494}1.1582 & 12.1  & 17.4 \\
    \hline
    \textbf{k-OWA-CTail-25} & \cellcolor[rgb]{ .996,  .922,  .518}0.0412\% & \cellcolor[rgb]{ .757,  .851,  .502}0.9649\% & \cellcolor[rgb]{ .945,  .906,  .518}0.0427 & \cellcolor[rgb]{ .996,  .871,  .506}-0.4354 & \cellcolor[rgb]{ .992,  .918,  .514}0.1111 & \cellcolor[rgb]{ .949,  .91,  .518}0.0600 & \cellcolor[rgb]{ .988,  .749,  .482}0.9443 & \cellcolor[rgb]{ 1,  .922,  .518}0.00036 & \cellcolor[rgb]{ .4,  .749,  .482}0.0264 & \cellcolor[rgb]{ .961,  .91,  .518}1.1314 & 13.4  & 17.1 \\
    \hline
    \textbf{k-OWA-CTail-50} & \cellcolor[rgb]{ .996,  .898,  .514}0.0398\% & \cellcolor[rgb]{ .875,  .882,  .51}0.9770\% & \cellcolor[rgb]{ .996,  .906,  .514}0.0407 & \cellcolor[rgb]{ .635,  .82,  .498}-0.3995 & \cellcolor[rgb]{ .835,  .871,  .506}0.1059 & \cellcolor[rgb]{ .996,  .91,  .514}0.0573 & \cellcolor[rgb]{ .992,  .82,  .498}0.9483 & \cellcolor[rgb]{ .988,  .769,  .486}0.00035 & \cellcolor[rgb]{ .388,  .745,  .482}0.0264 & \cellcolor[rgb]{ .996,  .906,  .514}1.1253 & 14.2  & 34.2 \\
    \hline
    \textbf{k-OWA-CTail-75} & \cellcolor[rgb]{ .996,  .898,  .51}0.0397\% & \cellcolor[rgb]{ .937,  .902,  .514}0.9837\% & \cellcolor[rgb]{ .996,  .902,  .514}0.0403 & \cellcolor[rgb]{ .686,  .831,  .502}-0.4034 & \cellcolor[rgb]{ .925,  .898,  .51}0.1089 & \cellcolor[rgb]{ .996,  .902,  .514}0.0569 & \cellcolor[rgb]{ .996,  .898,  .51}0.9526 & \cellcolor[rgb]{ .988,  .749,  .482}0.00034 & \cellcolor[rgb]{ .455,  .761,  .482}0.0265 & \cellcolor[rgb]{ .996,  .898,  .514}1.1243 & 14.6  & 78.9 \\
    \hline
    \textbf{k-OWA-CTail-100} & \cellcolor[rgb]{ .996,  .91,  .514}0.0403\% & \cellcolor[rgb]{ 1,  .922,  .518}0.9911\% & \cellcolor[rgb]{ .996,  .906,  .514}0.0407 & \cellcolor[rgb]{ .757,  .855,  .506}-0.4091 & \cellcolor[rgb]{ 1,  .922,  .518}0.1115 & \cellcolor[rgb]{ .996,  .91,  .514}0.0574 & \cellcolor[rgb]{ .996,  .894,  .51}0.9525 & \cellcolor[rgb]{ .992,  .808,  .494}0.00035 & \cellcolor[rgb]{ 1,  .914,  .518}0.0273 & \cellcolor[rgb]{ .996,  .906,  .514}1.1257 & 14.8  & 132.7 \\
    \hline
    \textbf{MinV} & \cellcolor[rgb]{ .992,  .843,  .502}0.0362\% & \cellcolor[rgb]{ .388,  .745,  .482}0.9254\% & \cellcolor[rgb]{ .996,  .882,  .51}0.0391 & \cellcolor[rgb]{ .388,  .745,  .482}-0.3802 & \cellcolor[rgb]{ .388,  .745,  .482}0.0909 & \cellcolor[rgb]{ .996,  .878,  .506}0.0545 & \cellcolor[rgb]{ .969,  .914,  .518}0.9577 & \cellcolor[rgb]{ .973,  .412,  .42}0.00031 & \cellcolor[rgb]{ .902,  .894,  .51}0.0271 & \cellcolor[rgb]{ .996,  .886,  .51}1.1215 & 17.9  & 0.0 \\
    \hline
    \textbf{KP2011Min} & \cellcolor[rgb]{ .651,  .824,  .498}0.0494\% & \cellcolor[rgb]{ 1,  .867,  .51}  0.0102  & \cellcolor[rgb]{ .655,  .824,  .498} 0.0486  & \cellcolor[rgb]{ .894,  .894,  .514}-0.4199  & \cellcolor[rgb]{ .863,  .882,  .51} 0.1069  & \cellcolor[rgb]{ .643,  .82,  .498} 0.0695  & \cellcolor[rgb]{ .992,  .922,  .518}  0.9551  & \cellcolor[rgb]{ .655,  .824,  .498} 0.00044  & \cellcolor[rgb]{ .996,  .808,  .498} 0.0289  & \cellcolor[rgb]{ .631,  .816,  .498} 1.1541  & 13.1  & 51.4 \\
    \hline
    \end{tabular}%
    }
    }
  \label{tab:Perf_out_FTSE100_DL125_DH20}%
\end{table}%
%
%
\begin{table}[htbp]
  \centering
  \caption{Out-of-sample performance results of the SD-based strategies using Tails compared with the MinV portfolio and the Market Index on the NASDAQ100 daily dataset}
  \scalebox{0.65}{
  {\renewcommand{\arraystretch}{1.1}
        \begin{tabular}{|l|r|r|r|r|r|r|r|r|r|r|r|r|}
    \hline
    \textbf{Approach} & \multicolumn{1}{l|}{\textbf{ExpRet}} & \multicolumn{1}{l|}{\textbf{Vol}} & \multicolumn{1}{l|}{\textbf{Sharpe}} & \multicolumn{1}{l|}{\textbf{MDD}} & \multicolumn{1}{l|}{\textbf{Ulcer}} & \multicolumn{1}{l|}{\textbf{Sortino}} & \multicolumn{1}{l|}{\textbf{Rachev5}} & \multicolumn{1}{l|}{\textbf{AlphaJ}} & \multicolumn{1}{l|}{\textbf{VaR1}} & \multicolumn{1}{l|}{\textbf{Omega}} & \multicolumn{1}{l|}{\textbf{ave \#}} & \multicolumn{1}{l|}{\textbf{time}} \\
    \hline
    \textbf{Index} & \cellcolor[rgb]{ .388,  .745,  .482}0.0651\% & \cellcolor[rgb]{ .973,  .412,  .42}1.4093\% & \cellcolor[rgb]{ .996,  .871,  .506}0.0462 & \cellcolor[rgb]{ .973,  .412,  .42}-0.5371 & \cellcolor[rgb]{ .973,  .412,  .42}0.1339 & \cellcolor[rgb]{ .992,  .839,  .498}0.0654 & \cellcolor[rgb]{ .996,  .871,  .506}0.9481 &       & \cellcolor[rgb]{ .973,  .412,  .42}0.0407 & \cellcolor[rgb]{ .996,  .906,  .514}1.1523 &       &  \\
    \hline
    \textbf{Roman-Tail} & \cellcolor[rgb]{ .996,  .855,  .502}0.0522\% & \cellcolor[rgb]{ 1,  .886,  .514}1.1618\% & \cellcolor[rgb]{ .988,  .769,  .486}0.0450 & \cellcolor[rgb]{ .988,  .733,  .478}-0.4293 & \cellcolor[rgb]{ .996,  .792,  .494}0.1081 & \cellcolor[rgb]{ .992,  .812,  .494}0.0649 & \cellcolor[rgb]{ .388,  .745,  .482}0.9742 & \cellcolor[rgb]{ .992,  .808,  .494} 0.00009  & \cellcolor[rgb]{ 1,  .875,  .51}0.0329 & \cellcolor[rgb]{ .988,  .722,  .478}1.1443 & 7.9   & 15.9 \\
    \hline
    \textbf{k-OWA-Tail-5} & \cellcolor[rgb]{ .996,  .894,  .51}0.0529\% & \cellcolor[rgb]{ 1,  .922,  .518}1.1439\% & \cellcolor[rgb]{ .996,  .882,  .51}0.0463 & \cellcolor[rgb]{ .988,  .753,  .482}-0.4231 & \cellcolor[rgb]{ .992,  .773,  .49}0.1094 & \cellcolor[rgb]{ .996,  .91,  .514}0.0666 & \cellcolor[rgb]{ .443,  .761,  .486}0.9722 & \cellcolor[rgb]{ .996,  .878,  .51} 0.00010  & \cellcolor[rgb]{ .882,  .886,  .51}0.0314 & \cellcolor[rgb]{ .992,  .792,  .49}1.1474 & 8.3   & 16.8 \\
    \hline
    \textbf{k-OWA-Tail-10} & \cellcolor[rgb]{ .627,  .816,  .498}0.0606\% & \cellcolor[rgb]{ 1,  .918,  .518}1.1446\% & \cellcolor[rgb]{ .388,  .745,  .482}0.0529 & \cellcolor[rgb]{ .925,  .902,  .514}-0.3657 & \cellcolor[rgb]{ .388,  .745,  .482}0.0824 & \cellcolor[rgb]{ .388,  .745,  .482}0.0760 & \cellcolor[rgb]{ .475,  .773,  .49}0.9710 & \cellcolor[rgb]{ .388,  .745,  .482} 0.00018  & \cellcolor[rgb]{ .847,  .875,  .506}0.0312 & \cellcolor[rgb]{ .388,  .745,  .482}1.1706 & 9.0   & 22.2 \\
    \hline
    \textbf{k-OWA-Tail-25} & \cellcolor[rgb]{ .898,  .894,  .514}0.0554\% & \cellcolor[rgb]{ 1,  .898,  .514}1.1547\% & \cellcolor[rgb]{ .878,  .886,  .514}0.0480 & \cellcolor[rgb]{ .957,  .91,  .518}-0.3667 & \cellcolor[rgb]{ .871,  .882,  .51}0.0957 & \cellcolor[rgb]{ .886,  .89,  .514}0.0686 & \cellcolor[rgb]{ .8,  .867,  .51}0.9584 & \cellcolor[rgb]{ .847,  .878,  .51} 0.00013  & \cellcolor[rgb]{ 1,  .894,  .514}0.0326 & \cellcolor[rgb]{ .922,  .898,  .514}1.1553 & 9.8   & 27.6 \\
    \hline
    \textbf{k-OWA-Tail-50} & \cellcolor[rgb]{ .969,  .914,  .518}0.0540\% & \cellcolor[rgb]{ .976,  .914,  .514}1.1368\% & \cellcolor[rgb]{ .925,  .902,  .514}0.0475 & \cellcolor[rgb]{ .878,  .89,  .514}-0.3643 & \cellcolor[rgb]{ 1,  .91,  .518}0.1002 & \cellcolor[rgb]{ .945,  .906,  .518}0.0677 & \cellcolor[rgb]{ .996,  .875,  .506}0.9484 & \cellcolor[rgb]{ .961,  .91,  .518} 0.00012  & \cellcolor[rgb]{ .992,  .918,  .514}0.0320 & \cellcolor[rgb]{ .957,  .91,  .518}1.1542 & 11.1  & 36.8 \\
    \hline
    \textbf{k-OWA-Tail-75} & \cellcolor[rgb]{ .996,  .875,  .506}0.0525\% & \cellcolor[rgb]{ .91,  .894,  .51}1.1219\% & \cellcolor[rgb]{ .988,  .922,  .518}0.0468 & \cellcolor[rgb]{ .882,  .89,  .514}-0.3643 & \cellcolor[rgb]{ 1,  .922,  .518}0.0994 & \cellcolor[rgb]{ .996,  .906,  .514}0.0666 & \cellcolor[rgb]{ .984,  .678,  .471}0.9383 & \cellcolor[rgb]{ .996,  .875,  .506} 0.00010  & \cellcolor[rgb]{ 1,  .914,  .518}0.0322 & \cellcolor[rgb]{ .996,  .894,  .51}1.1519 & 11.7  & 49.3 \\
    \hline
    \textbf{k-OWA-Tail-100} & \cellcolor[rgb]{ .992,  .8,  .494}0.0512\% & \cellcolor[rgb]{ .925,  .898,  .51}1.1251\% & \cellcolor[rgb]{ .992,  .82,  .498}0.0455 & \cellcolor[rgb]{ .702,  .835,  .502}-0.3585 & \cellcolor[rgb]{ .867,  .882,  .51}0.0956 & \cellcolor[rgb]{ .992,  .804,  .494}0.0647 & \cellcolor[rgb]{ .98,  .612,  .455}0.9349 & \cellcolor[rgb]{ .992,  .78,  .49} 0.00009  & \cellcolor[rgb]{ 1,  .922,  .518}0.0321 & \cellcolor[rgb]{ .992,  .804,  .494}1.1479 & 11.6  & 43.4 \\
    \hline
    \textbf{k-OWA-CTail-5} & \cellcolor[rgb]{ .996,  .914,  .514}0.0533\% & \cellcolor[rgb]{ 1,  .922,  .518}1.1433\% & \cellcolor[rgb]{ .996,  .91,  .514}0.0466 & \cellcolor[rgb]{ .988,  .769,  .486}-0.4185 & \cellcolor[rgb]{ .996,  .796,  .494}0.1079 & \cellcolor[rgb]{ .988,  .922,  .518}0.0670 & \cellcolor[rgb]{ .518,  .784,  .49}0.9694 & \cellcolor[rgb]{ 1,  .922,  .518} 0.00011  & \cellcolor[rgb]{ 1,  .918,  .518}0.0322 & \cellcolor[rgb]{ .992,  .804,  .494}1.1479 & 8.2   & 12.7 \\
    \hline
    \textbf{k-OWA-CTail-10} & \cellcolor[rgb]{ .808,  .867,  .51}0.0571\% & \cellcolor[rgb]{ .992,  .918,  .514}1.1406\% & \cellcolor[rgb]{ .671,  .827,  .502}0.0500 & \cellcolor[rgb]{ .992,  .843,  .502}-0.3935 & \cellcolor[rgb]{ .882,  .886,  .51}0.0961 & \cellcolor[rgb]{ .655,  .824,  .498}0.0720 & \cellcolor[rgb]{ .455,  .765,  .486}0.9718 & \cellcolor[rgb]{ .671,  .827,  .502} 0.00015  & \cellcolor[rgb]{ .843,  .875,  .506}0.0312 & \cellcolor[rgb]{ .757,  .855,  .506}1.1600 & 8.7   & 13.2 \\
    \hline
    \textbf{k-OWA-CTail-25} & \cellcolor[rgb]{ .941,  .906,  .518}0.0545\% & \cellcolor[rgb]{ 1,  .918,  .518}1.1453\% & \cellcolor[rgb]{ .914,  .898,  .514}0.0476 & \cellcolor[rgb]{ .996,  .89,  .51}-0.3784 & \cellcolor[rgb]{ .937,  .902,  .514}0.0975 & \cellcolor[rgb]{ .914,  .898,  .514}0.0681 & \cellcolor[rgb]{ .694,  .835,  .502}0.9625 & \cellcolor[rgb]{ .886,  .89,  .514} 0.00012  & \cellcolor[rgb]{ 1,  .89,  .514}0.0326 & \cellcolor[rgb]{ .98,  .918,  .518}1.1536 & 9.4   & 15.2 \\
    \hline
    \textbf{k-OWA-CTail-50} & \cellcolor[rgb]{ .996,  .918,  .514}0.0533\% & \cellcolor[rgb]{ 1,  .918,  .518}1.1446\% & \cellcolor[rgb]{ .996,  .91,  .514}0.0466 & \cellcolor[rgb]{ .996,  .914,  .514}-0.3695 & \cellcolor[rgb]{ 1,  .894,  .514}0.1012 & \cellcolor[rgb]{ .996,  .902,  .514}0.0665 & \cellcolor[rgb]{ .949,  .91,  .518}0.9528 & \cellcolor[rgb]{ .996,  .906,  .514} 0.00011  & \cellcolor[rgb]{ .929,  .898,  .51}0.0317 & \cellcolor[rgb]{ .996,  .875,  .506}1.1509 & 10.4  & 31.6 \\
    \hline
    \textbf{k-OWA-CTail-75} & \cellcolor[rgb]{ .961,  .91,  .518}0.0542\% & \cellcolor[rgb]{ .965,  .91,  .514}1.1343\% & \cellcolor[rgb]{ .898,  .894,  .514}0.0477 & \cellcolor[rgb]{ .945,  .906,  .518}-0.3664 & \cellcolor[rgb]{ .992,  .918,  .514}0.0990 & \cellcolor[rgb]{ .918,  .898,  .514}0.0681 & \cellcolor[rgb]{ .996,  .878,  .51}0.9487 & \cellcolor[rgb]{ .937,  .906,  .518} 0.00012  & \cellcolor[rgb]{ .945,  .906,  .514}0.0318 & \cellcolor[rgb]{ .922,  .902,  .514}1.1552 & 11.1  & 57.8 \\
    \hline
    \textbf{k-OWA-CTail-100} & \cellcolor[rgb]{ 1,  .922,  .518}0.0534\% & \cellcolor[rgb]{ .933,  .902,  .514}1.1269\% & \cellcolor[rgb]{ .933,  .906,  .518}0.0474 & \cellcolor[rgb]{ .898,  .894,  .514}-0.3648 & \cellcolor[rgb]{ .988,  .918,  .514}0.0989 & \cellcolor[rgb]{ .961,  .91,  .518}0.0674 & \cellcolor[rgb]{ .992,  .78,  .49}0.9435 & \cellcolor[rgb]{ .988,  .922,  .518} 0.00011  & \cellcolor[rgb]{ 1,  .91,  .518}0.0323 & \cellcolor[rgb]{ .969,  .914,  .518}1.1539 & 11.5  & 97.3 \\
    \hline
    \textbf{MinV} & \cellcolor[rgb]{ .976,  .537,  .443}0.0465\% & \cellcolor[rgb]{ .388,  .745,  .482}0.9999\% & \cellcolor[rgb]{ .996,  .902,  .514}0.0465 & \cellcolor[rgb]{ .388,  .745,  .482}-0.3486 & \cellcolor[rgb]{ .537,  .784,  .49}0.0865 & \cellcolor[rgb]{ .996,  .878,  .506}0.0661 & \cellcolor[rgb]{ .98,  .627,  .459}0.9357 & \cellcolor[rgb]{ .992,  .839,  .502} 0.00010  & \cellcolor[rgb]{ .388,  .745,  .482}0.0286 & \cellcolor[rgb]{ .976,  .918,  .518}1.1537 & 13.0  & 0.0 \\
    \hline
    \textbf{KP2011Min} & \cellcolor[rgb]{ .973,  .412,  .42}0.0442\% & \cellcolor[rgb]{ .745,  .847,  .502}1.0830\% & \cellcolor[rgb]{ .973,  .412,  .42}0.0408 & \cellcolor[rgb]{ .996,  .882,  .51}-0.3807 & \cellcolor[rgb]{ 1,  .914,  .518}0.1000 & \cellcolor[rgb]{ .973,  .412,  .42}0.0577 & \cellcolor[rgb]{ .973,  .412,  .42}0.9245 & \cellcolor[rgb]{ .973,  .412,  .42} 0.00004  & \cellcolor[rgb]{ .988,  .918,  .514}0.0320 & \cellcolor[rgb]{ .973,  .412,  .42}1.1310 & 10.6  & 47.5 \\
    \hline
    \end{tabular}%
    }
    }
  \label{tab:Perf_out_NASDAQ100_DL125_DH20}%
\end{table}%
%
\begin{table}[htbp]
  \centering
  \caption{Out-of-sample performance results of the SD-based strategies using Tails compared with the MinV portfolio and the Market Index on the SP500 daily dataset}
  \scalebox{0.65}{
  {\renewcommand{\arraystretch}{1.1}
        \begin{tabular}{|l|r|r|r|r|r|r|r|r|r|r|r|r|}
    \hline
    \textbf{Approach} & \multicolumn{1}{l|}{\textbf{ExpRet}} & \multicolumn{1}{l|}{\textbf{Vol}} & \multicolumn{1}{l|}{\textbf{Sharpe}} & \multicolumn{1}{l|}{\textbf{MDD}} & \multicolumn{1}{l|}{\textbf{Ulcer}} & \multicolumn{1}{l|}{\textbf{Sortino}} & \multicolumn{1}{l|}{\textbf{Rachev5}} & \multicolumn{1}{l|}{\textbf{AlphaJ}} & \multicolumn{1}{l|}{\textbf{VaR1}} & \multicolumn{1}{l|}{\textbf{Omega}} & \multicolumn{1}{l|}{\textbf{ave \#}} & \multicolumn{1}{l|}{\textbf{time}} \\
    \hline
    \textbf{Index} & \cellcolor[rgb]{ .518,  .784,  .49}0.0352\% & \cellcolor[rgb]{ .973,  .412,  .42}1.2960\% & \cellcolor[rgb]{ .973,  .412,  .42}0.0272 & \cellcolor[rgb]{ .973,  .412,  .42}-0.5678 & \cellcolor[rgb]{ .984,  .918,  .514}0.1653 & \cellcolor[rgb]{ .973,  .412,  .42}0.0378 & \cellcolor[rgb]{ .388,  .745,  .482}0.9099 &       & \cellcolor[rgb]{ .973,  .412,  .42}0.0393 & \cellcolor[rgb]{ .973,  .412,  .42}1.0926 &       &  \\
    \hline
    \textbf{Roman-Tail} & \cellcolor[rgb]{ .996,  .871,  .506}0.0312\% & \cellcolor[rgb]{ 1,  .871,  .51}0.9921\% & \cellcolor[rgb]{ .992,  .82,  .498}0.0314 & \cellcolor[rgb]{ .992,  .804,  .494}-0.5120 & \cellcolor[rgb]{ .973,  .412,  .42}0.1950 & \cellcolor[rgb]{ .992,  .812,  .494}0.0435 & \cellcolor[rgb]{ .678,  .831,  .502}0.9018 & \cellcolor[rgb]{ .984,  .69,  .471} 0.00010  & \cellcolor[rgb]{ .898,  .89,  .51}0.0288 & \cellcolor[rgb]{ .992,  .788,  .49}1.1015 & 13.8  & 70.3 \\
    \hline
    \textbf{k-OWA-Tail-5} & \cellcolor[rgb]{ .992,  .784,  .49}0.0306\% & \cellcolor[rgb]{ 1,  .894,  .514}0.9754\% & \cellcolor[rgb]{ .992,  .812,  .494}0.0313 & \cellcolor[rgb]{ .996,  .878,  .506}-0.5016 & \cellcolor[rgb]{ .78,  .855,  .502}0.1520 & \cellcolor[rgb]{ .992,  .796,  .49}0.0432 & \cellcolor[rgb]{ .886,  .89,  .514}0.8960 & \cellcolor[rgb]{ .988,  .749,  .482} 0.00010  & \cellcolor[rgb]{ .988,  .918,  .514}0.0294 & \cellcolor[rgb]{ .988,  .714,  .475}1.0997 & 14.3  & 37.4 \\
    \hline
    \textbf{k-OWA-Tail-10} & \cellcolor[rgb]{ .91,  .898,  .514}0.0322\% & \cellcolor[rgb]{ .973,  .91,  .514}0.9535\% & \cellcolor[rgb]{ .859,  .882,  .51}0.0338 & \cellcolor[rgb]{ .835,  .875,  .51}-0.4853 & \cellcolor[rgb]{ .522,  .78,  .486}0.1353 & \cellcolor[rgb]{ .867,  .882,  .51}0.0468 & \cellcolor[rgb]{ .98,  .616,  .459}0.8819 & \cellcolor[rgb]{ .839,  .875,  .51} 0.00012  & \cellcolor[rgb]{ .624,  .812,  .494}0.0269 & \cellcolor[rgb]{ .929,  .902,  .514}1.1065 & 15.0  & 49.4 \\
    \hline
    \textbf{k-OWA-Tail-25} & \cellcolor[rgb]{ .388,  .745,  .482}0.0362\% & \cellcolor[rgb]{ .941,  .902,  .514}0.9500\% & \cellcolor[rgb]{ .388,  .745,  .482}0.0381 & \cellcolor[rgb]{ .663,  .827,  .502}-0.4748 & \cellcolor[rgb]{ .388,  .745,  .482}0.1266 & \cellcolor[rgb]{ .388,  .745,  .482}0.0532 & \cellcolor[rgb]{ .804,  .867,  .51}0.8983 & \cellcolor[rgb]{ .388,  .745,  .482} 0.00016  & \cellcolor[rgb]{ 1,  .902,  .514}0.0299 & \cellcolor[rgb]{ .388,  .745,  .482}1.1208 & 16.8  & 56.3 \\
    \hline
    \textbf{k-OWA-Tail-50} & \cellcolor[rgb]{ .961,  .91,  .518}0.0318\% & \cellcolor[rgb]{ .933,  .902,  .514}0.9488\% & \cellcolor[rgb]{ .882,  .89,  .514}0.0336 & \cellcolor[rgb]{ .996,  .914,  .514}-0.4964 & \cellcolor[rgb]{ .996,  .847,  .506}0.1706 & \cellcolor[rgb]{ .867,  .882,  .51}0.0468 & \cellcolor[rgb]{ .992,  .922,  .518}0.8931 & \cellcolor[rgb]{ .925,  .902,  .514} 0.00012  & \cellcolor[rgb]{ 1,  .918,  .518}0.0295 & \cellcolor[rgb]{ .953,  .91,  .518}1.1058 & 18.4  & 64.4 \\
    \hline
    \textbf{k-OWA-Tail-75} & \cellcolor[rgb]{ .988,  .757,  .482}0.0304\% & \cellcolor[rgb]{ 1,  .918,  .518}0.9602\% & \cellcolor[rgb]{ .992,  .839,  .502}0.0316 & \cellcolor[rgb]{ .937,  .906,  .518}-0.4916 & \cellcolor[rgb]{ .992,  .745,  .486}0.1763 & \cellcolor[rgb]{ .996,  .847,  .502}0.0440 & \cellcolor[rgb]{ .984,  .667,  .467}0.8836 & \cellcolor[rgb]{ .984,  .671,  .467} 0.00010  & \cellcolor[rgb]{ 1,  .918,  .518}0.0296 & \cellcolor[rgb]{ .988,  .718,  .478}1.0998 & 19.1  & 77.3 \\
    \hline
    \textbf{k-OWA-Tail-100} & \cellcolor[rgb]{ .988,  .765,  .486}0.0304\% & \cellcolor[rgb]{ 1,  .878,  .51}0.9866\% & \cellcolor[rgb]{ .988,  .765,  .486}0.0308 & \cellcolor[rgb]{ .965,  .914,  .518}-0.4935 & \cellcolor[rgb]{ .984,  .62,  .463}0.1833 & \cellcolor[rgb]{ .988,  .765,  .486}0.0428 & \cellcolor[rgb]{ .973,  .412,  .42}0.8745 & \cellcolor[rgb]{ .98,  .588,  .451} 0.00009  & \cellcolor[rgb]{ 1,  .922,  .518}0.0295 & \cellcolor[rgb]{ .98,  .6,  .455}1.0970 & 18.5  & 90.7 \\
    \hline
    \textbf{k-OWA-CTail-5} & \cellcolor[rgb]{ .988,  .765,  .486}0.0304\% & \cellcolor[rgb]{ 1,  .875,  .51}0.9902\% & \cellcolor[rgb]{ .988,  .753,  .482}0.0307 & \cellcolor[rgb]{ .992,  .816,  .494}-0.5104 & \cellcolor[rgb]{ 1,  .91,  .518}0.1669 & \cellcolor[rgb]{ .988,  .737,  .478}0.0424 & \cellcolor[rgb]{ .769,  .855,  .506}0.8992 & \cellcolor[rgb]{ .984,  .631,  .459} 0.00009  & \cellcolor[rgb]{ 1,  .89,  .514}0.0300 & \cellcolor[rgb]{ .984,  .651,  .463}1.0982 & 14.2  & 27.5 \\
    \hline
    \textbf{k-OWA-CTail-10} & \cellcolor[rgb]{ .992,  .82,  .498}0.0308\% & \cellcolor[rgb]{ 1,  .914,  .518}0.9623\% & \cellcolor[rgb]{ .996,  .878,  .506}0.0320 & \cellcolor[rgb]{ .988,  .757,  .486}-0.5185 & \cellcolor[rgb]{ .961,  .91,  .514}0.1638 & \cellcolor[rgb]{ .996,  .859,  .506}0.0442 & \cellcolor[rgb]{ .98,  .6,  .455}0.8813 & \cellcolor[rgb]{ .996,  .886,  .51} 0.00011  & \cellcolor[rgb]{ .91,  .894,  .51}0.0288 & \cellcolor[rgb]{ .988,  .769,  .486}1.1010 & 14.6  & 24.8 \\
    \hline
    \textbf{k-OWA-CTail-25} & \cellcolor[rgb]{ .98,  .918,  .518}0.0317\% & \cellcolor[rgb]{ .929,  .898,  .51}0.9484\% & \cellcolor[rgb]{ .902,  .894,  .514}0.0334 & \cellcolor[rgb]{ .851,  .878,  .51}-0.4864 & \cellcolor[rgb]{ .42,  .753,  .482}0.1287 & \cellcolor[rgb]{ .89,  .89,  .514}0.0465 & \cellcolor[rgb]{ .882,  .89,  .514}0.8961 & \cellcolor[rgb]{ .878,  .89,  .514} 0.00012  & \cellcolor[rgb]{ .851,  .878,  .506}0.0284 & \cellcolor[rgb]{ .98,  .918,  .518}1.1051 & 16.0  & 23.9 \\
    \hline
    \textbf{k-OWA-CTail-50} & \cellcolor[rgb]{ .769,  .855,  .506}0.0333\% & \cellcolor[rgb]{ .875,  .882,  .51}0.9415\% & \cellcolor[rgb]{ .686,  .831,  .502}0.0354 & \cellcolor[rgb]{ .984,  .918,  .518}-0.4946 & \cellcolor[rgb]{ .804,  .863,  .506}0.1537 & \cellcolor[rgb]{ .675,  .827,  .502}0.0494 & \cellcolor[rgb]{ .886,  .89,  .514}0.8960 & \cellcolor[rgb]{ .718,  .843,  .502} 0.00013  & \cellcolor[rgb]{ 1,  .918,  .518}0.0295 & \cellcolor[rgb]{ .725,  .843,  .502}1.1119 & 17.8  & 46.7 \\
    \hline
    \textbf{k-OWA-CTail-75} & \cellcolor[rgb]{ .98,  .918,  .518}0.0317\% & \cellcolor[rgb]{ .898,  .89,  .51}0.9445\% & \cellcolor[rgb]{ .886,  .89,  .514}0.0335 & \cellcolor[rgb]{ .996,  .914,  .514}-0.4968 & \cellcolor[rgb]{ 1,  .89,  .514}0.1681 & \cellcolor[rgb]{ .867,  .886,  .514}0.0468 & \cellcolor[rgb]{ .996,  .91,  .514}0.8925 & \cellcolor[rgb]{ .933,  .902,  .514} 0.00011  & \cellcolor[rgb]{ .918,  .898,  .51}0.0289 & \cellcolor[rgb]{ .953,  .91,  .518}1.1058 & 18.3  & 86.9 \\
    \hline
    \textbf{k-OWA-CTail-100} & \cellcolor[rgb]{ .996,  .898,  .51}0.0313\% & \cellcolor[rgb]{ .969,  .91,  .514}0.9531\% & \cellcolor[rgb]{ .957,  .91,  .518}0.0329 & \cellcolor[rgb]{ .992,  .922,  .518}-0.4952 & \cellcolor[rgb]{ .996,  .784,  .494}0.1740 & \cellcolor[rgb]{ .937,  .906,  .518}0.0458 & \cellcolor[rgb]{ .992,  .8,  .494}0.8884 & \cellcolor[rgb]{ 1,  .922,  .518} 0.00011  & \cellcolor[rgb]{ .945,  .906,  .514}0.0291 & \cellcolor[rgb]{ .996,  .894,  .51}1.1039 & 18.9  & 170.2 \\
    \hline
    \textbf{MinV} & \cellcolor[rgb]{ .973,  .412,  .42}0.0279\% & \cellcolor[rgb]{ .388,  .745,  .482}0.8816\% & \cellcolor[rgb]{ .992,  .847,  .502}0.0317 & \cellcolor[rgb]{ .388,  .745,  .482}-0.4578 & \cellcolor[rgb]{ .612,  .808,  .494}0.1412 & \cellcolor[rgb]{ .996,  .851,  .502}0.0440 & \cellcolor[rgb]{ .992,  .831,  .498}0.8897 & \cellcolor[rgb]{ .973,  .412,  .42} 0.00008  & \cellcolor[rgb]{ .388,  .745,  .482}0.0253 & \cellcolor[rgb]{ .949,  .91,  .518}1.1060 & 23.1  & 0.1 \\
    \hline
    \textbf{KP2011Min} & \cellcolor[rgb]{ .863,  .882,  .51}0.0326\% & \cellcolor[rgb]{ 1,  .902,  .514}0.9715\% & \cellcolor[rgb]{ .882,  .89,  .514}0.0336 & \cellcolor[rgb]{ .988,  .729,  .478}-0.5227 & \cellcolor[rgb]{ .992,  .773,  .49}0.1748 & \cellcolor[rgb]{ .875,  .886,  .514}0.0467 & \cellcolor[rgb]{ .988,  .765,  .486}0.8873 & \cellcolor[rgb]{ .906,  .894,  .514} 0.00012  & \cellcolor[rgb]{ .996,  .851,  .506}0.0309 & \cellcolor[rgb]{ .835,  .875,  .51}1.1090 & 17.0  & 69.1 \\
    \hline
    \end{tabular}%
    }
    }
  \label{tab:Perf_out_SP500_DL125_DH20}%
\end{table}%
%

\noindent
Finally, Tables \ref{tab:ROI750_out_FTSE100_DL125_DH20_new}, \ref{tab:ROI750_out_NASDAQ100_DL125_DH20_new}, \ref{tab:ROI750_out_SP500_DL125_DH20_new} report some statistics of the 3-years time horizon ROI obtained by all the analyzed portfolio strategies on the FTSE100, NASDAQ100, and SP500 daily datasets, respectively.
Again, we can note that the most profitable portfolios are those obtained by the $k$-OWA-Tail-$\beta$ and $k$-OWA-CTail-$\beta$ models with low $\beta$, although in general, the SD models with Tails are less performing w.r.t. those formulated with CVaRs.
%
\begin{table}[htbp]
  \centering
  \caption{Summary statistics of ROI based on a 3-years time horizon for FTSE100}
  \scalebox{0.74}{
  {\renewcommand{\arraystretch}{1.1}
    \begin{tabular}{|l|c|c|c|c|c|c|c|}
    \hline
    \textbf{Approach} & \multicolumn{1}{l|}{\textbf{ExpRet}} & \multicolumn{1}{l|}{\textbf{Vol}} & \multicolumn{1}{l|}{\textbf{5\%-perc}} & \multicolumn{1}{l|}{\textbf{25\%-perc}} & \multicolumn{1}{l|}{\textbf{50\%-perc}} & \multicolumn{1}{l|}{\textbf{75\%-perc}} & \multicolumn{1}{l|}{\textbf{95\%-perc}} \\
    \hline
    \textbf{Index} & \cellcolor[rgb]{ .973,  .412,  .42}8\% & \cellcolor[rgb]{ .388,  .745,  .482}15\% & \cellcolor[rgb]{ .973,  .412,  .42}-19\% & \cellcolor[rgb]{ .973,  .412,  .42}1\% & \cellcolor[rgb]{ .973,  .412,  .42}10\% & \cellcolor[rgb]{ .973,  .412,  .42}17\% & \cellcolor[rgb]{ .973,  .412,  .42}31\% \\
    \hline
    \textbf{Roman-Tail} & \cellcolor[rgb]{ .439,  .761,  .486}60\% & \cellcolor[rgb]{ .976,  .482,  .435}36\% & \cellcolor[rgb]{ .412,  .753,  .486}2\% & \cellcolor[rgb]{ .475,  .773,  .49}33\% & \cellcolor[rgb]{ .816,  .871,  .51}54\% & \cellcolor[rgb]{ .424,  .757,  .486}88\% & \cellcolor[rgb]{ .447,  .765,  .486}124\% \\
    \hline
    \textbf{k-OWA-Tail-5} & \cellcolor[rgb]{ .49,  .776,  .49}59\% & \cellcolor[rgb]{ .98,  .537,  .447}35\% & \cellcolor[rgb]{ .706,  .839,  .502}0\% & \cellcolor[rgb]{ .388,  .745,  .482}35\% & \cellcolor[rgb]{ .996,  .91,  .514}51\% & \cellcolor[rgb]{ .471,  .769,  .49}86\% & \cellcolor[rgb]{ .502,  .78,  .49}120\% \\
    \hline
    \textbf{k-OWA-Tail-10} & \cellcolor[rgb]{ .835,  .875,  .51}49\% & \cellcolor[rgb]{ .984,  .914,  .514}28\% & \cellcolor[rgb]{ .42,  .757,  .486}2\% & \cellcolor[rgb]{ .702,  .835,  .502}26\% & \cellcolor[rgb]{ .996,  .871,  .506}48\% & \cellcolor[rgb]{ .886,  .89,  .514}69\% & \cellcolor[rgb]{ .827,  .875,  .51}97\% \\
    \hline
    \textbf{k-OWA-Tail-25} & \cellcolor[rgb]{ .996,  .918,  .514}44\% & \cellcolor[rgb]{ 1,  .906,  .518}29\% & \cellcolor[rgb]{ .996,  .918,  .514}-3\% & \cellcolor[rgb]{ .996,  .902,  .514}16\% & \cellcolor[rgb]{ .992,  .922,  .518}52\% & \cellcolor[rgb]{ .996,  .914,  .514}64\% & \cellcolor[rgb]{ .996,  .914,  .514}83\% \\
    \hline
    \textbf{k-OWA-Tail-50} & \cellcolor[rgb]{ .996,  .89,  .51}42\% & \cellcolor[rgb]{ .933,  .902,  .514}27\% & \cellcolor[rgb]{ .992,  .843,  .502}-5\% & \cellcolor[rgb]{ .996,  .867,  .506}15\% & \cellcolor[rgb]{ 1,  .922,  .518}52\% & \cellcolor[rgb]{ .996,  .89,  .51}62\% & \cellcolor[rgb]{ .992,  .824,  .498}74\% \\
    \hline
    \textbf{k-OWA-Tail-75} & \cellcolor[rgb]{ .996,  .922,  .518}44\% & \cellcolor[rgb]{ .98,  .914,  .514}28\% & \cellcolor[rgb]{ .996,  .859,  .502}-5\% & \cellcolor[rgb]{ .996,  .898,  .51}16\% & \cellcolor[rgb]{ .757,  .855,  .506}54\% & \cellcolor[rgb]{ .996,  .918,  .514}65\% & \cellcolor[rgb]{ .996,  .867,  .506}79\% \\
    \hline
    \textbf{k-OWA-Tail-100} & \cellcolor[rgb]{ .894,  .894,  .514}47\% & \cellcolor[rgb]{ .996,  .839,  .502}30\% & \cellcolor[rgb]{ .992,  .831,  .498}-6\% & \cellcolor[rgb]{ .996,  .886,  .51}16\% & \cellcolor[rgb]{ .388,  .745,  .482}58\% & \cellcolor[rgb]{ .882,  .89,  .514}70\% & \cellcolor[rgb]{ .992,  .922,  .518}85\% \\
    \hline
    \textbf{k-OWA-CTail-5} & \cellcolor[rgb]{ .388,  .745,  .482}61\% & \cellcolor[rgb]{ .973,  .412,  .42}38\% & \cellcolor[rgb]{ .6,  .808,  .498}1\% & \cellcolor[rgb]{ .4,  .749,  .486}35\% & \cellcolor[rgb]{ .847,  .878,  .51}53\% & \cellcolor[rgb]{ .388,  .745,  .482}89\% & \cellcolor[rgb]{ .388,  .745,  .482}128\% \\
    \hline
    \textbf{k-OWA-CTail-10} & \cellcolor[rgb]{ .6,  .808,  .498}56\% & \cellcolor[rgb]{ .988,  .655,  .467}33\% & \cellcolor[rgb]{ .541,  .792,  .494}1\% & \cellcolor[rgb]{ .467,  .769,  .49}33\% & \cellcolor[rgb]{ .996,  .875,  .506}48\% & \cellcolor[rgb]{ .584,  .804,  .494}81\% & \cellcolor[rgb]{ .58,  .8,  .494}115\% \\
    \hline
    \textbf{k-OWA-CTail-25} & \cellcolor[rgb]{ .996,  .918,  .514}44\% & \cellcolor[rgb]{ 1,  .902,  .514}29\% & \cellcolor[rgb]{ .996,  .922,  .518}-3\% & \cellcolor[rgb]{ .984,  .918,  .518}17\% & \cellcolor[rgb]{ .996,  .863,  .506}47\% & \cellcolor[rgb]{ 1,  .922,  .518}65\% & \cellcolor[rgb]{ .945,  .906,  .518}88\% \\
    \hline
    \textbf{k-OWA-CTail-50} & \cellcolor[rgb]{ .996,  .894,  .51}42\% & \cellcolor[rgb]{ .957,  .91,  .514}27\% & \cellcolor[rgb]{ .996,  .89,  .51}-4\% & \cellcolor[rgb]{ .996,  .886,  .51}16\% & \cellcolor[rgb]{ .996,  .91,  .514}51\% & \cellcolor[rgb]{ .996,  .89,  .51}62\% & \cellcolor[rgb]{ .996,  .863,  .506}78\% \\
    \hline
    \textbf{k-OWA-CTail-75} & \cellcolor[rgb]{ .996,  .894,  .51}42\% & \cellcolor[rgb]{ .941,  .902,  .514}27\% & \cellcolor[rgb]{ .996,  .855,  .502}-5\% & \cellcolor[rgb]{ .996,  .863,  .506}15\% & \cellcolor[rgb]{ .996,  .918,  .514}52\% & \cellcolor[rgb]{ .996,  .89,  .51}62\% & \cellcolor[rgb]{ .992,  .835,  .498}75\% \\
    \hline
    \textbf{k-OWA-CTail-100} & \cellcolor[rgb]{ .996,  .91,  .514}44\% & \cellcolor[rgb]{ .965,  .91,  .514}28\% & \cellcolor[rgb]{ .996,  .855,  .502}-5\% & \cellcolor[rgb]{ .996,  .89,  .51}16\% & \cellcolor[rgb]{ .859,  .882,  .51}53\% & \cellcolor[rgb]{ .996,  .91,  .514}64\% & \cellcolor[rgb]{ .996,  .851,  .502}77\% \\
    \hline
    \textbf{MinV} & \cellcolor[rgb]{ .992,  .82,  .498}37\% & \cellcolor[rgb]{ .557,  .792,  .49}19\% & \cellcolor[rgb]{ .388,  .745,  .482}3\% & \cellcolor[rgb]{ .722,  .843,  .502}25\% & \cellcolor[rgb]{ .992,  .78,  .49}41\% & \cellcolor[rgb]{ .988,  .761,  .486}50\% & \cellcolor[rgb]{ .988,  .722,  .478}63\% \\
    \hline
    \textbf{KP2011Min} & \cellcolor[rgb]{ .541,  .792,  .494}57\% & \cellcolor[rgb]{ .98,  .557,  .447}35\% & \cellcolor[rgb]{ .612,  .812,  .498}1\% & \cellcolor[rgb]{ .631,  .816,  .498}28\% & \cellcolor[rgb]{ .675,  .831,  .502}55\% & \cellcolor[rgb]{ .478,  .773,  .49}85\% & \cellcolor[rgb]{ .588,  .804,  .494}114\% \\
    \hline
    \end{tabular}%
    }
    }
  \label{tab:ROI750_out_FTSE100_DL125_DH20_new}%
\end{table}%
%
\begin{table}[htbp]
  \centering
  \caption{Summary statistics of ROI based on a 3-years time horizon for NASDAQ100}
  \scalebox{0.74}{
  {\renewcommand{\arraystretch}{1.1}
    \begin{tabular}{|l|c|c|c|c|c|c|c|}
    \hline
    \textbf{Approach} & \textbf{ExpRet} & \textbf{Vol} & \textbf{5\%-perc} & \textbf{25\%-perc} & \textbf{50\%-perc} & \textbf{75\%-perc} & \textbf{95\%-perc} \\
    \hline
    \textbf{Index} & \cellcolor[rgb]{ .792,  .863,  .506}54\% & \cellcolor[rgb]{ .98,  .533,  .443}23\% & \cellcolor[rgb]{ .992,  .843,  .502}2\% & \cellcolor[rgb]{ .663,  .824,  .498}45\% & \cellcolor[rgb]{ .925,  .902,  .514}55\% & \cellcolor[rgb]{ .792,  .863,  .506}68\% & \cellcolor[rgb]{ .624,  .816,  .498}88\% \\
    \hline
    \textbf{Roman-Tail} & \cellcolor[rgb]{ .949,  .91,  .518}52\% & \cellcolor[rgb]{ .973,  .412,  .42}24\% & \cellcolor[rgb]{ .988,  .741,  .482}1\% & \cellcolor[rgb]{ .961,  .91,  .518}40\% & \cellcolor[rgb]{ .98,  .608,  .455}48\% & \cellcolor[rgb]{ .996,  .89,  .51}65\% & \cellcolor[rgb]{ .388,  .745,  .482}95\% \\
    \hline
    \textbf{k-OWA-Tail-5} & \cellcolor[rgb]{ .965,  .914,  .518}52\% & \cellcolor[rgb]{ 1,  .886,  .514}21\% & \cellcolor[rgb]{ .984,  .69,  .471}0\% & \cellcolor[rgb]{ .78,  .859,  .506}43\% & \cellcolor[rgb]{ .996,  .878,  .51}54\% & \cellcolor[rgb]{ .996,  .875,  .506}65\% & \cellcolor[rgb]{ .8,  .867,  .51}84\% \\
    \hline
    \textbf{k-OWA-Tail-10} & \cellcolor[rgb]{ .388,  .745,  .482}59\% & \cellcolor[rgb]{ .937,  .902,  .514}21\% & \cellcolor[rgb]{ .388,  .745,  .482}10\% & \cellcolor[rgb]{ .388,  .745,  .482}49\% & \cellcolor[rgb]{ .388,  .745,  .482}61\% & \cellcolor[rgb]{ .388,  .745,  .482}72\% & \cellcolor[rgb]{ .549,  .792,  .494}91\% \\
    \hline
    \textbf{k-OWA-Tail-25} & \cellcolor[rgb]{ .992,  .922,  .518}51\% & \cellcolor[rgb]{ .471,  .769,  .486}19\% & \cellcolor[rgb]{ .831,  .875,  .51}5\% & \cellcolor[rgb]{ .843,  .878,  .51}42\% & \cellcolor[rgb]{ .996,  .894,  .51}54\% & \cellcolor[rgb]{ .992,  .8,  .494}63\% & \cellcolor[rgb]{ .949,  .91,  .518}79\% \\
    \hline
    \textbf{k-OWA-Tail-50} & \cellcolor[rgb]{ .996,  .91,  .514}51\% & \cellcolor[rgb]{ .988,  .918,  .514}21\% & \cellcolor[rgb]{ .953,  .91,  .518}3\% & \cellcolor[rgb]{ .992,  .804,  .494}37\% & \cellcolor[rgb]{ .69,  .835,  .502}58\% & \cellcolor[rgb]{ .965,  .914,  .518}66\% & \cellcolor[rgb]{ .992,  .78,  .49}76\% \\
    \hline
    \textbf{k-OWA-Tail-75} & \cellcolor[rgb]{ .996,  .875,  .506}50\% & \cellcolor[rgb]{ 1,  .922,  .518}21\% & \cellcolor[rgb]{ .98,  .918,  .518}3\% & \cellcolor[rgb]{ .992,  .792,  .49}37\% & \cellcolor[rgb]{ .863,  .882,  .51}56\% & \cellcolor[rgb]{ .965,  .914,  .518}66\% & \cellcolor[rgb]{ .988,  .725,  .478}75\% \\
    \hline
    \textbf{k-OWA-Tail-100} & \cellcolor[rgb]{ .996,  .851,  .502}50\% & \cellcolor[rgb]{ 1,  .855,  .506}21\% & \cellcolor[rgb]{ .992,  .922,  .518}3\% & \cellcolor[rgb]{ .992,  .808,  .494}37\% & \cellcolor[rgb]{ .996,  .894,  .51}54\% & \cellcolor[rgb]{ .91,  .898,  .514}67\% & \cellcolor[rgb]{ .992,  .831,  .498}77\% \\
    \hline
    \textbf{k-OWA-CTail-5} & \cellcolor[rgb]{ .702,  .839,  .502}55\% & \cellcolor[rgb]{ .98,  .529,  .443}24\% & \cellcolor[rgb]{ .996,  .902,  .514}2\% & \cellcolor[rgb]{ .718,  .843,  .502}44\% & \cellcolor[rgb]{ .996,  .871,  .506}54\% & \cellcolor[rgb]{ .714,  .839,  .502}69\% & \cellcolor[rgb]{ .392,  .749,  .486}95\% \\
    \hline
    \textbf{k-OWA-CTail-10} & \cellcolor[rgb]{ .663,  .824,  .498}55\% & \cellcolor[rgb]{ .992,  .761,  .49}22\% & \cellcolor[rgb]{ .894,  .894,  .514}4\% & \cellcolor[rgb]{ .608,  .812,  .498}46\% & \cellcolor[rgb]{ .796,  .863,  .506}57\% & \cellcolor[rgb]{ .816,  .871,  .51}68\% & \cellcolor[rgb]{ .561,  .796,  .494}90\% \\
    \hline
    \textbf{k-OWA-CTail-25} & \cellcolor[rgb]{ .984,  .918,  .518}51\% & \cellcolor[rgb]{ .655,  .82,  .494}20\% & \cellcolor[rgb]{ .996,  .906,  .514}2\% & \cellcolor[rgb]{ .792,  .863,  .506}43\% & \cellcolor[rgb]{ .996,  .894,  .51}54\% & \cellcolor[rgb]{ .992,  .796,  .49}63\% & \cellcolor[rgb]{ .945,  .906,  .518}80\% \\
    \hline
    \textbf{k-OWA-CTail-50} & \cellcolor[rgb]{ .992,  .843,  .502}50\% & \cellcolor[rgb]{ .784,  .859,  .502}20\% & \cellcolor[rgb]{ .992,  .812,  .494}1\% & \cellcolor[rgb]{ .996,  .882,  .51}39\% & \cellcolor[rgb]{ .949,  .91,  .518}55\% & \cellcolor[rgb]{ .992,  .812,  .494}64\% & \cellcolor[rgb]{ .988,  .753,  .482}75\% \\
    \hline
    \textbf{k-OWA-CTail-75} & \cellcolor[rgb]{ .996,  .914,  .514}51\% & \cellcolor[rgb]{ .765,  .851,  .502}20\% & \cellcolor[rgb]{ .902,  .894,  .514}4\% & \cellcolor[rgb]{ .996,  .882,  .51}39\% & \cellcolor[rgb]{ .753,  .851,  .506}57\% & \cellcolor[rgb]{ .996,  .906,  .514}66\% & \cellcolor[rgb]{ .988,  .745,  .482}75\% \\
    \hline
    \textbf{k-OWA-CTail-100} & \cellcolor[rgb]{ .996,  .902,  .514}51\% & \cellcolor[rgb]{ .902,  .89,  .51}21\% & \cellcolor[rgb]{ .949,  .91,  .518}3\% & \cellcolor[rgb]{ .996,  .847,  .502}38\% & \cellcolor[rgb]{ .776,  .859,  .506}57\% & \cellcolor[rgb]{ .969,  .914,  .518}66\% & \cellcolor[rgb]{ .988,  .729,  .478}75\% \\
    \hline
    \textbf{MinV} & \cellcolor[rgb]{ .973,  .412,  .42}41\% & \cellcolor[rgb]{ .388,  .745,  .482}19\% & \cellcolor[rgb]{ .984,  .698,  .475}0\% & \cellcolor[rgb]{ .973,  .447,  .424}30\% & \cellcolor[rgb]{ .973,  .412,  .42}44\% & \cellcolor[rgb]{ .973,  .412,  .42}55\% & \cellcolor[rgb]{ .973,  .412,  .42}70\% \\
    \hline
    \textbf{KP2011Min} & \cellcolor[rgb]{ .976,  .486,  .431}43\% & \cellcolor[rgb]{ .996,  .831,  .502}21\% & \cellcolor[rgb]{ .973,  .412,  .42}-3\% & \cellcolor[rgb]{ .973,  .412,  .42}29\% & \cellcolor[rgb]{ .976,  .533,  .443}46\% & \cellcolor[rgb]{ .984,  .643,  .463}60\% & \cellcolor[rgb]{ .976,  .514,  .439}71\% \\
    \hline
    \end{tabular}%
    }
    }
  \label{tab:ROI750_out_NASDAQ100_DL125_DH20_new}%
\end{table}%
%
%
\begin{table}[htbp]
  \centering
  \caption{Summary statistics of ROI based on a 3-years time horizon for SP500}
  \scalebox{0.74}{
  {\renewcommand{\arraystretch}{1.1}
    \begin{tabular}{|l|c|c|c|c|c|c|c|}
    \hline
    \textbf{Approach} & \textbf{ExpRet} & \textbf{Vol} & \textbf{5\%-perc} & \textbf{25\%-perc} & \textbf{50\%-perc} & \textbf{75\%-perc} & \textbf{95\%-perc} \\
    \hline
    \textbf{Index} & \cellcolor[rgb]{ .761,  .855,  .506}30\% & \cellcolor[rgb]{ .996,  .827,  .502}22\% & \cellcolor[rgb]{ .992,  .804,  .494}-22\% & \cellcolor[rgb]{ .996,  .863,  .506}24\% & \cellcolor[rgb]{ .992,  .824,  .498}32\% & \cellcolor[rgb]{ .996,  .878,  .506}42\% & \cellcolor[rgb]{ .388,  .745,  .482}60\% \\
    \hline
    \textbf{Roman-Tail} & \cellcolor[rgb]{ .992,  .8,  .494}28\% & \cellcolor[rgb]{ .973,  .412,  .42}24\% & \cellcolor[rgb]{ .973,  .455,  .427}-26\% & \cellcolor[rgb]{ .988,  .741,  .482}22\% & \cellcolor[rgb]{ .894,  .894,  .514}34\% & \cellcolor[rgb]{ .388,  .745,  .482}44\% & \cellcolor[rgb]{ .706,  .839,  .502}56\% \\
    \hline
    \textbf{k-OWA-Tail-5} & \cellcolor[rgb]{ .996,  .922,  .518}29\% & \cellcolor[rgb]{ .929,  .898,  .51}21\% & \cellcolor[rgb]{ .831,  .875,  .51}-17\% & \cellcolor[rgb]{ .992,  .792,  .49}23\% & \cellcolor[rgb]{ 1,  .922,  .518}34\% & \cellcolor[rgb]{ .996,  .886,  .51}42\% & \cellcolor[rgb]{ .941,  .906,  .518}53\% \\
    \hline
    \textbf{k-OWA-Tail-10} & \cellcolor[rgb]{ .847,  .878,  .51}29\% & \cellcolor[rgb]{ .631,  .812,  .494}19\% & \cellcolor[rgb]{ .663,  .827,  .502}-13\% & \cellcolor[rgb]{ .859,  .882,  .51}25\% & \cellcolor[rgb]{ .996,  .851,  .502}33\% & \cellcolor[rgb]{ .984,  .639,  .463}40\% & \cellcolor[rgb]{ .761,  .855,  .506}55\% \\
    \hline
    \textbf{k-OWA-Tail-25} & \cellcolor[rgb]{ .388,  .745,  .482}32\% & \cellcolor[rgb]{ .416,  .753,  .482}18\% & \cellcolor[rgb]{ .388,  .745,  .482}-7\% & \cellcolor[rgb]{ .439,  .761,  .486}28\% & \cellcolor[rgb]{ .627,  .816,  .498}36\% & \cellcolor[rgb]{ .996,  .894,  .51}42\% & \cellcolor[rgb]{ .714,  .839,  .502}56\% \\
    \hline
    \textbf{k-OWA-Tail-50} & \cellcolor[rgb]{ .859,  .882,  .51}29\% & \cellcolor[rgb]{ 1,  .863,  .506}21\% & \cellcolor[rgb]{ .988,  .773,  .486}-22\% & \cellcolor[rgb]{ .675,  .827,  .502}27\% & \cellcolor[rgb]{ .863,  .882,  .51}34\% & \cellcolor[rgb]{ .502,  .78,  .49}44\% & \cellcolor[rgb]{ .988,  .773,  .486}51\% \\
    \hline
    \textbf{k-OWA-Tail-75} & \cellcolor[rgb]{ .984,  .655,  .467}27\% & \cellcolor[rgb]{ .996,  .824,  .502}22\% & \cellcolor[rgb]{ .98,  .6,  .455}-24\% & \cellcolor[rgb]{ .992,  .827,  .498}23\% & \cellcolor[rgb]{ .992,  .827,  .498}32\% & \cellcolor[rgb]{ .996,  .91,  .514}42\% & \cellcolor[rgb]{ .973,  .412,  .42}50\% \\
    \hline
    \textbf{k-OWA-Tail-100} & \cellcolor[rgb]{ .988,  .761,  .486}28\% & \cellcolor[rgb]{ .988,  .671,  .471}23\% & \cellcolor[rgb]{ .973,  .412,  .42}-26\% & \cellcolor[rgb]{ .996,  .918,  .514}25\% & \cellcolor[rgb]{ .996,  .914,  .514}33\% & \cellcolor[rgb]{ .635,  .82,  .498}43\% & \cellcolor[rgb]{ .984,  .69,  .471}51\% \\
    \hline
    \textbf{k-OWA-CTail-5} & \cellcolor[rgb]{ .98,  .612,  .455}27\% & \cellcolor[rgb]{ 1,  .914,  .518}21\% & \cellcolor[rgb]{ .957,  .91,  .518}-20\% & \cellcolor[rgb]{ .984,  .651,  .463}21\% & \cellcolor[rgb]{ .996,  .918,  .514}34\% & \cellcolor[rgb]{ .988,  .722,  .478}41\% & \cellcolor[rgb]{ .988,  .745,  .482}51\% \\
    \hline
    \textbf{k-OWA-CTail-10} & \cellcolor[rgb]{ .992,  .808,  .494}28\% & \cellcolor[rgb]{ .902,  .89,  .51}20\% & \cellcolor[rgb]{ .941,  .906,  .518}-19\% & \cellcolor[rgb]{ 1,  .922,  .518}25\% & \cellcolor[rgb]{ .996,  .882,  .51}33\% & \cellcolor[rgb]{ .988,  .702,  .475}41\% & \cellcolor[rgb]{ 1,  .922,  .518}52\% \\
    \hline
    \textbf{k-OWA-CTail-25} & \cellcolor[rgb]{ .996,  .898,  .51}28\% & \cellcolor[rgb]{ .388,  .745,  .482}17\% & \cellcolor[rgb]{ .518,  .784,  .49}-10\% & \cellcolor[rgb]{ .996,  .882,  .51}24\% & \cellcolor[rgb]{ .988,  .757,  .482}31\% & \cellcolor[rgb]{ .973,  .412,  .42}39\% & \cellcolor[rgb]{ .871,  .886,  .514}54\% \\
    \hline
    \textbf{k-OWA-CTail-50} & \cellcolor[rgb]{ .635,  .816,  .498}31\% & \cellcolor[rgb]{ .8,  .863,  .506}20\% & \cellcolor[rgb]{ .851,  .878,  .51}-17\% & \cellcolor[rgb]{ .486,  .773,  .49}28\% & \cellcolor[rgb]{ .545,  .792,  .494}37\% & \cellcolor[rgb]{ .976,  .918,  .518}42\% & \cellcolor[rgb]{ .992,  .839,  .502}52\% \\
    \hline
    \textbf{k-OWA-CTail-75} & \cellcolor[rgb]{ .929,  .902,  .514}29\% & \cellcolor[rgb]{ .988,  .918,  .514}21\% & \cellcolor[rgb]{ .992,  .835,  .498}-22\% & \cellcolor[rgb]{ .773,  .859,  .506}26\% & \cellcolor[rgb]{ .867,  .886,  .514}34\% & \cellcolor[rgb]{ .757,  .851,  .506}43\% & \cellcolor[rgb]{ .973,  .471,  .431}50\% \\
    \hline
    \textbf{k-OWA-CTail-100} & \cellcolor[rgb]{ .996,  .914,  .514}29\% & \cellcolor[rgb]{ .996,  .835,  .502}21\% & \cellcolor[rgb]{ .984,  .667,  .467}-24\% & \cellcolor[rgb]{ .918,  .898,  .514}25\% & \cellcolor[rgb]{ .925,  .902,  .514}34\% & \cellcolor[rgb]{ .655,  .824,  .498}43\% & \cellcolor[rgb]{ .973,  .451,  .427}50\% \\
    \hline
    \textbf{MinV} & \cellcolor[rgb]{ .973,  .412,  .42}26\% & \cellcolor[rgb]{ .898,  .89,  .51}20\% & \cellcolor[rgb]{ .808,  .867,  .51}-16\% & \cellcolor[rgb]{ .973,  .412,  .42}18\% & \cellcolor[rgb]{ .973,  .412,  .42}27\% & \cellcolor[rgb]{ .635,  .816,  .498}43\% & \cellcolor[rgb]{ .961,  .914,  .518}53\% \\
    \hline
    \textbf{KP2011Min} & \cellcolor[rgb]{ .631,  .816,  .498}31\% & \cellcolor[rgb]{ .992,  .776,  .49}22\% & \cellcolor[rgb]{ .988,  .769,  .486}-22\% & \cellcolor[rgb]{ .388,  .745,  .482}28\% & \cellcolor[rgb]{ .388,  .745,  .482}38\% & \cellcolor[rgb]{ .443,  .761,  .486}44\% & \cellcolor[rgb]{ .996,  .914,  .514}52\% \\
    \hline
    \end{tabular}%
    }
    }
  \label{tab:ROI750_out_SP500_DL125_DH20_new}%
\end{table}%

\end{document}